\newtheoremstyle{slplain}
  {.4\baselineskip\@plus.1\baselineskip\@minus.1\baselineskip}
  {.3\baselineskip\@plus.1\baselineskip\@minus.1\baselineskip}
  {\itshape}
  {}
  {\bfseries}
  {.}
  { }
  {}
\theoremstyle{slplain} 
\newtheorem*{definition*}{Definition}
\newtheorem*{theorem*}{Theorem}
\newtheorem{theorem}{Theorem}[section]
\newtheorem{lemma}[theorem]{Lemma}
\newtheorem{proposition}[theorem]{Proposition}
\newtheorem{corollary}[theorem]{Corollary}
\newtheorem{definition}[theorem]{Definition}
\newtheorem*{rep@theorem}{\rep@title}
\newcommand{\newreptheorem}[2]{%
\newenvironment{rep#1}[1]{%
 \def\rep@title{#2 \ref{##1}}%
 \begin{rep@theorem}}%
 {\end{rep@theorem}}}
\newenvironment{proofof}[1]{\par{\noindent \bf Proof of #1.}}{\qed\par}
\theoremstyle{definition}
\theoremstyle{remark}
\numberwithin{equation}{section}
\newtheoremstyle{etplain}
  {.0\baselineskip\@plus.1\baselineskip\@minus.1\baselineskip}
  {.0\baselineskip\@plus.1\baselineskip\@minus.1\baselineskip}
  {\itshape}
  {}
  {\bfseries}
  {.}
  { }
  {}
\newcommand{\namedref}[2]{\hyperref[#2]{#1~\ref*{#2}}}
\newcommand{\figurerefb}[2]{\hyperref[#1]{Figure~\ref*{#1}#2}}
\newcommand{\equationref}[1]{\hyperref[#1]{(\ref*{#1})}}
\renewcommand{\eqref}{\equationref}
\newcommand{\DEBUG}[1]{}
\newcommand{\argmax}{\operatornamewithlimits{arg\,max}}
\newcommand{\argmin}{\operatornamewithlimits{arg\,min}}
\newcommand{\B}{\mathcal{B}}
\renewcommand{\M}{\mathcal{M}}
\newcommand{\rank}{\operatornamewithlimits{rank}}
\newcommand{\eql}{\texttt{eq}}
\newcommand{\floor}[1]{\lfloor {#1} \rfloor}
\newcommand{\ceil}[1]{\lceil {#1} \rceil}
\renewcommand{\setminus}{-}
\renewcommand{\emptyset}{\varnothing}
\newcommand{\vol}{\mathrm{vol}}
\newcommand{\Z}{\ComplexityFont{Z}}
\renewcommand{\C}{\mathcal{C}}
\newcommand{\vc}[1]{\mathbf{#1}}
\newcommand{\dbracket}[1]{[\![{#1}]\!]}
\def\abs#1{\left|#1  \right|}
\def\norm#1{\left\| #1 \right\|}
\newcommand\sw{\textsc{Sw}}
\renewcommand\Z{\mathbb{Z}}
\renewcommand\R{\mathbb{R}}
\renewcommand\E{\mathbb{E}}
\renewcommand\comment[1]{}
\begin{document}
 

\title{Computing Walrasian Equilibria:\\ Fast Algorithms and Structural
Properties}
\author{Renato Paes Leme \\ Google Research NY \\ \texttt{renatoppl@google.com}
\and Sam Chiu-wai Wong \\ UC Berkeley \\ \texttt{samcwong@berkeley.edu}}
\date{\today}
\date{}
\maketitle
\begin{abstract}
We present the first polynomial time algorithm for computing Walrasian equilibrium in an economy with
indivisible goods and \emph{general} buyer valuations having only access to an
\emph{aggregate demand oracle}, i.e., an oracle that given prices on all goods,
returns the aggregated demand over the entire population of buyers. For the important special case of gross
substitute valuations, our algorithm queries the aggregate demand oracle
$\widetilde{O}(n)$ times and takes $\widetilde{O}(n^3)$ time, where $n$ is the
number of goods. At the heart of our solution is a method for exactly minimizing certain convex functions
which cannot be evaluated but for which the subgradients can be computed.

We also give the fastest known algorithm for computing Walrasian equilibrium for
gross substitute valuations in the \emph{value oracle model}. Our algorithm
has running time $\widetilde{O}((mn + n^3) T_V)$ where $T_V$ is the cost of
querying the value oracle. A key technical ingredient is to regularize a convex
programming formulation of the problem in a way that subgradients are cheap to
compute. En route, we give necessary and sufficient
conditions for the existence of \emph{robust Walrasian prices}, i.e., prices for which each agent has a unique demanded bundle and the demanded
bundles clear the market. When such prices exist, the market can be perfectly
coordinated by solely using prices.
\end{abstract}

\thispagestyle{empty}

\setcounter{page}{1}
\section{Introduction}
 
\subsection{A macroscopic view of the market}

As part of our everyday experience, prices reach equilibria in a wide range of
economics settings. Yet, markets are complicated and consist of heterogeneous
goods and a huge population of buyers can have very diverse preferences
that are hard to model analytically. With the sheer amount of information needed
to describe the economy, how can the market possibly reach an equilibrium? 
\emph{Well, perhaps not all this information is needed.}

In this paper, we provide evidence supporting this belief through the lens of
algorithms. Specifically, we propose algorithms for computing market equilibrium
using very limited amount of information. Our result suggests that information
theoretically, it is not necessary to make too many measurements or observations
of the market to compute an equilibrium. This may also shed light into how
markets operate.

As the first step, we must design a realistic model to represent the economy.
The standard TCS approach would require the entire input be specified but for a
market, it is simply too computationally expensive to model its individual
agents in full details. So what should we turn to? If equilibrium represents the
collective behavior of the agents, perhaps some kind of aggregate information
would be enough. Such information can be average salaries, interest rate,
population, fashion trend and so on. An algorithm would ideally process these
\textit{macroscopic-scale} information in an efficient manner to compute
equilibrium price that allows the market to clear.

We show that it is possible to compute market equilibrium by exploiting the very rudimentary information of \emph{aggregate demand},
i.e. the quantity demanded for each item at a given price aggregated over the entire
population of buyers. This result
implies, among other things, that a market can be viewed as an aggregate
entity. For the sake of reaching equilibrium, detailed knowledge about its
individual buyers at the microscopic level may not really be needed.
Rather, it should be their collective behavior that dictates the outcome
of the market.

The use of aggregate demand by our algorithm also resonates with a common perception of
the role played by excess demand/supply. A highly sought-after good
would usually see its price soar whereas an unpopular good would be
inexpensive. This is similar to our algorithms which, in some sense, operate by increasing
the price of overdemanded good and vice versa in an iterative fashion. We note however that by no means are we suggesting that our algorithms closely
mirror how a market actually works. While the holy grail of this
research direction is to understand how a market reaches equilibrium
in practice, perhaps a humble first step is to show that this can
be done algorithmically with as little information and assumption
as possible.

Our starting point is the Gul and Stachetti's model \cite{GulStachetti} of an economy of
indivisible goods, but we make no further assumptions on the structure of the
valuation functions. The goal is to compute market equilibrium: a set of
item prices and allocations of items to buyers such that the market clears and
each buyer gets his or her favorite bundle of goods under the current prices.

The market can only be accessed via an \emph{aggregate demand oracle}: given
prices for each item, what is the demand for each item aggregated over the
entire population. Clearly in this model, it is not possible to compute an
allocation of items to buyers, since the oracle access model doesn't allow any
sort of buyer-specific information. Curiously, equilibrium prices are still
computable in a very efficient manner:

\begin{theorem*}[informal]
In a consumer market with $n$ goods and $m$ buyers, we can find
(a vector of) equilibrium prices, whenever it exists, using $\widetilde{O}(n^2)$
calls to the aggregate demand oracle and $\widetilde{O}(n^{5})$ time. If
valuations are gross substitutes,  $\widetilde{O}(n)$ calls to the aggregate
demand oracle and $\widetilde{O}(n^{3})$ time suffice.
\end{theorem*}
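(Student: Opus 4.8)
The plan is to recast the computation of equilibrium prices as the minimization of a convex function that we can only probe through its subgradients, and then to build an \emph{exact} convex minimizer suited to this restricted oracle. For a price vector $p \in \R^n_{\geq 0}$, let $V_i(p) = \max_{S \subseteq [n]} \bigl(v_i(S) - \sum_{j \in S} p_j\bigr)$ denote buyer $i$'s indirect utility, and set $f(p) = \sum_{i=1}^{m} V_i(p) + \sum_{j=1}^{n} p_j$. Each $V_i$ is a maximum of affine functions, hence convex and piecewise linear, so $f$ is convex and piecewise linear. A standard LP-duality argument (the relaxation of the assignment problem, as in Bikhchandani--Mamer and in the model of Gul and Stachetti) identifies $\min_p f(p)$ with the optimal dual value, and shows that when a Walrasian equilibrium exists every minimizer of $f$ is a Walrasian price vector. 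The decisive observation is that a subgradient of $f$ at $p$ equals $\mathbf 1 - D(p)$, where $D(p) \in \Z^n_{\geq 0}$ is \emph{any} aggregate demand returned by the oracle: if buyer $i$ is assigned a demanded bundle $S_i$ at prices $p$, then $-\mathbf 1_{S_i} \in \partial V_i(p)$, so $\mathbf 1 - \sum_i \mathbf 1_{S_i} \in \partial f(p)$. Thus a single query to the aggregate demand oracle yields a subgradient of $f$, even though $f$ itself can never be evaluated without knowing the valuations. After normalizing bids to integers bounded by $M$, the minimizers of $f$ lie in the box $[0, nM]^n$, and for gross substitutes valuations one may take a minimizer with only $O(\log(nM))$ bits of precision; so the task becomes: minimize a convex function over a box, given only a subgradient oracle.

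Since $f$ cannot be evaluated, line searches and descent methods are unavailable, and the natural tool is a localization / cutting-plane method, which only ever consults a separation oracle: maintain a convex body known to contain a minimizer, query a subgradient at (an approximation of) its center, and cut. For general valuations the ellipsoid method suffices, giving $\widetilde{O}(n^2)$ oracle calls; each step is an $O(n^2)$ rank-one update, and together with the polynomial bit-complexity overhead and the possible need to re-run the localization on lower-dimensional faces during exact recovery, the total running time is $\widetilde{O}(n^5)$. For gross substitutes valuations, where only logarithmically many bits of precision are required, we instead drive a modern center-of-mass cutting-plane method (in the spirit of Vaidya's method and of Lee--Sidford--Wong), which attains the target accuracy in $\widetilde{O}(n)$ subgradient queries; with $\widetilde{O}(n^2)$ arithmetic per query this is $\widetilde{O}(n^3)$ overall.

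The crux is converting the near-minimizer $\widehat p$ returned by the cutting-plane method into a genuine equilibrium price. The remedy is a separation bound: the minimizer set of $f$ is a face of its normal fan, cut out by a subset of the tight hyperplanes $\bigl\{\, v_i(S) - v_i(T) = \sum_{j \in S \setminus T} p_j - \sum_{j \in T \setminus S} p_j \,\bigr\}$, all of which have small integer coefficients; hence there is a gap of $2^{-\mathrm{poly}(n,\log M)}$ between the minimizing cell and every non-minimizing one. Once $\widehat p$ falls inside this gap, the active tight sets it exposes determine the correct face, and we finish by solving a small linear program --- in the GS case the minimizer polytope is a lattice with a least element, so this step reduces to a polymatroidal computation and is faster still. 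Non-existence of an equilibrium is more delicate, since the oracle model never reveals an allocation; it is flagged by checking whether the aggregate demand at the recovered minimizer exactly clears the market.

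The step I expect to be the main obstacle is this exact recovery. The standard way to certify that a candidate minimizes a convex function is to evaluate the function (or a merit function) and compare, and here we are denied precisely that: optimality can be witnessed only through the aggregate demand, i.e.\ through the subgradient being the zero vector. The proof must therefore establish a quantitative separation guaranteeing that a sufficiently good approximate solution lies in the basin of an exact equilibrium, and show that an exact equilibrium can then be extracted using only $\mathrm{poly}(n)$ further oracle calls and \emph{no} function evaluations --- this ``exact minimization without function values'' primitive is the technical heart flagged in the abstract.
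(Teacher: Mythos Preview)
Your high-level framework matches the paper closely: the convex potential $f(p)=\sum_i \max_S\bigl(v_i(S)-p(S)\bigr)+p\cdot\mathbf 1$, subgradients from the aggregate demand oracle, and a cutting-plane localization. You also correctly flag exact recovery without function values as the crux. However, your proposed recovery mechanism diverges from the paper's and has a real gap.

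You propose to use a separation bound to land $\widehat p$ in the correct cell, read off the ``active tight sets it exposes,'' and solve a small LP on that face. The difficulty is that with only an aggregate demand oracle you cannot identify tight constraints: the constraints $v_i(S)-p(S)=v_i(T)-p(T)$ involve the valuations, which you never see, and the subgradient you do see need not be zero even at an exact minimizer (the paper gives an explicit three-buyer example where the minimizer set is the single point $(1,1)$ and the oracle can legally return many different demand vectors there). So ``the subgradient certifies optimality'' fails precisely when the minimizer set has empty interior, and your face-identification step has no oracle to stand on.

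The paper's route is different: it \emph{perturbs} the objective to force a unique minimizer. For general valuations it adds a random linear term $\vc r\cdot p$ with coefficients chosen via a Klivans--Spielman isolation argument, so that with high probability the perturbed LP has a unique optimal vertex which is also optimal for the original; then an $\epsilon$-minimizer with $\epsilon=(nMS)^{-O(n)}$ is provably within $(nMS)^{-O(n)}$ of that vertex, and continued-fraction rounding recovers it exactly. For gross substitutes, the paper first proves the Walrasian price polytope is \emph{integral} and a \emph{lattice}, so a deterministic perturbation $r_j=1/(2Sn)$ isolates the maximal lattice point, $\epsilon=\Theta(1/(nMS))$ suffices, and rounding is just nearest-integer. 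This is what buys the drop from $\widetilde O(n^2)$ to $\widetilde O(n)$ oracle calls.

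Two smaller points. First, the paper uses the Lee--Sidford--Wong cutting plane in \emph{both} regimes, not the ellipsoid for the general case; the ellipsoid would need $\widetilde O(n^2\log(1/\epsilon))$ iterations, and with $\epsilon=(nMS)^{-O(n)}$ that is $\widetilde O(n^3)$ oracle calls, not $\widetilde O(n^2)$. Second, even granting a fast cutting plane, you still owe an argument for why it returns a near-minimizer without ever comparing function values across iterates. The paper handles this by proving (their Theorem~\ref{thm:convex_opt}) that the LSW dual certificate---a short convex combination of the queried subgradient hyperplanes---can itself be turned into an $\epsilon$-minimizer $\bar p=\sum_i t_i p_i$, with a careful argument to discard any box constraints that sneak into the certificate. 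That step, not the localization itself, is the ``minimize without evaluating'' primitive.
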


Notably, the number of buyers plays no role. Our algorithm has query and
time complexity essentially independent of the number of players.
This feature is especially relevant in practice
as markets are usually characterized by a large population and relatively
few number of goods. The city of Berkeley, for example, has about
350 restaurants but 120,000+ people!

\subsection{From telescopes to augmenting lenses}

Aggregate demand oracles are like looking at the economy from a telescope.
Having a telescope has its advantages: it is possible to get a very
global view of the economy with a few queries. On the other hand, extracting
details is hard.

Our second question is how fast equilibrium can be computed with only a
\emph{local view of the economy}? Our analogue for augmenting lenses will be
the \emph{value oracle model}, in which one can query the value of each buyer for each
bundle of items. This again has its advantages: it provides very fine-grained
information about the market, but has the shortcoming that many
queries are needed to extract any sort of global information.

Can equilibrium prices be computed with small amount of information even at the
microscopic level? This quest is clearly hopeless for general valuation functions.
But for one of the most important classes of valuation functions in economics,
\emph{gross substitute valuations}, there are enough structures to allow us to
construct equilibrium prices using microscopic information.

The history of \emph{gross substitutes} is intertwined with the development of
theory of Walrasian equilibrium (also called market equilibrium in this paper). Indeed, Kelso and Crawford \cite{KelsoCrawford}
show that Walrasian equilibrium always exists for gross substitute valuations.
Hatfield and Milgrom \cite{HatfieldMilgrom} argue that most important examples of valuation
functions arising in matching markets belong to the class of gross substitutes.
Gross substitutes have been used to guarantee the convergence of salary
adjustment processes in the job market \cite{KelsoCrawford},
to guarantee the existence of stable matchings
in a variety of settings \cite{Roth84, Kojima}, to show the stability of trading
networks \cite{HatfieldKNOW15}, to design combinatorial auctions
\cite{AusubelMilgrom,Shioura13} and even to analyze settings with
complementarities \cite{SunYang2006,HatfieldK15}.

Since the oracle access is very local, we clearly need to query each agent
at least once, so the dependence on the number of buyers needs to be at least
linear. We show that indeed it is possible to solve this problem with a linear
dependence on the number of buyers and cubic dependence in the number of items:

\begin{theorem*}[informal]
In a consumer market with $n$ goods and $m$ buyers whose valuation
functions satisfy the gross substitute condition, we can find an equilibrium (or Walrasian)
price and allocation using $mn+\widetilde{O}(n^{3})$ calls to the value oracle and
$\widetilde{O}(n^{3})$ time.
\end{theorem*}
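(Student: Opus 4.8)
The plan is to reduce computing a Walrasian equilibrium to minimizing the Lyapunov (potential) function
\[
  f(p)\;=\;\sum_{i=1}^{m} u_i(p)\;+\;\sum_{j=1}^{n} p_j,
  \qquad
  u_i(p)\;=\;\max_{S\subseteq[n]}\bigl(v_i(S)-p(S)\bigr),
\]
and to feed it to the convex-minimization engine from the first part of the paper. First I would record the classical facts that, for gross substitute valuations, (i) the minimizers of $f$ are exactly the Walrasian price vectors and, at such a price, the demanded bundles $S_i\in\argmax_{S}\bigl(v_i(S)-p(S)\bigr)$ can be selected so as to partition $[n]$, which yields the allocation; and (ii) a subgradient of $f$ at $p$ is $\mathbf{1}-\sum_{i}\mathbf{1}_{S_i}$, i.e.\ supply minus aggregate demand. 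Hence $f$ is precisely a convex function that cannot be evaluated cheaply but whose subgradients are computable, and the only missing ingredient is to produce those subgradients from a value oracle.

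The obstacle is cost: reading one demanded bundle of a gross substitutes valuation off a value oracle via the natural greedy routine costs on the order of $n$ queries per placed item, so a single subgradient of $f$ can cost $\widetilde{O}(mn^{2})$ queries --- far too much against $\widetilde{O}(n)$ iterations. The remedy is to \emph{regularize}: replace each $u_i$ by a strictly convex smoothed surrogate $\widetilde{u}_i$, coming from a perturbation of $v_i$ small enough that the unique bundle attaining $\widetilde{u}_i(p)$ still rounds to a genuine demanded bundle of $v_i$, and chosen so that (a) the regularized potential $\widetilde{f}$ is conditioned well enough that the cutting-plane method over the box $[0,\max_i v_i([n])]^{\,n}$ still converges in $\widetilde{O}(n)$ iterations, and (b) the regularized demand of a buyer is cheap to extract --- ideally a fractional ``water-filling'' bundle determined by marginal values --- and, crucially, can be maintained \emph{incrementally} as $p$ moves along the iterate trajectory.

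With the regularization in place I would run the cutting-plane method on $\widetilde{f}$, generating a sequence of $\widetilde{O}(n)$ price queries. The value-query budget then splits as: an $O(mn)$ preprocessing sweep recording each buyer's marginal values and an initial demanded bundle (e.g.\ starting from $p=\mathbf{0}$, where the demand is the grand bundle), plus the cost of refreshing the aggregate regularized demand along the trajectory, which a warm-start argument --- each buyer's demand changing in few coordinates between consecutive iterates, with only a bounded number of buyers ``active'' at once --- amortizes to $\widetilde{O}(n^{3})$ further queries. The rest is price-space work: solving the $\widetilde{O}(n)$ regularized demand subproblems and running the linear algebra of the cutting-plane centres costs $\widetilde{O}(n^{2})$ per step, hence $\widetilde{O}(n^{3})$ overall. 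Finally I would round the approximate minimizer to an exact Walrasian price using the integrality of the optimal face of $f$ for integer-valued $v_i$ (it is minimized at a lattice point and its optimal face is a lattice polytope), and select a market-clearing allocation from the buyers' demanded bundles at that price, warm-starting from the data maintained along the trajectory.

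The main obstacle is designing the single perturbation that threads three needles simultaneously: it must not displace the minimizer so far that the rounding step misses a true Walrasian price; it must keep the per-buyer regularized demand computable with $O(n)$-scale value queries and, above all, cheaply \emph{updatable} under small changes of $p$, so the $O(mn)$ term is not blown up; and it must inject enough curvature for the cutting-plane step to converge in $\widetilde{O}(n)$ iterations. Establishing the amortized/telescoping bound on the incremental demand updates --- that the cutting-plane iterates move ``slowly'' in the sense the demand structure cares about --- is the other delicate point; the remaining components (the cutting-plane engine, the LP-duality characterization of Walrasian prices, and the greedy demand oracle for gross substitutes) are standard or already in hand.
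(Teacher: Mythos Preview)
Your high-level architecture matches the paper's --- regularize the market potential, minimize with a cutting-plane method in $\widetilde{O}(n)$ iterations, and pay an $O(mn)$ preprocessing sweep for the singleton values $v_i(\{j\})$ --- but the regularization mechanism you propose is different from the paper's and, as stated, does not obviously work.

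The paper's regularizer is not a per-buyer strictly-convex smoothing. It is the global constraint $\sum_i |S_i| = n$ in the inner maximization:
\[
\tilde f(p)\;=\;\max_{\sum_i |S_i|=n}\Bigl[\sum_{i} v_i(S_i)-p(S_i)\Bigr]+p([n]),
\]
which is equivalent to evaluating the original potential at a shifted price $p+\gamma\mathbf{1}$ for a data-dependent $\gamma$. The point of this constraint is that the greedy demand oracle for buyer $i$ costs $O(n\cdot|S_i^*|)$ value queries, so the total cost of one subgradient is $O\bigl(n\sum_i |S_i^*|\bigr)=O(n^2)$ \emph{per iteration, independently of $m$ and independently of the previous iterate}. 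No amortization across iterations and no warm-starting is used or needed; the $mn$ term pays only for building and maintaining the sorted lists of $v_i(\{j\})$ that let the parallel greedy pick the next buyer--item pair in $\tilde O(n)$ time.

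Your alternative --- per-buyer smoothing plus incremental updates that rely on ``cutting-plane iterates moving slowly in the sense the demand structure cares about'' --- is exactly the place where the plan is fragile, and you flag it yourself. Cutting-plane centers can and do jump by $\Omega(M)$ in $\ell_\infty$ between consecutive iterations, so there is no a priori reason a buyer's demanded bundle changes in few coordinates, nor that only a bounded number of buyers are ``active.'' Without that, your amortization collapses and you are back to $\widetilde O(mn^2)$ per iteration. Separately, it is unclear what per-buyer strictly-convex surrogate $\widetilde u_i$ both (a) has a subgradient computable in $O(n)$ value queries and (b) still lets you round to a true demanded bundle; the ``fractional water-filling'' picture is not something gross substitutes supply in general.

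For the allocation, the paper does not warm-start from trajectory data either: it perturbs the \emph{valuations} (Isolation Lemma) so the optimal partition is unique, which makes the Walrasian price polytope full-dimensional; then any interior minimizer has a unique zero subgradient, and the bundles produced by the regularized greedy at that point are the equilibrium allocation. Your sketch of ``select a market-clearing allocation from the demanded bundles'' hides a nontrivial matroid-union-style step that the paper avoids via this perturbation.
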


Now that we have buyer-specific information, we can also compute the optimal
allocation at no additional time.

Proving this result requires novel insights into the structure of gross
substitute valuations. In particular, one of our main structural lemmas answers
a question posed by Hsu et al. \cite{hsu_prices}: \emph{when do prices coordinate
markets?}  In general, Walrasian prices mean that there is a choice of 
favorite bundles for each buyer that clears the market. It is far from trivial how to choose
those bundles, since each agent can have multiple favorite bundles (for
example, consider the case where all items are identical and all agents have the
same valuation for the items).
We say that a price vector form \emph{robust Walrasian prices} if
each agent demands a unique bundle under those prices and the bundles clear the
market. Such vectors would allow prices alone to clear the market without any
external coordination. We show that:

\begin{theorem*}[informal]
In a consumer market with $n$ goods and $m$ buyers whose valuation
functions satisfy the gross substitute condition, robust Walrasian
prices exist if and only if there is a unique Walrasian allocation. Whenever they
exist, they can be computed in $\widetilde{O}(mn + n^3)$ time.
\end{theorem*}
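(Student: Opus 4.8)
The plan is to phrase everything through the Lyapunov (potential) function $L(p)=\sum_i V_i(p)+\langle M,p\rangle$, where $V_i(p)=\max_S\bigl(v_i(S)-\langle p,S\rangle\bigr)$ is buyer $i$'s indirect utility, $D_i(p)=\argmax_S\bigl(v_i(S)-\langle p,S\rangle\bigr)$ the demand set, and $M$ the supply (WLOG the market must clear exactly: pad with a dummy buyer valuing everything at $0$). Each $V_i$ is convex polyhedral, $L$ is convex polyhedral, and the set $W$ of Walrasian prices equals $\argmin_{p\ge 0}L(p)$; writing $D(p)=D_1(p)\oplus\cdots\oplus D_m(p)$ for the aggregate demand (Minkowski sum of the $D_i(p)$, viewed as multiplicity vectors), one has $\partial L(p)=M-\operatorname{conv}(D(p))$, so $p\in W$ iff $M\in\operatorname{conv}(D(p))$. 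The key trivial observation is that a Minkowski sum of sets is a single point iff each summand is a single point; hence \emph{$p$ is robust iff $D(p)=\{M\}$}. So the theorem reduces to: (i) there is a price with $D(p)=\{M\}$ iff the Walrasian (equivalently, by gross-substitutability and integrality of the configuration LP, the welfare-optimal) allocation is unique; and (ii) such a $p$ can be produced fast.

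The forward direction of (i) is short. If $p$ is robust, $D_i(p)=\{X_i\}$ and $\sum_iX_i=M$; for any feasible $(Y_i)$ with $\sum_iY_i=M$, $\sum_iv_i(Y_i)=\sum_i\bigl(v_i(Y_i)-\langle p,Y_i\rangle\bigr)+\langle p,M\rangle\le\sum_iV_i(p)+\langle p,M\rangle=\sum_iv_i(X_i)$, so $X$ is welfare-optimal; and if $(Y_i)$ is itself Walrasian — hence welfare-optimal by the same computation — equality is forced throughout, so $v_i(Y_i)-\langle p,Y_i\rangle=V_i(p)$ and $Y_i\in D_i(p)=\{X_i\}$ for every $i$, i.e.\ $Y=X$. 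Thus robustness implies a unique Walrasian allocation.

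The converse is the heart of the matter and the step I expect to be the main obstacle. Assume $X$ is the unique Walrasian (= welfare-optimal) allocation; then $W=\bigcap_iP_i$ with $P_i=\{p\ge 0:X_i\in D_i(p)\}$, a polyhedron cut out by halfspaces $H_{i,S}^{\le}=\{p:\langle p,X_i\rangle-\langle p,S\rangle\le v_i(X_i)-v_i(S)\}$. Suppose no price is robust; then the relatively open convex set $\operatorname{relint}(W)$ is covered by the finitely many hyperplanes $H_{i,S}$ with $S\ne X_i$, hence lies in one of them, so $W\subseteq H_{i,S}$ for some buyer $i$ and some $S\ne X_i$ with $S\in D_i(p)$ for \emph{every} $p\in W$. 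Using $\mathrm{M}^\natural$-convexity of $D_i(p^\ast)$ at a point $p^\ast\in\operatorname{relint}(W)$ (the exchange property of gross substitutes), I can walk from $X_i$ to $S$ inside $D_i(p^\ast)$ in single-element steps, each endpoint of which again lies on a hyperplane containing $W$; so I may assume $S$ differs from $X_i$ by a single add, drop, or swap, and that the corresponding tie persists over all of $W$. I then turn this persistent tie into a second welfare-optimal allocation $X'\ne X$, a contradiction. When the tie is an add or drop with zero marginal value, monotonicity of valuations plus optimality of $X$ forces the complementary marginal to vanish too, and moving the item produces $X'$. The remaining cases — swaps, and adds/drops whose (pinned, nonzero) price is $c>0$ — need an augmenting-path argument in the ``exchange graph'' of $(X,p^\ast)$ whose arcs $a\to b$ record single-element trades keeping each buyer on its demand face: a trade that is free for one buyer must be completable around a directed cycle (or alternating path) while preserving welfare, and the existence of such a closed reroute is exactly where exact market clearing and the $\mathrm{M}^\natural$-convexity of the \emph{aggregate} demand $D(p^\ast)$ (a Minkowski sum of $\mathrm{M}^\natural$-convex sets) have to be invoked. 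Ruling out that a persistent single-buyer indifference fails to propagate into a genuine alternative allocation is the real work.

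For the algorithm: first compute $X$ with the value-oracle procedure of the preceding theorem, in $mn+\widetilde O(n^3)$ queries and $\widetilde O(n^3)$ time. By $\mathrm{M}^\natural$-convexity, $D_i(p)=\{X_i\}$ iff every single-element move from $X_i$ strictly decreases $v_i(\cdot)-\langle p,\cdot\rangle$, so both $W$ and the robust region are described by $O(n^2)$ \emph{effective} constraints with $\{0,\pm1\}$ coefficients — one ``drop'' and one ``add'' per item, one ``swap'' per ordered pair of items — whose right-hand sides (buyers' marginal and exchange values, minimized over the relevant buyers) are obtained with $O(mn+n^2)$ additional value queries. Finally solve the LP $\max\{t:\langle\alpha_c,p\rangle+t\le\beta_c\ \text{for each nontrivial constraint }c,\ p\ge 0\}$; this LP has $O(n)$ variables and $O(n^2)$ constraints, so it is solved in $\widetilde O(n^3)$ time, and its optimum $t^\ast$ is positive exactly when some price lies strictly inside all nontrivial constraints, i.e.\ exactly when a robust price exists (equivalently, by part (i), exactly when $X$ is unique), in which case the optimizer $p^\ast$ is such a price. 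Altogether this is $mn+\widetilde O(n^3)$ value queries and $\widetilde O(mn+n^3)$ time.
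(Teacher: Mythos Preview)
Your forward direction (robust $\Rightarrow$ unique) is correct and matches the paper's use of the Second Welfare Theorem.

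The converse has a genuine gap. You correctly reduce to the situation where $W$ lies in a single facet hyperplane $H_{i,S}$, and (using that constraints tight at a relative-interior point are tight throughout $W$) you may take $S$ to differ from $X_i$ by one add, drop, or swap. But the passage from ``buyer $i$ has a persistent single-step tie'' to ``there is an alternative welfare-optimal allocation'' is not established. For the swap case you explicitly defer to an unspecified augmenting-cycle argument. Even the add/drop case is not right as stated: if $W\subseteq\{p_k=0\}$ because $v_i(X_i\cup k)=v_i(X_i)$, monotonicity and optimality of $X$ only give $v_{i'}(X_{i'})\ge v_{i'}(X_{i'}\setminus k)$ for the buyer $i'$ currently holding $k$, not equality, so moving $k$ need not preserve welfare. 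The real obstruction is that one tight constraint is one slack-zero \emph{edge}, whereas a welfare-preserving reallocation requires a whole zero-weight \emph{cycle}; invoking $M^\natural$-convexity of the aggregate set $D(p^\ast)$ says nothing about the fiber of the Minkowski-sum map over $M$, which is what you actually need.

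The paper's route is constructive and bypasses this difficulty. It builds the exchange graph on the items (plus one dummy per buyer): for $j\in X_i$, $k\notin X_i$ there is an edge $(j,k)$ of weight $w_{j,k}=v_i(X_i)-v_i(X_i\cup k\setminus j)$, and $W$ is precisely the polytope of node potentials. The key lemma is that if $X$ is the unique optimum then this graph has no zero-weight cycle: take a minimum-length zero cycle, restrict to each buyer's outgoing edges, and the Unique-Minimum-Matching property of gross substitutes guarantees that performing those swaps simultaneously changes $v_i$ by exactly the sum of the single-swap changes, so the total welfare change equals the cycle weight, namely $0$, producing a distinct optimum. With integer weights and no zero cycles, every cycle has weight $\ge 1$; splitting each node with an internal edge of weight $-1/n$ keeps all cycles nonnegative, so a potential for the split graph exists (one shortest-path computation from a dummy source), and it gives Walrasian prices with slack $1/n$ on every constraint. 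Hence $W$ contains a cube of side $1/n$, and every interior point is robust.

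This also delivers the algorithm directly: compute $X$ in $\tilde O(mn+n^3)$, build the exchange graph with $O(mn+n^2)$ value queries, and run one shortest-path computation on the node-split graph. Your Chebyshev-center LP with $n{+}1$ variables and $O(n^2)$ constraints would also work in principle, but the shortest-path construction is simpler and its $\tilde O(n^3)$ cost requires no further justification.
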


\subsection{Our algorithms and techniques}

We study the Walrasian equilibrium problem in three different settings: (i)
general valuations in the aggregate demand oracle model; (ii) gross substitute
valuations in the aggregate demand oracle model and (iii) gross substitutes in
the value oracle model. In all three settings, the starting point is the linear
programming formulation of Bikhchandani and Mamer \cite{bikchandani_mamer}.

\paragraph{General valuations in the aggregate demand oracle model (Section 3).} The main
difficulty of working with the LP in Bikhchandani and Mamer
\cite{bikchandani_mamer} is that the constraints depend on the value of buyers
for each bundle, which we have no access to. In particular, we are not able to
test for any given setting of variables of the LP if it is feasible or not. We
are in a strange situation where the LP to be solved is not known in full. Our solution is to move the
constraints to the objective function and turn the problem into an unconstrained
convex minimization problem. The problem in hand has the feature that we cannot
evaluate the function but we can compute its subgradients.

Traditional cutting plane algorithms such as the Ellipsoid Method need access to
both a separation oracle and functional values. To overcome this issue we use
the fact that the cutting plane of Lee, Sidford and Wong \cite{LeeSW15} provides
strong dual guarantees and that our separation oracle is given by subgradients.
Originally, in Theorem 42 of their paper \cite{LeeSW15} show how to adapt their
cutting plane method to convex optimization;
however, their algorithm and proof still rely on being able to evaluate the
function. We show in Theorem \ref{thm:convex_opt} that their algorithm can
be slightly modified to achieve the same guarantee using only subgradients
(i.e., without using functional values).

A second obstacle we face is that algorithms to minimize convex functions only
provide approximate guarantees and to find a Walrasian equilibrium we need the
exact minimum. In general minimizing a convex function exactly is impossible,
but in our case this can be done by exploiting the connection to the LP.
Note that given the very restricted way that we can access the problem (only via
the aggregate demand oracle), we cannot apply the Khachiyan's perturbation and rounding techniques for linear programming \cite{khachiyan1980polynomial} in a black-box fashion. Nevertheless his approach
 can be nontrivially adapted to our setting. We show how to perturb and
round the objective function to achieve the desired running time.

\paragraph{Gross substitutes in the aggregate demand model (Section 4).} If valuation
functions satisfy gross substitutes, then we can exploit the fact that the set
of Walrasian prices form an integral polytope with a lattice structure to
simplify the algorithm and obtain an improved running time and oracle call
complexity. The improvement comes from using structural properties of gross
substitutes to show that a simpler and milder perturbation to the objective is
enough and that rounding can be done in a simpler manner. This highlights the
important of looking at perturbation and rounding in a non-black-box manner.

\paragraph{Gross substitutes in the value oracle model (Section 5).} An aggregate demand
oracle call can be simulated from $O(mn^2)$ value oracles calls. This can be
plugged into the previous algorithm to obtain a running time of $\tilde{O}(mn^3
T_V)$ where $T_V$ is the time required by the value oracle. We use two ideas to 
improve the running time to $\tilde{O}((mn + n^3) T_{V})$. The first one is
to regularize the objective function. As with the use of regularizers in other context
in optimization, this is to penalize the algorithm for being too aggressive.
The bound of $O(mn^{2})$ value oracle calls per iteration of the cutting plane
algorithm is so costly precisely because we are trying to take
an aggressively large step. A second idea is
to re-use one of the stages of the subgradient computation
in multiple iterations, amortizing its cost per iteration.

\paragraph{Robust Walrasian Prices and Market Coordination (Section 6).}
Still in the value oracle model, we show how to obtain the efficient allocation from
the subgradients observed in the optimization procedure. An important by-product
of our analysis is that we give necessary and sufficient conditions for the
existence of robust Walrasian prices, i.e., Walrasian prices under which each
buyer has an unique bundle in their demand set. Whenever such prices exist, we
give an $\tilde{O}(mn + n^3)$ algorithm to compute them.
This answers an open question in Hsu et al \cite{hsu_prices}, who ask when it
is possible to completely coordinate markets by solely using prices.

\paragraph{Combinatorial Algorithms for Walrasian equilibrium (Section 7).}
Murota and Tamura \cite{murota_tamura} give combinatorial algorithms for the
problem of computing Walrasian equilibria via a reduction to the $M$-convex
submodular flow problem. It is also possible to obtain combinatorial algorithms
for the welfare problem by reducing it to the  valuated matroid intersection
problem and applying the algorithms in Murota \cite{Murota96a,Murota96b}.
The running time is not explicitly analyzed in
\cite{murota_tamura,Murota96a,Murota96b}. Here we describe those algorithms for
the reader unfamiliar with M-convex submodular flows in
terms of a sequence of elementary shortest path computations and analyze its
running time. We show that they have running time of $\tilde{O}(mn^3)$ in the
value oracle model. We use the same ideas used to speed up the computation of
Walrasian prices by regularizing the market potential to speed up those
algorithms and improve its running time to $\tilde{O}(mn + n^3)$.

\subsection{Comparison to related work}

\paragraph{Iterative auctions and subgradient algorithms.}
The first algorithm for computing Walrasian equilibria in an economy of
indivisible goods is due to Kelso and Crawford \cite{KelsoCrawford} and it is
inspired by Walras' t\^{a}tonnement procedure \cite{walras1874}, which means
``trial-and-error''. Despite the name, it constitutes a very ingenious greedy
algorithm: goods start with any price, then we compute the aggregate demand of
the agents, increase the price by one for all goods that were over-demanded and
decrease by one the price of all goods that are under-demanded. This gives a
very natural and simple algorithm in the aggregate demand oracle model. This
algorithm, however, is not polynomial time since it runs in time proportional to
the magnitude of the valuations.

The seminal work of \cite{KelsoCrawford} originated two lines of attack of the
problem of computing Walrasian equilibria: the first line is by applying
subgradient descent methods 
\cite{parkes1999bundle,parkes2002ascending,AusubelMilgrom}.
Such methods typically either only guarantee convergence to an approximate
solution or converge in pseudo-polynomial time to an exact solution.
This is unavoidable if subgradient descent methods are used, since their
convergence guarantee is polynomial in $M/\epsilon$ where $M$ is the maximum
valuation of a buyer for a bundle and $\epsilon$ is the accuracy of the
desired solution. To be able to round to an \emph{exact} optimal solution
the running time must depend on the magnitude of the valuations.
Another family of methods is based on primal-dual
algorithms. We refer to de Vries, Schummer and Vohra \cite{de2007ascending} for
a systematic treatment of the topic. For primal-dual methods to converge exactly, they
need to update the prices slowly -- in fact, in \cite{de2007ascending} prices
change by one unit in each iteration -- causing the running time to be
pseudo-polynomial time.

\paragraph{Polynomial time approaches via the Ellipsoid Method.}
The Welfare Problem for gross substitutes was independently shown to be solvable
in polynomial by Murota \cite{Murota96b} and Nisan and Segal
\cite{NisanSegal}. Remarkably, this was done using completely different methods.

Nisan and Segal's approach is based on a linear programming
formulation of Walrasian equilibrium due to Bikhchandani and Mamer
\cite{bikchandani_mamer}. The authors show that the dual of this formulation can
be solved using both the value and demand oracles for gross substitutes as
a separation oracle for the LP. This can be combined with the
fact that demand oracles for gross substitutes can be constructed from value
oracles in $O(n^2)$ time \cite{DressTerhalle_WellLayered} to obtain a
polynomial-time algorithm in value oracle model.

This is the method that is closer to ours in spirit: since we both approach the
problem via a mathematical programming formulation and apply interior point
methods. In terms of oracle access,
Nisan and Segal crucially rely on value oracles to implement the separation
oracle in their LP -- so their solution wouldn't generalize to the
aggregate demand oracle model, since neither per-agent demand
oracles nor value oracles can be recovered from aggregate demand
oracle\footnote{Note that the construction of Blumrosen and Nisan
\cite{blumrosen} to construct value oracles from demand oracles crucially
requires \emph{per-buyer} demand oracles. The same construction doesn't
carry over to aggregate demand oracles.}.
The running time in their paper is never formally
analyzed, but since their formulation has $m+n$ variables, it would lead to a
superlinear dependence in the number of agents.

Nisan and Segal employ the LP to compute a set of Walrasian prices
and the value of the Walrasian allocation. In order to compute the
allocation itself, they employ a clever technique called
\emph{self-reducibility}, commonly used in complexity theory. While it allows
for an elegant argument, it is a very inefficient technique, since it requires
solving $nm$ linear programs. In total, this would lead to a running time of
$O(mn^2(m+n)^3)$ using currently fastest cutting plane algorithms as the LP solver.

\paragraph{Combinatorial approaches.}
A second technique was developed by Murota \cite{Murota96a,Murota96b} and leads
to very efficient combinatorial algorithms. Murota's original paper never
mentions the term ``gross substitutes''. They were developed
having a different object in mind, called \emph{valuated matroids}, introduced
by Dress and Wenzel \cite{DressWenzel92,DressWenzel} as a generalization of the
Grassmann-Pl\"{u}cker relations in $p$-adic analysis. Murota developed a
strongly-polynomial time algorithm based on network flows for a problem called 
the \emph{valuated matroids assignment problem}. There is a tortuous path
connecting gross substitutes to valuated matroids. Valuated matroid turned out
to be one aspect of a larger theory, Discrete Convex Analysis, developed by Murota (see his book
\cite{murota2003discretebook} for a comprehensive discussion). One central
object of this theory is the concept of $M^\natural$-concave
functions, introduced by Murota and Shioura \cite{MurotaShioura}. It came to
many as a surprise when Fujishige and Yang \cite{FujishigeYang} showed that
$M^\natural$-concave functions and gross substitutes are the same thing. Their
equivalence is highly non-trivial and their definitions are very different to
the point it took at least a decade for those concepts to be connected. Murota
and Tamura \cite{murota_tamura} later apply the ideas in discrete convex analysis
to give polynomial time algorithms to various equilibrium problems in economics.
The running time is never explicitly analyzed in their paper. Here we show that
their running time is $\tilde{O}(mn^3)$ and improve it to $\tilde{O}(mn + n^3)$

\paragraph{Market Coordination.} Related to Section
\ref{sec:robust_walrasian_prices} in our paper is the line of research on Market
Coordination. This line of inquiry was initiated by Hsu, Morgenstern, Rogers,
Roth and Vohra \cite{hsu_prices} who pose the question of when prices are enough
to coordinate markets. More
precisely, they showed that under some genericity condition the minimal
Walrasian price \textit{for a restricted class of gross substitutes} induces an
overdemand at most $1$ for each item. On the other hand, we show in
Theorem \ref{thm:robust_walrasian} that
under a more inclusive condition (which is necessary and sufficient)
\textit{almost every} Walrasian
prices \textit{for any gross substitutes} have \textit{no} overdemand, i.e. the
market is perfectly coordinated.  We also give a simple algorithm for
computing those prices whenever they exist. Such necessary and sufficient
conditions were given simultaneously and independently by Cohen-Addad, Eden,
Feldman and Fiat \cite{AddadEFF15}\footnote{Both papers were submitted to arxiv
in the same week.}.

\subsection{Conclusion and Discussion}

We provide in this paper the first polynomial-time algorithm for computing
Walrasian prices with an aggregate demand oracle. Previous algorithms for this
problem required either a value oracle or
a per-buyer demand oracle, or both. We also gave the fastest (to the best of our
knowledge) algorithm for computing a Walrasian equilibrium in economies with
gross substitute valuations. En route, we showed necessary and sufficient
conditions for the existence of robust Walrasian prices and provided an
algorithm to compute them.

We also provide the fastest known algorithm for Walrasian equilibrium in the
value oracle model. We believe the question of improving the running time is
important because it leads to new algorithmic ideas and new structural
insights. For example, the question of the existence of robust Walrasian prices
(Theorem \ref{thm:robust_walrasian}) arises as a step towards computing the
Walrasian allocation from subgradients. Later we noticed this structural lemma
also provided answer to a purely economic question of independent interest. A
second example is that by seeking to improve the running time, we looked for
algorithms that would perform very simple operations and this made us stumble
upon algorithms that used only the aggregate demand oracle.

One of the salient features of our first result is that we compute Walrasian
prices whenever they exist for any class of valuations. The reader might ask why
this is interesting since Gul and Stachetti \cite{GulStachetti} show that gross
substitutes are the largest class of valuations for which Walrasian equilibria
exist. The confusion stems from the qualification in their result. What they show
is if a class of valuation function is closed under summing an additive functions
and for all vectors of valuations in this class there is Walrasian equilibrium,
then it is contained in the class of gross substitutes. Such statement
doesn't preclude existence of equilibrium for more general classes. For example,
if there is a single buyer, then equilibrium exists no matter what his valuation
is. Also, rich classes of valuation functions were shown to always have
Walrasian equilibrium \cite{ozan14,ozan15,BenZwi,SunYang2006}. Of course, none
of those classes are closed under summing an additive function.

\section{Preliminaries}

\subsection{Notation}
Let $\Z$ be the set of integers, $\Z_{\geq 0}$ be the set of non-negative
integers. For any $k \in \Z_{\geq 0}$, we define $[k] := \{1, 2, \hdots, k\}$
and $\dbracket{k} = \{0, 1, \hdots, k\}$.
Also, give a vector $\vc{s} = (s_1, \hdots, s_n) \in \Z_{\geq 0}^n$, we define
$\dbracket{\vc{s}} = \prod_{j=1}^n \dbracket{s_i}$.

\subsection{Market equilibrium: prices, allocations and welfare}

Following the classic model of Gul and Stachetti \cite{GulStachetti}, we define
a \emph{market of indivisible goods} as a set $[m]$ of buyers, a set $[n]$ of
items and a supply $s_j\in \Z_{\geq 0}$ of each item $j$.
Each buyer $i$ has a valuation function
 $v_i : \Z_{\geq 0}^n \rightarrow \Z$
over the multisets of items with $v_i(0) = 0$. For the first part of our paper, we make no
further assumptions about the valuation function or the supply.

Given a price vector $\vc{p}\in\R^{n}$, we define the \emph{utility} of agent $i$ for a
bundle $x \in \dbracket{\vc{s}}$ under price vector $\vc{p}$ as:
$u_i(x;\vc{p}) := v_{i}(x)-\vc{p}\cdot x$, where $\vc{p} \cdot x$ refers to the
standard dot product $\sum_{j=1}^n p_i x_i$. Notice also that we make no
assumptions about the signs of $v_i$ and $\vc{p}$.

An \emph{allocation} $\vc{x}=(x^{(1)},x^{(2)},\ldots,x^{(m)})$
is simply an assignment of items to each player where $i$ receives
$x^{(i)}\in\dbracket{\vc{s}}$. An allocation $\vc{x}$ is \emph{valid} if
it forms a partition of the items, i.e. $\sum_{i}x_{j}^{(i)}=s_{j}$ for every
$j$. The \emph{social welfare} of a valid allocation $\vc{x}$ is defined
as $\sw(\vc{x}) = \sum_{i\in[m]}v_{i}(x^{(i)})$. Finally, the optimal social welfare
is simply the largest possible social welfare among all valid allocations.

In the following, we show the importance of prices in welfare economics,
namely that if the market clears, then we achieve the optimal social
welfare.

Given prices $\vc{p}\in\R^{n}$, we would expect a rational agent
to buy $x$ such that his utility $v_{i}(x)-\vc{p}\cdot x$ is maximized.
We call $x$ the demand of $i$ under $p$, as defined formally
below. Note that there may be multiple utility-maximizing subsets.
\begin{definition}[Demand set]
 Given prices $\vc{p}\in\R^{n}$ on each item, the demand set
$D(v,\vc{p})$ for a valuation function $v$ is the collection of subsets
for which the utility is maximized:
$$
D(v,\vc{p}):=\arg\max_{x\in\dbracket{\vc{s}}}v(x)-\vc{p}\cdot x.
$$
If $v$ is the valuation function $v_{i}$ of player $i$, we also
use the shorthand $D(i,\vc{p})$ as the demand set.
\end{definition}
We are now ready to define competitive equilibrium.
\begin{definition}[Equilibrium]
A \emph{Walrasian equilibrium} (also called \emph{competitive equilibrium})
consists of a price vector $\vc{p}\in\R^{n}$ and a valid allocation
$\vc{x}=(x^{(1)},x^{(2)},\ldots,x^{(m)})$
such that $x^{(i)}\in D(i,\vc{p})$ for all $i$. We call $\vc{p}$ an \emph{equilibrium/Walrasian
price} and $\vc{x}$ an \emph{equilibrium/Walrasian allocation} induced by $\vc{p}$.
\end{definition}
In other words, a competitive equilibrium describes a situation where
items are sold in such a way that the total demand $\sum_{i}x_{j}^{(i)}$
for each item precisely meets its supply $s_{j}$, i.e. the market
\emph{clears}. The reason for the name \emph{competitive} equilibrium
is that its achieve the optimal social welfare. This is known as the
first and second welfare theorems in economics and for completeness we provide
a proof in the appendix.

\begin{lemma}[First and Second Welfare Theorems]\label{lemma:welfare_thm}
Let $\vc{x}=(x^{(1)},x^{(2)},\ldots,x^{(m)})$ be an equilibrium allocation
induced by an equilibrium prices $\vc{p}$. Then $\vc{x}$ achieves the optimal
social welfare.

Moreover, if $\vc{p}$ is any set of Walrasian prices and $\vc{x}$ is any optimal
allocation, then the pair $(\vc{p}, \vc{x})$ form a Walrasian equilibrium.
\end{lemma}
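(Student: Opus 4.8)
The plan is to prove both directions by the standard ``exchange'' argument, exploiting that in a valid allocation the total bundle always equals the supply vector $\vc{s}$, so that the price contribution $\vc{p}\cdot\vc{s}$ is common to every valid allocation and cancels out of any comparison. For the first welfare theorem I would fix an arbitrary valid allocation $\vc{y}=(y^{(1)},\ldots,y^{(m)})$ and, for each buyer $i$, invoke $x^{(i)}\in D(i,\vc{p})$ to get $v_i(x^{(i)})-\vc{p}\cdot x^{(i)}\ge v_i(y^{(i)})-\vc{p}\cdot y^{(i)}$. Summing over $i$ and using $\sum_i x^{(i)}=\vc{s}=\sum_i y^{(i)}$, the terms $\vc{p}\cdot\vc{s}$ cancel on both sides, leaving $\sw(\vc{x})\ge\sw(\vc{y})$. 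Since $\vc{y}$ was arbitrary, $\vc{x}$ is welfare-optimal.

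For the second welfare theorem, let $\vc{p}$ be a set of Walrasian prices, so by definition there is some valid allocation $\vc{z}$ with $z^{(i)}\in D(i,\vc{p})$ for all $i$; by the first part, $\sw(\vc{z})$ equals the optimal welfare. Let $\vc{x}$ be any optimal allocation, so $\sw(\vc{x})=\sw(\vc{z})$. For each $i$ we have $v_i(x^{(i)})-\vc{p}\cdot x^{(i)}\le\max_{x\in\dbracket{\vc{s}}}\bigl(v_i(x)-\vc{p}\cdot x\bigr)=v_i(z^{(i)})-\vc{p}\cdot z^{(i)}$. Summing these $m$ inequalities, the left-hand side equals $\sw(\vc{x})-\vc{p}\cdot\vc{s}$ and the right-hand side equals $\sw(\vc{z})-\vc{p}\cdot\vc{s}$, and these are equal; hence every individual inequality must hold with equality, i.e.\ $v_i(x^{(i)})-\vc{p}\cdot x^{(i)}=\max_{x\in\dbracket{\vc{s}}}\bigl(v_i(x)-\vc{p}\cdot x\bigr)$, so $x^{(i)}\in D(i,\vc{p})$ for all $i$. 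Combined with the validity of $\vc{x}$, this is exactly the statement that $(\vc{p},\vc{x})$ is a Walrasian equilibrium.

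There is essentially no hard step here: everything reduces to summing the per-buyer optimality inequalities and observing that the supply-dependent price term $\vc{p}\cdot\vc{s}$ is identical across all valid allocations. The only point that needs a moment's care is the logical content of ``$\vc{p}$ is a set of Walrasian prices'' in the second part --- one must unpack it as ``there exists at least one valid allocation induced by $\vc{p}$'' and then apply the first welfare theorem to that allocation in order to identify the common optimal welfare value, which is precisely what forces the chain of inequalities to be tight and hence pins each $x^{(i)}$ into the demand set.
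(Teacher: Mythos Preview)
Your proof is correct and follows essentially the same approach as the paper's: sum the per-buyer optimality inequalities, cancel the common $\vc{p}\cdot\vc{s}$ term, and for the second part observe that equality of the sums forces equality in each summand. If anything, your version is slightly more explicit than the paper's in naming the auxiliary equilibrium allocation $\vc{z}$ witnessed by the Walrasian prices $\vc{p}$.
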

This lemma nicely reduces social welfare maximization to finding competitive
equilibrium whenever it exists. Note that the definition of social
welfare has nothing to do with prices. In a way, this lemma shows
that equilibrium prices act as a certificate which demonstrates the
optimality of equilibrium allocation.

\subsection{Oracles}

To study market equilibrium computationally, we must clarify how the
market is represented and how we can extract information about it.
In this paper we consider three models that access the market in different
scales:

\subsubsection{Microscopic Scale: Value Oracle Model}

In the \emph{value oracle model} the algorithm has access to the value that each
agent has for each bundle. This gives the algorithm very fine-grained
information, but it requires potentially many calls for the algorithm to access any
sort of macroscopic information about the market.

\begin{definition}[Value oracle]
The value oracle for player $i\in[m]$ takes $x\in\dbracket{\vc{s}}$
as input and outputs $v_{i}(x)$. We denote by $T_{V}$ the time spent
by the value oracle to answer a query.
\end{definition}

\subsubsection{Agent Scale: Demand Oracle Model}

In the \emph{demand oracle model} the algorithm presents a price
vector $\vc{p}$ to an agent and obtains how many units are demanded at that
price. At any given price, the agent could be indifferent between various
bundles. The demand oracle can return any arbitrary bundle.

\begin{definition}[Demand oracle]
The demand oracle for valuation agent $i$ takes as input a price vector
$\vc{p}\in\R^{n}$ and outputs a demand vector $d_i(\vc{p}) \in D(i,\vc{p})$.
We denote by $T_{D}$ the time spent by the demand oracle to answer 
a query.
\end{definition}

\subsubsection{Macroscopic (Market) Scale: Aggregate Demand Oracle Model}

The \emph{aggregate demand oracle model} presents a very macroscopic view of the
market. In this model, the algorithm cannot observe individual agents but only
the aggregate response of the market to any given price $\vc{p}$. For example, a
manufacturer deploying a product in the market is unable to observe each buyer's behavior, but
only how many units were sold.

\begin{definition}[Aggregate Demand oracle]
The aggregate demand oracle takes as input a price vector
$\vc{p}\in\R^{n}$ and outputs a demand vector $d(\vc{p}) \in \Z_{\geq 0}$ such
that there exist bundles $x^{(i)} \in D(i,\vc{p})$ satisfying $d(\vc{p}) = \sum_i
x^{(i)}$. We denote by $T_{AD}$ the time spent by the demand oracle to answer 
a query.
\end{definition}

\subsection{Convex analysis}

As we tackle the problem of finding market equilibrium using convex minimization,
we need a few basic facts about subgradients~\cite{rockafellar2015convex}. For differentiable functions, subgradients are just
gradients. Throughout this section all functions are continuous, convex,
real-valued, and defined over a convex subset of $\R^{n}$.
\begin{definition}
Let $f$ be a convex function on $\R^{n}$. $\vc{g}$ is a subgradient
of $f$ at $\vc{p}'$ if for any other $\vc{p}$ (in the domain),
$f(\vc{p})-f(\vc{p}')\geq \vc{g}\cdot(\vc{p}-\vc{p}').$
The set of subgradients at $\vc{p}$ is denoted by $\partial f(\vc{p})$. Sometimes
we abuse notation by denoting a subgradient simply as $\partial f(\vc{p})$.
\end{definition}
It is well-known that every continuous convex function has a subgradient
everywhere. Subgradients are nice particularly because they provide
a ``separation oracle'' in the following sense. Note that this is
almost a tautology.
\begin{lemma}
Let $\vc{g}$ be a subgradient at $\vc{p}'$ of convex $f$. If $\vc{p}$ minimizes
$f$, then 
$\vc{g}\cdot(\vc{p}'-\vc{p})\leq0$.

\end{lemma}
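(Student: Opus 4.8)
The plan is to prove this directly from the definition of a subgradient, invoking nothing beyond the two stated hypotheses. First I would recall that, since $\vc{g}\in\partial f(\vc{p}')$, the defining inequality gives
$f(\vc{p})-f(\vc{p}')\geq \vc{g}\cdot(\vc{p}-\vc{p}')$
for every $\vc{p}$ in the domain of $f$; in particular it applies to the claimed minimizer $\vc{p}$, which is a legitimate point to substitute since it lies in the same convex domain. Next I would use that $\vc{p}$ minimizes $f$, so $f(\vc{p})\leq f(\vc{p}')$ and hence the left-hand side $f(\vc{p})-f(\vc{p}')$ is at most $0$. Chaining these two facts yields $\vc{g}\cdot(\vc{p}-\vc{p}')\leq f(\vc{p})-f(\vc{p}')\leq 0$; rearranging gives the asserted sign relation between $\vc{g}$ and $\vc{p}'-\vc{p}$, i.e.\ every minimizer $\vc{p}$ of $f$ lies in the halfspace determined by $\vc{g}$ at $\vc{p}'$.

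There is essentially no obstacle — as the authors note, the statement is ``almost a tautology'' — so the only care required is bookkeeping of the inequality direction through the final rearrangement, and checking that the minimizer $\vc{p}$ is a point at which the subgradient inequality at $\vc{p}'$ may be evaluated (it is). For later use I would additionally remark that the same two-line computation shows the inequality is strict whenever $\vc{p}'$ is not itself a minimizer (since then $f(\vc{p})<f(\vc{p}')$), so the hyperplane through $\vc{p}'$ with normal $\vc{g}$ separates $\vc{p}'$ from the entire set of minimizers of $f$; this is exactly the ``separation oracle'' interpretation motivating the lemma, and it is what will let a cutting-plane method driven only by subgradients localize a minimizer.
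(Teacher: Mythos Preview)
Your approach is exactly right and matches what the paper intends: the paper offers no proof beyond the remark that the statement is ``almost a tautology,'' and your two-line derivation from the subgradient inequality plus the minimality of $\vc{p}$ is precisely that tautology.

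One bookkeeping point you glossed over: you correctly obtain $\vc{g}\cdot(\vc{p}-\vc{p}')\leq 0$, but ``rearranging'' this yields $\vc{g}\cdot(\vc{p}'-\vc{p})\geq 0$, not the $\leq 0$ printed in the lemma. This is a sign typo in the lemma's statement rather than an error in your reasoning --- the correct form $\partial f(\vc{p}_t)\cdot(\vc{p}-\vc{p}_t)\leq 0$ is what the paper actually uses later when setting up the cutting-plane separation oracle --- but you should flag the discrepancy explicitly instead of absorbing it into the word ``rearranging.''
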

By using $\vc{g}\cdot(\vc{p}'-\vc{p})\leq0$ as a separating hyperplane, this is
the basis on which subgradients allow us to solve convex minimization
via the ellipsoid or cutting plane method in polynomial time.

We can identify a minimizer by looking at its subgradients.
\begin{lemma}
For convex $f$, $\vc{p}$ minimizes $f$ iff $0$ is a subgradient at $\vc{p}$.
\end{lemma}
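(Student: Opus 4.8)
The plan is simply to unwind the definition of subgradient in each direction; nothing beyond that definition is required. For the ``only if'' direction, I would assume $\vc{p}$ minimizes $f$ over its domain. Then for every $\vc{q}$ in the domain we have $f(\vc{q}) - f(\vc{p}) \geq 0 = \vc{0}\cdot(\vc{q}-\vc{p})$, which is precisely the inequality asserting that $\vc{0}$ is a subgradient of $f$ at $\vc{p}$; hence $\vc{0}\in\partial f(\vc{p})$.

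For the ``if'' direction, I would assume $\vc{0}\in\partial f(\vc{p})$ and instantiate the subgradient inequality with $\vc{g}=\vc{0}$: for every $\vc{q}$ in the domain, $f(\vc{q}) - f(\vc{p}) \geq \vc{0}\cdot(\vc{q}-\vc{p}) = 0$, so $f(\vc{q})\geq f(\vc{p})$. Thus $\vc{p}$ is a minimizer of $f$, completing the equivalence.

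I do not expect any genuine obstacle: as the paper already remarks for the adjacent lemma, this is essentially a tautology. The only points to keep straight are that the quantifier over $\vc{q}$ ranges over the domain of $f$, so ``minimizes'' means minimizes over that domain (matching the convention fixed at the start of the subsection), and that convexity of $f$ is not actually invoked in either implication — it is only what guarantees, via the preceding remark, that a subgradient exists at all — but I would retain the hypothesis as stated for consistency with the way the lemma is used later.
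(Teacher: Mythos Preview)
Your argument is correct and is the standard one-line proof in each direction. The paper does not actually supply a proof for this lemma---it is stated as a basic fact from convex analysis (with a reference to Rockafellar)---so there is nothing to compare against; your write-up is exactly what one would expect.
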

Finally, we will frequently take the subgradient of a function in
the form $h(\vc{p})=\max_{i\in I}h_{i}(\vc{p})$. 
\begin{theorem}[Envelope Theorem]\label{thm:envelope}
Let $h(\vc{p})=\max_{i\in I}h_{i}(\vc{p})$ where $I$ is an index set and $h_{i}(\vc{p})$'s
are all convex. Then $h$ is convex and $\partial h(\vc{p})$ is the convex
hull of the subgradient $\partial h_{i}(\vc{p})$ $for$ $i\in\arg\max_{i\in I}h_{i}(\vc{p})$.
In particular, any subgradient of $h_{i}(\vc{p})$ is a subgradient of
$h(\vc{p})$ whenever $i\in\arg\max_{i\in I}h_{i}(\vc{p})$.\end{theorem}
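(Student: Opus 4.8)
The plan is to establish the three assertions in order: convexity of $h$, the easy containment of $\mathrm{conv}\!\left(\bigcup_{i}\partial h_{i}(\vc{p})\right)$ over the active indices $i\in\arg\max_{j}h_{j}(\vc{p})$ inside $\partial h(\vc{p})$, and then the reverse containment. Convexity of $h$ is immediate from the pointwise bound: for $\vc{p},\vc{q}$ in the domain and $\lambda\in[0,1]$, each $h_{i}$ gives $h_{i}(\lambda\vc{p}+(1-\lambda)\vc{q})\leq\lambda h_{i}(\vc{p})+(1-\lambda)h_{i}(\vc{q})\leq\lambda h(\vc{p})+(1-\lambda)h(\vc{q})$, and taking the maximum over $i$ on the left yields convexity of $h$. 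For the ``in particular'' statement (which is the part actually invoked elsewhere in the paper), fix $\vc{p}'$, pick $i\in\arg\max_{j}h_{j}(\vc{p}')$ so that $h(\vc{p}')=h_{i}(\vc{p}')$, and let $\vc{g}\in\partial h_{i}(\vc{p}')$; then for every $\vc{p}$ in the domain, $h(\vc{p})\geq h_{i}(\vc{p})\geq h_{i}(\vc{p}')+\vc{g}\cdot(\vc{p}-\vc{p}')=h(\vc{p}')+\vc{g}\cdot(\vc{p}-\vc{p}')$, so $\vc{g}\in\partial h(\vc{p}')$. Since $\partial h(\vc{p}')$ is a convex set (every subdifferential is, being an intersection of halfspaces), it then contains the convex hull of $\bigcup_{i\in\arg\max_{j}h_{j}(\vc{p}')}\partial h_{i}(\vc{p}')$.

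The reverse containment is the crux, and I would prove it by a separation argument. Assume $I$ is finite (the case relevant to our applications; otherwise one needs a compactness hypothesis on the family). Fix $\vc{p}_{0}$, set $A=\arg\max_{j}h_{j}(\vc{p}_{0})$, and let $C=\mathrm{conv}\!\left(\bigcup_{i\in A}\partial h_{i}(\vc{p}_{0})\right)$, which is a nonempty compact convex set. Suppose some $\vc{g}\in\partial h(\vc{p}_{0})$ lies outside $C$. By the separating hyperplane theorem there is a direction $\vc{d}$ with $\vc{g}\cdot\vc{d}>\max_{\vc{s}\in C}\vc{s}\cdot\vc{d}$. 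Because $I$ is finite and the $h_{j}$ are continuous, for all sufficiently small $t>0$ the maximum defining $h(\vc{p}_{0}+t\vc{d})$ is attained only within $A$, so the one-sided directional derivative satisfies $h'(\vc{p}_{0};\vc{d})=\max_{i\in A}h_{i}'(\vc{p}_{0};\vc{d})$. Invoking the standard identity $h_{i}'(\vc{p}_{0};\vc{d})=\max_{\vc{s}\in\partial h_{i}(\vc{p}_{0})}\vc{s}\cdot\vc{d}$ for convex functions, this equals $\max_{\vc{s}\in C}\vc{s}\cdot\vc{d}$. On the other hand, $\vc{g}\in\partial h(\vc{p}_{0})$ forces $h(\vc{p}_{0}+t\vc{d})-h(\vc{p}_{0})\geq t\,\vc{g}\cdot\vc{d}$, hence $h'(\vc{p}_{0};\vc{d})\geq\vc{g}\cdot\vc{d}$, contradicting the strict separation. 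Therefore $\partial h(\vc{p}_{0})\subseteq C$, which finishes the proof.

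The main obstacle is precisely this last step: it relies on the separating hyperplane theorem together with two facts about convex functions — that finitely many continuous pieces make the active set $A$ govern $h$ locally along every direction, and that the one-sided directional derivative $h_{i}'(\vc{p}_{0};\cdot)$ is the support function of $\partial h_{i}(\vc{p}_{0})$. Everything else (convexity, the easy containment) is routine. Since the full statement is classical — essentially Danskin's theorem, and also recorded as Theorem~23.8 in Rockafellar — one could alternatively just cite it; I sketch the argument above only for completeness, and note that for the paper's purposes the ``in particular'' clause suffices.
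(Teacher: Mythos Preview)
Your proof is correct, but note that the paper does not actually prove this theorem: it is stated in the preliminaries as a standard fact from convex analysis, with an implicit citation to Rockafellar~\cite{rockafellar2015convex}. So there is no ``paper's own proof'' to compare against---the authors simply quote the result. Your write-up is a clean sketch of the classical argument (and you correctly identify it as essentially Danskin's theorem / Rockafellar Theorem~23.8), which is more than the paper provides; for the paper's purposes only the ``in particular'' clause is ever used, and your two-line proof of that clause would suffice on its own.
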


\section{Walrasian equilibrium for General Valuations in $\tilde{O}(n^2 \cdot
T_{AD} + n^5)$}\label{sec:general}

We show that in the aggregate demand oracle model, whenever a Walrasian
equilibrium exists, it can be computed using $\tilde{O}(n^2)$ aggregate demand
oracles calls and $\tilde{O}(n^2 \cdot T_{AD} + n^5)$ time.
We want to emphasize that our result is
in the \emph{aggregate} demand oracle model -- which is the typical information
available to markets: \emph{which goods are under- and over-demanded by the population
of buyers?} Previous polynomial-time algorithms for computing Walrasian equilibria
\cite{bikchandani_mamer,NisanSegal,Murota96b,murota_tamura} require buyer-level demand oracles
or value oracles.
Previous algorithms that use only aggregate demand oracles (such as
\cite{KelsoCrawford,GulStachetti,AusubelMilgrom,parkes2002ascending,de2007ascending}
and related methods based on ascending
auctions) are pseudopolynomial since they depend linearly on the
magnitude of the valuation functions.

First we discuss a linear programming formulation for this problem and a
corresponding convex programming formulation. The formulation itself is fairly
standard and appears in various previous work. When we try to find an
\emph{exact} minimizer of this formulation in polynomial time using only
aggregate demand oracles, we encounter a series of obstacles. The main ones are:
(i) how to optimize a function we cannot evaluate; and (ii) how to find an exact
minimizer using convex optimization. In both cases, we need to apply
optimization algorithms in a \emph{non-black-box} manner and exploit special
structure of the problem.

\paragraph{Linear Programming Formulation} We start from the
formulation of Bikhchandani and Mamer \cite{bikchandani_mamer} (also studied by
Nisan and Segal \cite{NisanSegal}) of the Walrasian equilibrium problem as a
linear program. Consider the following primal-dual pair:

$$\left.\begin{aligned} & \max_z \sum_{i \in [m],x \in \dbracket{\vc{s}}} v_i(x) \cdot z_{i,x} \text{ s.t. }\\
  & \quad \begin{aligned} & \sum_{x} z_{i,x} = 1, & \forall i \quad & (u_i) \\
  & \sum_{i, x} x_j \cdot z_{i,x} = s_j, & \forall j \quad & (p_j) \\
  & z_{i,x} \geq 0, & \forall i,x \end{aligned} \end{aligned} \qquad \right| \qquad
  \begin{aligned} & \min_{(\vc{p},\vc{u})} \sum_i u_i + \vc{p} \cdot \vc{s} \text{ s.t. } \\
    & \quad \begin{aligned}
      & u_i \geq v_i(x) - \vc{p} \cdot x, & & \forall i,x & \quad (z_{i,x}) 
    \end{aligned} \end{aligned} $$

    \begin{lemma}[\cite{bikchandani_mamer}]
  \label{lem:eqismin}If a market equilibrium $(\vc{p}^{\eql},\vc{x})$ exists iff
      the primal program has an integral solution. In such case,
      the set of equilibrium prices is the set of solutions to the dual LP
      projected to the $\vc{p}$-coordinate.\end{lemma}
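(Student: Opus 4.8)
The plan is to prove both the equivalence and the price characterization by a direct appeal to strong LP duality and complementary slackness, after making one bookkeeping observation: integral feasible points of the primal are exactly the valid allocations. First I would record that this is a genuine finite linear program — $\dbracket{\vc{s}}$ is a finite set, so there are finitely many variables $z_{i,x}$ and finitely many dual constraints $u_i \ge v_i(x) - \vc{p}\cdot x$. The primal is feasible (e.g. assign all of $\vc{s}$ to buyer $1$ and the empty bundle to everyone else, which is already an integral feasible point) and, being supported on the bounded polytope $\{0\le z_{i,x}\le 1\}$, its objective is bounded above; hence the primal optimum is attained and finite, and by strong duality the dual attains the same value. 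I would also note the standard reduction of the dual: for fixed $\vc{p}$ the best choice is $u_i = \max_{x\in\dbracket{\vc{s}}}(v_i(x) - \vc{p}\cdot x)$, so a pair $(\vc{p},\vc{u})$ is dual feasible iff $u_i \ge v_i(x) - \vc{p}\cdot x$ for all $i,x$, and the dual is equivalent to minimizing $f(\vc{p}) = \sum_i \max_x (v_i(x) - \vc{p}\cdot x) + \vc{p}\cdot\vc{s}$ over $\vc{p}$.

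Next I would translate between integral primal points and allocations: if $z$ is integral and feasible then $\sum_x z_{i,x}=1$ with $z\ge 0$ forces exactly one bundle $x^{(i)}$ with $z_{i,x^{(i)}}=1$, and the supply constraints read $\sum_i x^{(i)} = \vc{s}$, so $\vc{x}=(x^{(1)},\dots,x^{(m)})$ is a valid allocation with primal value $\sw(\vc{x})$; conversely every valid allocation yields an integral feasible $z$. Since the equality constraints $\sum_x z_{i,x}=1$ and $\sum_{i,x} x_j z_{i,x}=s_j$ pair with free dual variables $u_i,p_j$, the only non-automatic complementary slackness condition is $z_{i,x}>0 \Rightarrow u_i = v_i(x) - \vc{p}\cdot x$.

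Then the two directions are short. For ($\Leftarrow$): given an integral \emph{optimal} primal solution with associated valid allocation $\vc{x}$, strong duality gives a dual optimum $(\vc{p},\vc{u})$, and complementary slackness at $z_{i,x^{(i)}}=1$ yields $u_i = v_i(x^{(i)}) - \vc{p}\cdot x^{(i)}$; combined with dual feasibility $u_i \ge v_i(x) - \vc{p}\cdot x$ this says $x^{(i)}\in D(i,\vc{p})$ for every $i$, so $(\vc{p},\vc{x})$ is a Walrasian equilibrium. For ($\Rightarrow$): given an equilibrium $(\vc{p}^{\eql},\vc{x})$, put $u_i = v_i(x^{(i)}) - \vc{p}^{\eql}\cdot x^{(i)}$; then $(\vc{p}^{\eql},\vc{u})$ is dual feasible because $x^{(i)}\in D(i,\vc{p}^{\eql})$, the integral $z$ built from $\vc{x}$ is primal feasible, and the two objectives agree via $\sum_i v_i(x^{(i)}) = \sum_i(v_i(x^{(i)})-\vc{p}^{\eql}\cdot x^{(i)}) + \vc{p}^{\eql}\cdot\vc{s}$; by weak duality both are optimal, so the primal optimum is attained at an integral point. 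The price characterization follows from the same two computations: an equilibrium price $\vc{p}$ together with its induced indirect utilities $u_i = \max_x(v_i(x)-\vc{p}\cdot x)$ is a dual optimum by the $(\Rightarrow)$ computation; and conversely, assuming an equilibrium exists, any dual optimum $(\vc{p},\vc{u})$ paired (via complementary slackness) with an integral optimal primal forces $x^{(i)}\in D(i,\vc{p})$, so $\vc{p}$ is an equilibrium price.

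I do not anticipate a real obstacle; the only points that warrant care are (i) spelling out that the LP is finite-dimensional, feasible, and bounded so that strong duality and attainment of both optima are legitimate, and (ii) being precise that "the primal has an integral solution" means an integral point \emph{achieving} the primal optimum — a feasible integral point always exists, and the possible gap between its best value and the LP optimum is exactly what obstructs the existence of a Walrasian equilibrium.
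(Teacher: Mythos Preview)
Your proposal is correct and follows essentially the same approach as the paper: construct an integral primal from a Walrasian allocation and a dual from Walrasian prices, verify equal objective values (hence optimality by weak duality), and for the converse use complementary slackness between an integral primal optimum and a dual optimum to force $x^{(i)}\in D(i,\vc{p})$. The paper's proof is terser and omits the feasibility/boundedness bookkeeping you spell out, but the argument is the same.
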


We provide a proof of the previous lemma in the appendix for completeness.
This lemma reduces the problem of finding a Walrasian equilibrium,
whenever it exists, to the problem of solving the dual program.
The approach in Nisan and Segal \cite{NisanSegal} is to use a (per buyer)
demand oracle as a separation oracle for the dual program. Since we only have access
to an aggregate demand oracle, we will consider a slightly different LP: given that
we care only about the $\vc{p}$ variables, we can reformulate the dual as:

    \begin{equation}\label{dual_program}\tag{D}
    \begin{aligned} & \min_{(\vc{p}, u)} u + \vc{p} \cdot \vc{s} \text{ s.t. } \\
    & \quad \begin{aligned}
      & u \geq \sum_i v_i(x^{(i)}) - \vc{p} \cdot x^{(i)}, & & \forall x_i \in
      \dbracket{\vc{s}}  \end{aligned} \end{aligned}
    \end{equation}

It is simple to see that for every feasible vector $(\vc{p}, \vc{u})$ of the
  original dual, we can find a corresponding point $(\vc{p}, \sum_i u_i)$ of the
  transformed dual with the same value. Conversely, given a feasible point
  $(\vc{p}, u)$ of the transformed dual, we can come up with a point $(\vc{p},
  \vc{u})$ of the original dual with equal or better value by setting $u_i =
  \max_x v_i(x) - \vc{p} \cdot x$.

Thus it suffices to find an optimal solution to the transformed dual
  program. The separation problem is now simpler: consider a point
  $(\vc{p_0}, u_0)$ that is infeasible. If some constraint is violated, then it must
  be the constraint for $x^{(i)}$ maximizing $\sum_i v_i(x^{(i)}) - \vc{p}_0 \cdot
  x^{(i)}$, so $u_0 < \sum_i v_i(x^{(i)}) -  \vc{p}_0 \cdot x^{(i)}$. For all
  feasible $(\vc{p}, u)$ we know that $u \geq \sum_i  v_i(x^{(i)}) - \vc{p}
  \cdot x^{(i)}$. Therefore: $$u - u_0 + (\vc{p} - \vc{p}_0) \cdot
  \vc{d}(\vc{p}_0) \geq 0$$
  is a valid separating hyperplane, where $\vc{d}(\vc{p}_0) = \sum_i x^{(i)}$ is the
  output of the aggregate demand oracle. If on the other hand $(\vc{p_0}, u_0)$
  is feasible, we can use the objective function to find a separating
  hyperplane, since for any optimal solution $(\vc{p}, u)$ we know
  that $u + \vc{p} \cdot \vc{s} \leq u_0 + \vc{p}_0 \cdot \vc{s}$. So we can
  separate it using: $$u - u_0 + (\vc{p} - \vc{p}_0) \cdot
  \vc{s} \leq 0$$

  \paragraph{An Obstacle} The main obstacle to this approach is that since the
  aggregate demand oracle has no access to the value of $\sum_i v_i(x^{(i)})$,
  it offers no information of whether a vector $(\vc{p}_0, u_0)$ is feasible or
  not, and so it is not clear which separation hyperplane to use.

  \paragraph{Convex Programming Formulation} One way to get around this obstacle
  is to further transform the dual program to get rid of utility variable $u$
  altogether. We do so by moving the constraints to the objective function in
  the form of penalties. The problem then becomes the problem of minimizing a
  convex function. The same function has appeared in \cite{ausubel2006efficient}
  as a potential function used to analyze a differential equation governing a
  price adjustment procedure and in \cite{Shioura13} as a potential function to
  measure the progress of a combinatorial algorithm.

  Given a market with supply $\vc{s}$ and agents with valuations $v_1,
  \hdots, v_m$, we define the market potential function $f:\R^n \rightarrow \R$ as:

  \begin{equation}\label{convex_f}\tag{C}
  f(\vc{p})=\sum_{i=1}^{m}\left(\max_{x\in\dbracket{\vc{s}}}v_{i}(x)-\vc{p}\cdot
  x\right)+\vc{p}\cdot \vc{s}.
  \end{equation}

  Since $f$ is nothing more than the dual linear program with the constraints
  moved to the objective function, the set of minimizers of $f$ is exactly the
  set of optimal solutions of the dual linear program (or more precisely, their
  projections to the $\vc{p}$-variable). Each term
  $\max_{x\in\dbracket{\vc{s}}}v_{i}(x)-\vc{p}\cdot x$ is a convex function in
  $\vc{p}$ since it is a maximum over linear functions. Hence $f$
  is convex since it is a sum of convex functions.

  One remarkable fact about $f$ is that we can obtain a subgradient from the
  aggregate demand oracle:

\begin{lemma}[Subgradient oracle]
  \label{lem:subgradient} Let $\vc{d}(\vc{p})$ be the aggregate demand,
  then $\vc{s} - \vc{d}(\vc{p})$ is a subgradient of $f$ in $\vc{p}$.
\end{lemma}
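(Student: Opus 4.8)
The plan is to decompose $f$ into its $m$ ``utility'' terms plus the linear term $\vc{p}\cdot\vc{s}$, compute a subgradient of each piece separately, and then add them up using the fact that a sum of subgradients (one from each summand) is a subgradient of the sum. Write $h_i(\vc{p}) = \max_{x\in\dbracket{\vc{s}}} \big(v_i(x) - \vc{p}\cdot x\big)$, so that $f(\vc{p}) = \sum_{i=1}^m h_i(\vc{p}) + \vc{p}\cdot\vc{s}$.

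First I would handle a single term $h_i$. For each fixed $x\in\dbracket{\vc{s}}$ the map $\vc{p}\mapsto v_i(x)-\vc{p}\cdot x$ is affine, hence convex, with constant gradient $-x$; and since the supply vector $\vc{s}$ has finite nonnegative integer entries, $\dbracket{\vc{s}}$ is a finite index set. So $h_i$ is a finite maximum of affine functions, and I would invoke the Envelope Theorem (\theoremref{thm:envelope}): any subgradient of an active term is a subgradient of $h_i$. Concretely, if $x^{(i)}\in\arg\max_{x\in\dbracket{\vc{s}}}\big(v_i(x)-\vc{p}\cdot x\big) = D(i,\vc{p})$, then $-x^{(i)}\in\partial h_i(\vc{p})$.

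Next I would assemble the pieces. By definition of the aggregate demand oracle, $\vc{d}(\vc{p}) = \sum_{i} x^{(i)}$ for some choice of bundles $x^{(i)}\in D(i,\vc{p})$; by the previous paragraph each $-x^{(i)}$ is a subgradient of $h_i$ at $\vc{p}$, so $-\vc{d}(\vc{p}) = \sum_i(-x^{(i)})$ is a subgradient of $\sum_i h_i$ at $\vc{p}$ (summing the defining inequalities $h_i(\vc{q})-h_i(\vc{p})\ge (-x^{(i)})\cdot(\vc{q}-\vc{p})$ over $i$). Finally, the linear term $\vc{p}\cdot\vc{s}$ has gradient $\vc{s}$ everywhere, which is in particular a subgradient, so adding again gives that $\vc{s}-\vc{d}(\vc{p})\in\partial f(\vc{p})$, as claimed.

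I do not expect any real obstacle here: the only things to be careful about are that $\dbracket{\vc{s}}$ is finite so the Envelope Theorem applies cleanly, and that the particular bundle returned by the oracle genuinely lies in the argmax $D(i,\vc{p})$ — but that is exactly what the definition of the aggregate demand oracle guarantees. The ``sum of subgradients is a subgradient of the sum'' step is immediate from the defining inequality of a subgradient and needs no more than one line.
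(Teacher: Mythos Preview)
Your proposal is correct and follows essentially the same approach as the paper: apply the Envelope Theorem to each term $\max_{x}v_i(x)-\vc{p}\cdot x$ to get $-x^{(i)}$ as a subgradient, then sum these together with the gradient $\vc{s}$ of the linear part. If anything, your write-up is a bit more careful than the paper's about justifying the ``sum of subgradients'' step and the applicability of the Envelope Theorem over the finite set $\dbracket{\vc{s}}$.
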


\begin{proof}
  From the Envelope Theorem (Theorem \ref{thm:envelope}), a subgradient of
  $\max_{x\in\dbracket{\vc{s}}}v_{i}(x)-\vc{p}\cdot x$ is the subgradient of
  $v_i(x^{(i)}) - \vc{p} \cdot x^{(i)}$ for $x^{(i)} \in \arg \max_x
  v_{i}(x)-\vc{p}\cdot x$. Since $v_i(x^{(i)}) - \vc{p} \cdot x^{(i)}$ is a
  linear function in $\vc{p}$, its gradient is simply $-x^{(i)}$. So, for any
  $x^{(i)}$ in the $\arg \max$, the vector $\vc{s} -  \sum_i x^{(i)}$ is a
  subgradient o f $f$. In particular, $\vc{s} - \vc{d}(\vc{p})$.
\end{proof}

A useful fact in studying this function is that we have an initial bound on the
set of minimizers:

\begin{lemma}\label{lemma:initial_box} If there exist a Walrasian equilibrium,
then the set of minimizers $P := \arg \min_{\vc{p}} f(\vc{p})$ is
contained in the box $[-2M, 2M]^n$ for $M = \max_i \max_{x \in \dbracket{s}}
\abs{v_i(x)}$.
\end{lemma}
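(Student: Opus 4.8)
The plan is to first identify the minimizer set $P$ with the set of Walrasian prices, and then to bound each coordinate of an arbitrary Walrasian price by a one-unit exchange argument against the allocation it induces. For the reduction, recall from the discussion just above that the minimizers of $f$ are exactly the $\vc{p}$-projections of the optimal solutions of the dual program \eqref{dual_program}, and by \lemmaref{lem:eqismin} — which applies since by hypothesis a Walrasian equilibrium exists — these projections are precisely the Walrasian prices. So it suffices to show that every Walrasian price $\vc{p}$ lies in $[-2M,2M]^n$. Fix such a $\vc{p}$. Since $\dbracket{\vc{s}}$ is finite, a welfare-maximizing valid allocation $\vc{x}=(x^{(1)},\dots,x^{(m)})$ exists, and by the second statement of \lemmaref{lemma:welfare_thm} the pair $(\vc{p},\vc{x})$ is a Walrasian equilibrium; thus $x^{(i)}\in D(i,\vc{p})$ for every $i$ and $\sum_i x^{(i)}=\vc{s}$.

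Now fix a coordinate $j$ (with $s_j\ge 1$), and let $\vc{e}_j$ be the $j$-th unit vector. For the upper bound on $p_j$, choose a buyer $i$ with $x^{(i)}_j\ge 1$, which exists because $\sum_i x^{(i)}_j = s_j\ge 1$; then $x^{(i)}-\vc{e}_j\in\dbracket{\vc{s}}$, and optimality of $x^{(i)}$ in $D(i,\vc{p})$ against $x^{(i)}-\vc{e}_j$ gives $v_i(x^{(i)})-\vc{p}\cdot x^{(i)}\ge v_i(x^{(i)}-\vc{e}_j)-\vc{p}\cdot(x^{(i)}-\vc{e}_j)$, i.e.\ $p_j\le v_i(x^{(i)})-v_i(x^{(i)}-\vc{e}_j)\le 2M$ since both values lie in $[-M,M]$. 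For the lower bound, choose a buyer $i$ with $x^{(i)}_j\le s_j-1$; such a buyer exists since otherwise $\sum_i x^{(i)}_j = m\,s_j > s_j$ (here we use $m\ge 2$ and $s_j\ge 1$), contradicting $\sum_i x^{(i)}=\vc{s}$. Then $x^{(i)}+\vc{e}_j\in\dbracket{\vc{s}}$, and optimality of $x^{(i)}$ against $x^{(i)}+\vc{e}_j$ gives $-p_j\le v_i(x^{(i)}+\vc{e}_j)-v_i(x^{(i)})\le 2M$, i.e.\ $p_j\ge -2M$. Ranging over all $j$ yields $\vc{p}\in[-2M,2M]^n$.

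The exchange inequalities are routine, so I expect no genuine obstacle; the only points that need care are the opening reduction (that $P$ is exactly the set of Walrasian prices, which is where the hypothesis that an equilibrium exists is used) and the degenerate corner cases — a good of zero supply, for which $f$ does not depend on $p_j$ at all, and the single-buyer market — which the statement should be read as implicitly excluding via the usual conventions that zero-supply goods are discarded and the market has at least two buyers.
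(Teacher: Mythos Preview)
Your proof is correct and follows essentially the same approach as the paper: identify $P$ with the Walrasian prices, then bound each $p_j$ via one-unit exchange arguments against the equilibrium allocation. Your lower-bound argument (pick a buyer with $x^{(i)}_j\le s_j-1$ and compare against $x^{(i)}+\vc{e}_j$) is just the direct form of the paper's contrapositive argument that if $p_j<-2M$ every buyer would demand the full supply of $j$; you are also more explicit than the paper about the $m\ge 2$ and $s_j\ge 1$ assumptions needed for this step.
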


\begin{proof}
Let $\vc{p} \in P$ be a vector of equilibrium prices and
$\vc{x}$ an optimal allocation. Since the pair $(\vc{p}, \vc{x})$ constitute a
Walrasian equilibrium, then for any item $j$, there is
some $x_{j}^{(i)}\geq1$ and therefore,
$$
v_{i}(x^{(i)})-\vc{p}\cdot x^{(i)}\geq
v_{i}(x^{(i)}-\vc{1}_{j})-\vc{p}\cdot(x^{(i)}-\vc{1}_{j})
$$
where $\vc{1}_j$ is the unit vector in the $j$-th coordinate.
This gives us: $p_{j}\leq v_{i}(x^{(i)})-v_{i}(x^{(i)}-\vc{1}_{j})\leq2M$.

For the upper bound, If there is more than one buyer then
$\vc{p}_j$ must be larger than $-2M$, otherwise all the supply of item
$j$ will be demanded by all buyers and therefore $\vc{p}$ can't be Walrasian.
\end{proof}

\begin{lemma}\label{lemma:vertices}
The set of Walrasian prices is a polytope $P$ whose vertices have coordinates of the
form $p_i = a_i / b_i$ with $a_i, b_i \in \Z$ and $\abs{b_i} \leq (Sn)^n$
for $S = \max_i s_i$.
\end{lemma}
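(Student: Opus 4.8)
The plan is to read the vertices of $P$ off the dual linear program. Recall that $P$ equals the set of minimizers of the convex potential \eqref{convex_f}, which by construction is the set of $\vc{p}$-coordinates of the optimal solutions of the dual linear program, and by Lemma~\ref{lem:eqismin} this coincides with the set of Walrasian prices. Write $Q=\{(\vc{p},\vc{u})\in\R^{n}\times\R^{m}:u_i+\vc{p}\cdot x\ge v_i(x)\ \text{for all } i\in[m],\ x\in\dbracket{\vc{s}}\}$ for the dual feasible polyhedron and let $F\subseteq Q$ be the optimal face; then $P$ is the image of $F$ under the linear projection $\pi(\vc{p},\vc{u})=\vc{p}$, and by Lemma~\ref{lemma:initial_box} it is bounded, hence a polytope, so it suffices to bound the denominators of its vertices. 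I would first show that every vertex of $P$ lifts to a vertex of $Q$: on $F$ one necessarily has $u_i=\max_{x\in\dbracket{\vc{s}}}\bigl(v_i(x)-\vc{p}\cdot x\bigr)$ for every $i$ (otherwise that $u_i$ could be decreased while staying feasible, strictly improving the objective), so $\vc{u}$ is a function of $\vc{p}$ on $F$ and $\pi|_F\colon F\to P$ is a bijection; being linear and injective on $F$, it sends the unique preimage $(\vc{p}^{\ast},\vc{u}^{\ast})\in F$ of a vertex $\vc{p}^{\ast}$ of $P$ to a vertex of $F$, which is in turn a vertex of $Q$.

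Next I would extract a small integer system that pins down $\vc{p}^{\ast}$. As a vertex of the $(n+m)$-dimensional polyhedron $Q$, the point $(\vc{p}^{\ast},\vc{u}^{\ast})$ is the unique solution of some set of $n+m$ linearly independent tight constraints of the form $u_i+\vc{p}\cdot x=v_i(x)$; every buyer $i$ must occur among them, for otherwise the $u_i$-column of the corresponding $(n+m)\times(n+m)$ matrix would vanish and its rank would be at most $n+m-1$. Choose for each buyer $i$ one of its tight constraints $u_i+\vc{p}\cdot x_i^{0}=v_i(x_i^{0})$ as a base, and subtract the base of buyer $i$ from each of the remaining $n$ tight constraints (with bundle $x\in\dbracket{\vc{s}}$), obtaining a $\vc{p}$-only equation $\vc{p}\cdot(x-x_i^{0})=v_i(x)-v_i(x_i^{0})$. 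These row operations preserve the span; the $m$ base rows are linearly independent because their $\vc{u}$-parts are the distinct unit vectors $e_1,\dots,e_m$, and the $n$ derived equations have zero $\vc{u}$-part, so these $n$ equations must be linearly independent in $\R^{n}$ for the whole set to retain rank $n+m$. Writing them as $A\vc{p}=\vc{c}$, the matrix $A\in\Z^{n\times n}$ has rows $x-x_i^{0}$ with $x,x_i^{0}\in\dbracket{\vc{s}}$, hence all entries in $\{-S,\dots,S\}$, the right-hand side $\vc{c}\in\Z^{n}$ consists of differences of integer valuations, $\det A\ne 0$, and $\vc{p}^{\ast}$ is the unique solution of this system.

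Finally, Cramer's rule gives $p_i^{\ast}=a_i/b_i$ with $b_i=\det A\in\Z\setminus\{0\}$ and $a_i\in\Z$ equal to the determinant of $A$ with its $i$-th column replaced by $\vc{c}$, and expanding the determinant over permutations yields $\abs{b_i}=\abs{\det A}\le n!\,S^{n}\le(Sn)^{n}$, which is the claim. The one genuinely delicate step is the passage from a vertex of the projected polytope $P$ to a vertex of the $(n+m)$-dimensional dual polyhedron $Q$ (so that it is cut out by $n+m$ tight facet inequalities), together with the bookkeeping that isolates exactly $n$ linearly independent $\vc{p}$-only equations with coefficients bounded by $S$; everything else is routine linear algebra and a crude determinant bound.
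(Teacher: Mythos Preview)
Your proof is correct and follows the same approach as the paper: identify vertices of $P$ with (the $\vc{p}$-coordinates of) basic feasible solutions of the dual LP, then bound the denominators via Cramer's rule using that the $\vc{p}$-coefficients lie in $\{0,\dots,S\}$. The paper's proof is a two-line sketch that asserts exactly this without justification; your write-up fills in the two nontrivial steps the paper omits---namely, that a vertex of the projected polytope $P$ lifts to a vertex of the $(n+m)$-dimensional dual polyhedron, and that the $\vc{u}$-variables can be eliminated to leave an $n\times n$ integer system with entries bounded by $S$---so your argument is a strict refinement of theirs.
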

\begin{proof}
The vertices of $P$ correspond to
$\vc{p}$-coordinates of the basic feasible solutions of the dual linear program.
Given that the coefficients of
the $\vc{p}$ variables in the linear program are integers from $0$ to $S$.
Solving the linear system using Cramer's rule (see Section 5 in
\cite{bland1981ellipsoid}) we get that every solution must be a fraction with
denominator at most $n! \cdot S^n \leq (Sn)^n$.
\end{proof}

\paragraph{Two More Obstacles} The natural approach at this point is to try to
apply convex optimization algorithms as a black box to optimize $f$. There are
two issues with this approach. The first, which is easier to address, is that
unlike algorithms for linear programming which are exact, algorithms for general
convex optimization methods give only $\epsilon$-approximation guarantees.
We will get around this issue by exploiting the connection between $f$ and the
linear program from which it is derived.

A second more serious obstacle is the fact that we don't have access to the
functional value of $f$, only to its subgradient. This is a problem for the
following reasons. The way that traditional cutting plane methods work is that
they keep in each iteration $t$ a set $G_t$ called a \emph{localizer}. The
localizer is a subset of the domain which is guarantee to contain the optimal
solution. We start with a large enough localizer set $G_0$ that is guaranteed to
contain the solution. In each iteration $t$, a point $\vc{p}_t \in G_{t-1}$
is queried for the subgradient $\partial f(\vc{p}_t)$. Now, if $\vc{p}^*$ is an
optimal solution we know that:
$$0 \geq f(\vc{p}^*) - f(\vc{p}_t) \geq \partial f(\vc{p}_t) \cdot (\vc{p}^* -
\vc{p}_t)$$
where the first inequality comes from the fact $\vc{p}^*$ is an optimal solution
and the second inequality comes from the definition of the subgradient. This in
particular means that any optimal solution must be contained in $H_t = G_{t-1} \cap
\{\vc{p};  f(\vc{p}_t) \cdot (\vc{p} -\vc{p}_t) \leq 0 \}$. The method then
updates $G_t$ to either $H_t$ or a superset thereof. The Ellipsoid Methods,
for example, updates $G_t$ to the smallest ellipsoid containing $H_t$. So far,
all the steps depend only on the subgradient and not on the actual functional
values of $f(\vc{p}_t)$. The guarantee of the method, however, is that after a
sufficient number of steps, one of the iterates is close to the optimal, i.e.,
$\min_t f(\vc{p}_t) - f(\vc{p}^*) \leq \epsilon$. To find the approximate
minimizer, however, we need to look at the actual functional values, which
we have not access to. See Section 2.2 in Nemirovski's book
\cite{nemirovski2005efficient} for a complete discussion on the guarantees
provided by cutting plane methods.

\paragraph{A special case : Walrasian prices with non-empty interior}
The set of Walrasian prices $P$ forms a convex set, since it corresponds to the set
of minimizers of the convex function $f$. If $P$ has non-zero volume, it is
possible to find a Walrasian equilibrium using the Ellipsoid Method
without needing to know the functional value.
If the set is large, the Ellipsoid method can't avoid querying a point in
the interior of $P$ for too long. And when a point in the interior of $P$ is
queried, the only subgradient is zero, or in market terms, each agent has an
unique favorite bundle and those bundles clear the market. This means that the
aggregate demand oracle returns exactly the supply. Next we make this discussion
formal:

\begin{theorem}\label{thm:general_walrasian_non_empty}
Assume that there the set of Walrasian prices $P$ has non-zero
volume, then in $\tilde{O}(n^3)$ iterations of the Ellipsoid method\footnote{One may use cutting plane methods to derive a better running time but for simplicity we omit doing it for this special case.} is
guaranteed to query a Walrasian price $\vc{p}^*$ for which the aggregate demand
oracle returns $\vc{d}(\vc{p}^*) = \vc{s}$.
\end{theorem}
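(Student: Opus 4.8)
The plan is to run the Ellipsoid method to minimize the market potential $f$ of~\eqref{convex_f}, feeding it the subgradient oracle of Lemma~\ref{lem:subgradient} (at a queried point $\vc{p}_t$ it returns $\vc{g}_t=\vc{s}-\vc{d}(\vc{p}_t)$) as the separation oracle, and then reading off the conclusion from a volume-shrinkage argument. I would initialize the localizer $E_0$ to the Euclidean ball circumscribing the box $[-2M,2M]^n$ of Lemma~\ref{lemma:initial_box}: since $\vol(P)>0$ forces $P\neq\emptyset$, a Walrasian equilibrium exists, so that lemma gives $P\subseteq E_0$, with $\log\vol(E_0)=O(n\log(Mn))$. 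At step $t$ the method queries the center $\vc{p}_t$ of $E_{t-1}$, obtains $\vc{g}_t$, and replaces $E_{t-1}$ by the smallest ellipsoid $E_t$ containing the half-ellipsoid $E_{t-1}\cap\{\vc{p}:\vc{g}_t\cdot(\vc{p}-\vc{p}_t)\le 0\}$. Since $\vc{g}_t\in\partial f(\vc{p}_t)$, the subgradient inequality at any minimizer $\vc{p}^*$ of $f$ gives $0\ge f(\vc{p}^*)-f(\vc{p}_t)\ge\vc{g}_t\cdot(\vc{p}^*-\vc{p}_t)$, so that half-space contains every minimizer of $f$, i.e.\ contains $P$; hence each cut is valid and $P\subseteq E_t$ for all $t$.

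Next I would fix a target $T=\tilde{O}(n^3)$ (pinned down below) and suppose, for contradiction, that in the first $T$ iterations no query satisfies $\vc{d}(\vc{p}_t)=\vc{s}$. Then $\vc{g}_t\neq\vc{0}$ at every step, each cut is a genuine central cut, and the Ellipsoid volume contracts geometrically: $\vol(E_t)\le e^{-c/n}\vol(E_{t-1})$ for an absolute constant $c>0$, hence $\vol(E_T)\le e^{-cT/n}\vol(E_0)$.

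The remaining ingredient is a lower bound on $\vol(P)$. Since $\vol(P)>0$, the polytope $P$ is full-dimensional, so among its vertices there are $n+1$ affinely independent ones $v_0,\dots,v_n$, and $\sigma:=\mathrm{conv}(v_0,\dots,v_n)\subseteq P$. By Lemma~\ref{lemma:vertices}, and more precisely by its Cramer's-rule proof, all $n$ coordinates of a single vertex $v_i$ are governed by one common denominator $b_i$ with $|b_i|\le(Sn)^n$. Writing each $v_i-v_0$ over the denominator $b_ib_0$ and pulling that scalar out of column $i$ of the matrix $A$ whose columns are $v_1-v_0,\dots,v_n-v_0$, the number $(b_0^n\prod_{i=1}^n b_i)\det A$ is a nonzero integer (nonzero because $\sigma$ is non-degenerate), so $|\det A|\ge(Sn)^{-2n^2}$ and $\vol(\sigma)=\tfrac{1}{n!}|\det A|\ge(n!\,(Sn)^{2n^2})^{-1}$; thus $\log(1/\vol(P))\le\log(1/\vol(\sigma))=O(n^2\log(Sn))$.

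Combining the two bounds gives $\log(\vol(E_0)/\vol(P))=\tilde{O}(n^2)$, so taking $T=\lceil\tfrac{n}{c}\log(\vol(E_0)/\vol(\sigma))\rceil+1=\tilde{O}(n^3)$ forces $\vol(E_T)<\vol(\sigma)\le\vol(P)$, contradicting $P\subseteq E_T$. Hence within $T=\tilde{O}(n^3)$ iterations some queried $\vc{p}_t$ must have $\vc{d}(\vc{p}_t)=\vc{s}$, i.e.\ $\vc{g}_t=\vc{0}\in\partial f(\vc{p}_t)$, so $\vc{p}_t$ minimizes $f$ and is therefore a Walrasian price $\vc{p}^*$ at which the aggregate demand oracle returned exactly $\vc{s}$ --- which is the claim. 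The hard part will be the volume lower bound on $P$: reaching $\tilde{O}(n^3)$ rather than a larger power of $n$ relies on all coordinates of a vertex of $P$ sharing one Cramer determinant of size $(Sn)^{O(n)}$, so that $\vol(\sigma)$ loses only a $(Sn)^{O(n^2)}$ factor; a coordinate-by-coordinate denominator bound would cost an extra $n$ in the exponent and yield only $\tilde{O}(n^4)$. One should also nail down the exact Ellipsoid contraction constant and check that $f$ is finite everywhere, so that every center is a legal query point; both are routine.
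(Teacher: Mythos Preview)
Your proposal is correct and follows essentially the same approach as the paper's proof: initialize the ellipsoid to enclose the box of Lemma~\ref{lemma:initial_box}, use the subgradient as the cut, lower-bound $\vol(P)$ via the determinant of a simplex on $n+1$ vertices controlled by Lemma~\ref{lemma:vertices}, and compare volumes. Your write-up is in fact a bit tighter than the paper's on two points: you phrase the contradiction directly in terms of $\vc{g}_t\neq 0$ (rather than ``queries an interior point of $P$''), which is exactly what the theorem asserts, and you make explicit the common-denominator-per-vertex consequence of Cramer's rule that justifies the $(Sn)^{-O(n^2)}$ volume bound; the paper's one-line determinant estimate glosses over this.
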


\begin{proof}
By Lemma \ref{lemma:initial_box}, $P \subseteq [-2M, 2M]^n$, so we can take the
initial set of localizers $G_0$ in the ellipsoid methods to be the ball of
radius $2M \sqrt{n}$ around $0$, which has volume $O(M)^n$. The Ellipsoid
guarantee (see Section 3 of \cite{nemirovski2005efficient}) is that the volume
of $G_t$ is at most $e^{-t/2n}$ of the initial volume. So if the method hasn't
queries any point in the interior of $P$, then we must have $G_t$ containing
$P$.

To bound the volume of $P$ we use the fact that is a polytope has vertices
$\vc{p}_0, \vc{p}_1, \hdots, \vc{p}_{n}$ then the volume of the polytope is
lower bounded by the volume of the convex hull of those vertices:
$$\vol(P) \geq \frac{1}{n!}\det \left[\begin{matrix}  1 & 1 & \hdots & 1 \\ \vc{p}_0 &
\vc{p}_1 & \hdots & \vc{p}_n  \end{matrix} \right] \geq \frac{1}{n!}
\left( \frac{1}{O(Sn)^{n}} \right)^n = \Omega((Sn)^{-n^2})$$
where the last inequality follows from Lemma \ref{lemma:vertices}.

Therefore, if $G_t$ contains $P$ then:
$O(M)^n e^{-t/2n} \geq \Omega((Sn)^{-n^2})$
which implies that $t \leq O(n^3 \log(Sn) + n^2 \log(M)) = \tilde{O}(n^3)$. So
after so many iterations we are guaranteed to query a point in the interior of
$P$. For a point $\vc{p}$ in the interior of $p$, there is a small ball around
it for which $f$ is flat, so zero must be the unique subgradient at that point.
Therefore, the aggregate demand oracle has no choice but to return
$\vc{d}(\vc{p}) = \vc{s}$.
\end{proof}

The main drawback of this method is that it strongly relies on
the fact that $P$ has non-empty interior. Some very simple and well-behaved
markets have a set of Walrasian prices of zero volume.
Consider for example a market with two items with
supply one of each and three buyers, each having valuation $v(0) = 0$
and $v(x) = 1$ otherwise. The set of Walrasian prices is $P=\{(1,1,1)\}$. Even
for $\vc{p}^* = (1,1,1)$, the subgradient is not unique since the aggregate
demand oracle can return any $\vc{d}(\vc{p}^*) = (d_1, d_2)$ for $0 \leq d_1 +
d_2 \leq 3$ and $d_1, d_2 \in \dbracket{3}$. In this case even trying to modify
the objective function is hopeless, since there is no point in the domain for
which the aggregate demand oracle is guaranteed to return the supply vector.
Since there is no point for which the subgradient oracle is guaranteed to return
zero, there is no direct way to recognize a vector of Walrasian prices even if
we see one!

\paragraph{Approach for the general case}
Our approach for optimizing $f$ is as follows: first we show how to obtain
an $\epsilon$-approximate minimizer of $f$ in time
$O(n  \log(nM / \epsilon) T_{AD} +  n^3\log^{O(1)}(nM / \epsilon))$.
In order to round the $\epsilon$-approximate solution to the optimal solution,
we exploit the fact that $f$ came from an linear program and customize the
traditional approach of Khachiyan \cite{khachiyan1980polynomial} for rounding
approximate to exact solutions.

The idea is to use Lemma \ref{lemma:vertices} to argue that if $f$ has a unique
minimizer, and we set $\epsilon$ small enough, then all
$\epsilon$-approximate solutions are in a small neighborhood of the exact optimal.
Moreover, for small enough $\epsilon$, there should be only one point in the
format indicated by Lemma \ref{lemma:vertices} in a small neighborhood of the
approximate minimizer, so we can recognize this as the exact solution by rounding.

For this approach to work, we need $f$ to have a unique minimizer. To achieve
that, we perturbe the objective function $f$ to $\hat{f}$ in such a way that
$\hat{f}$ has a unique minimizer that this minimizer is still a minimizer of
$f$. There are several ways to implement this approach, the simplest of which
uses the isolation lemma (which, by the way, was not available to Khachiyan at
the time so he had to resort to something more complicated).

\paragraph{A recent Cutting Plane Method without using functional values}
The first part of the algorithm consists in obtaining an $\epsilon$-approximate
minimizer of $f$ without using its functional value. In order to do so, we use a
recent technique introduced by Lee, Sidford and Wong \cite{LeeSW15}. The authors
show an efficient algorithm which either identifies a point inside the desired convex set $K$ or
\textit{certifies} that $K$ contains no ball of radius $\epsilon$. We show that
their main theorem (Theorem 31) can be used to compute
approximate minimizers of convex functions using only the subgradient. We note
that in their paper also provides an application of Theorem 31 to
minimizing convex functions (Theorem 42 in \cite{LeeSW15}) but their proof
relies on using functional values as they did not assume a separation oracle given by subgradients.

\begin{theorem}[Lee, Sidford, Wong (Theorem 31 in \cite{LeeSW15})]\label{thm:lsw}
  Let $K \subseteq [-M,M]^n$ be a convex set and a separation oracle can
  be queries for every point in $\vc{p} \in [-M, M]^n$ will either return that
  $\vc{p} \in K$ or return a half-space $H = \{\vc{p} \in \R^n; \vc{a}_i \cdot \vc{p}  \leq
  b_i\}$ containing $K$. Then there exists an algorithm with running time $O(n T
  \log(nM/\delta) + n^3 \log^{O(1)}(nM)\log^2(nM/\delta))$\footnote{The running time stated in their paper has $\log^{O(1)}(nM/\delta)$ dependence which is, upon a closer examination, actually $\log^{O(1)}(nM)\log^2(nM/\delta))$. (They ignored the difference because both runtimes give the same result for their applications)} that either produces a
  point $\vc{p} \in K$ or finds a polytope $P = \{\vc{p} \in \R^n; \vc{a}_i \cdot \vc{p} \leq
  b_i, i = 1, \hdots, k \}$, $k = O(n)$ such that:

  \begin{itemize}
    \item The constraints $\vc{a}_i \cdot \vc{p} \leq \vc{b}$ are either from
      the original box, $p_i \leq M$ or $p_i \geq -M$ or are constraints
      returned by the separation oracle normalized such that $\norm{\vc{a}_i}_2
      = 1$.
    \item The width of $P$ is small, i.e., there is a vector $\vc{a}$ with
      $\norm{\vc{a}}_2 = 1$ such that $$\max_{\vc{p} \in P} \vc{a} \cdot \vc{p} -
      \min_{\vc{p} \in P} \vc{a} \cdot \vc{p} \leq O(n \delta \log(Mn /
      \delta))$$
    \item The algorithm produces a certificate of the previous fact
      in the form of a convex combination
      $t_1, \hdots, t_k$ with $t_i \geq 0$, $\sum_{i=1}^k t_i = 1$ and $k =
      O(1)$ such that:
        \begin{itemize}
          \item $\norm{\sum_{j=1}^k \vc{a}_i}_2 \leq O\left( \frac{\delta}{M}
            \sqrt{n} \log\left( \frac{M}{\delta} \right)\right)$
          \item $\abs{\sum_{j=1}^k b_i} \leq O\left( n \delta
            \log\left( \frac{M}{\delta} \right)\right)$
        \end{itemize}
  \end{itemize}
\end{theorem}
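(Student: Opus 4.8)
The plan is not to reconstruct the cutting-plane machinery of Lee, Sidford and Wong from scratch --- that is the content of roughly the first half of \cite{LeeSW15} --- but rather to \emph{invoke} their algorithm and verify that it delivers exactly the three properties claimed, while paying attention to the dependence on the accuracy parameter $\delta$. Concretely, I would (i) run their cutting-plane scheme with the box $[-M,M]^n$ as the initial localizer and the given separation oracle, and (ii) re-examine their running-time accounting to justify the refined bound $O(nT\log(nM/\delta) + n^3\log^{O(1)}(nM)\log^2(nM/\delta))$ asserted in the footnote.

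Recall the structure of their method. At each iteration one maintains a polytope $P_t = \{\vc{p} : \vc{a}_i \cdot \vc{p} \le b_i\}$ with only $O(n)$ facets, together with an approximate center $\vc{p}_t$ of $P_t$ obtained by a few Newton steps on a carefully chosen self-concordant (Lee--Sidford ``hybrid'') barrier. The oracle is queried at $\vc{p}_t$; if it reports $\vc{p}_t \in K$ we are done, otherwise it returns a half-space containing $K$, which we normalize, add to $P_t$, and then we delete a facet whose leverage score (weight in the barrier) is small, so that the facet count stays $O(n)$. A potential argument --- tracking the barrier value of $P_t$ --- shows that after $O(n\log(nM/\delta))$ iterations either $K$ has been hit or $P_t$ has become ``thin'': its width in some unit direction $\vc{a}$ is $O(n\delta\log(Mn/\delta))$. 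Since each added cut is normalized and each facet swap plus the Newton re-centering costs $\Otil(n^2)$ amortized (maintaining leverage scores via a Johnson--Lindenstrauss sketch), the total work is $O(nT)$ for oracle calls plus $\Otil(n^3)$ for the linear algebra, matching the claimed bound once the $\delta$-factors are tracked.

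For the certificate in the thin case: because $K \subseteq P_t$ and $P_t$ fits between two parallel hyperplanes at distance $O(n\delta\log(Mn/\delta))$, $K$ cannot contain a ball of radius $\delta$ once $\delta$ is small relative to this width, which gives the qualitative dichotomy. The explicit convex combination $t_1,\dots,t_k$ is read off from the first-order (KKT) conditions of the centering subproblem at $\vc{p}_t$: the gradient of the barrier is a nonnegative combination of the outward facet normals, and when the polytope is thin this combination, after normalization and a Carath\'eodory-type reduction to $O(1)$ terms, has $\|\sum_i t_i \vc{a}_i\|_2$ and $|\sum_i t_i b_i|$ driven down to the stated $O(\tfrac{\delta}{M}\sqrt{n}\log(M/\delta))$ and $O(n\delta\log(M/\delta))$ respectively --- i.e.\ it is a near-Farkas certificate that no thick ball fits inside $P_t$.

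The main thing to get right --- and the only place where this statement differs from a verbatim quotation of \cite{LeeSW15} --- is the footnote's sharpening of the running time to $n^3\log^{O(1)}(nM)\log^2(nM/\delta)$ in place of $n^3\log^{O(1)}(nM/\delta)$. I would isolate exactly which subroutines see $\delta$: the iteration count contributes one factor of $\log(nM/\delta)$, the Newton-centering precision contributes a second (hence the $\log^2$), and every remaining polylog is in $nM$ alone because the barrier, the sketch dimension, and the leverage-score maintenance never need to be sharpened as $\delta \to 0$. Since the rest is a faithful import, the real obstacle is careful bookkeeping of these constants rather than any new idea; the only conceptually delicate imported piece is the potential/amortization analysis that simultaneously bounds the iteration count and keeps the per-iteration cost at $\Otil(n^2)$.
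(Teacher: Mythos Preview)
Your proposal is appropriate, but there is nothing to compare it against: the paper does not prove this theorem. It is stated as an imported result from \cite{LeeSW15} (labelled ``Theorem 31 in \cite{LeeSW15}'') and used as a black box to derive the subsequent Theorem~\ref{thm:convex_opt}. The only contribution the present paper makes to this statement is the footnote observing that the $\log^{O(1)}(nM/\delta)$ dependence in \cite{LeeSW15} can be sharpened to $\log^{O(1)}(nM)\log^2(nM/\delta)$ upon closer inspection --- and you correctly identify this as the one place requiring genuine bookkeeping rather than mere citation. So your plan to invoke \cite{LeeSW15} and audit the $\delta$-dependence is exactly what the paper does, and your high-level account of why only two subroutines (iteration count and centering precision) see $\delta$ is the right shape for justifying the footnote.
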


Now, we now show that this result can be used to obtain a convex minimization
algorithm that uses only subgradients:

\begin{theorem}\label{thm:convex_opt}
  Let $f:\R^n \rightarrow \R$ be an $L$-Lipschitz convex function
  equipped with a subgradient oracle with running time $T$. If $f$ has a
  minimizer in $[-M,M]^n$ we can find a point $\bar{\vc{p}}$ such that
  $f(\bar{\vc{p}}) - \min_{\vc{p}} f(\vc{p}) \leq \epsilon$ in time $O(n T
  \log(nML/\epsilon) + n^3 \log^{O(1)}(nML)\log^2 (nML / \epsilon))$.
\end{theorem}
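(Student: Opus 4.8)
The plan is to run the cutting-plane procedure of Theorem~\ref{thm:lsw} using the subgradient oracle to manufacture a separation oracle, and then to extract an approximate minimizer \emph{not} from the queried points by comparison (which would need function values) but from the dual certificate that Theorem~\ref{thm:lsw} outputs. Write $\mathrm{OPT} := \min_{\vc p} f(\vc p)$ and let $\vc p^\star$ be any minimizer, which lies in $[-M,M]^n$ by hypothesis. Apply Theorem~\ref{thm:lsw} with the \emph{enlarged} box $[-2M,2M]^n$ and target set $K := \arg\min_{\vc p} f(\vc p)$; the enlargement is crucial and is explained below. On a query $\vc p_t$, call the subgradient oracle for $\vc g_t \in \partial f(\vc p_t)$: if $\vc g_t = \vc 0$ then $\vc p_t$ already minimizes $f$ and we halt; otherwise return the half-space $\{\vc p : \vc a_t\cdot\vc p \le b_t\}$ with $\vc a_t = \vc g_t/\|\vc g_t\|_2$ and $b_t = \vc a_t\cdot\vc p_t$. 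This is valid for $K$: if $\vc p$ minimizes $f$ then $0 \ge f(\vc p) - f(\vc p_t) \ge \vc g_t\cdot(\vc p - \vc p_t)$, so $\vc a_t\cdot\vc p \le b_t$. Since our oracle never reports membership in $K$, the procedure terminates in the second branch of Theorem~\ref{thm:lsw}, producing a polytope $P = \{\vc p : \vc a_i\cdot\vc p \le b_i,\ i=1,\dots,k\}$ with $k = O(n)$ (each $\vc a_i\cdot\vc p\le b_i$ a face of the box or a normalized subgradient cut) together with a convex combination $t_1,\dots,t_k\ge 0$, $\sum_i t_i = 1$, such that $\|\sum_i t_i\vc a_i\|_2 \le O(\tfrac{\delta}{M}\sqrt n\log\tfrac M\delta)$ and $|\sum_i t_i b_i| \le O(n\delta\log\tfrac M\delta)$. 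Note $\vc p^\star\in P$: it satisfies every subgradient cut as above and lies in $[-M,M]^n\subseteq[-2M,2M]^n$.

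Let $S\subseteq\{1,\dots,k\}$ index the subgradient cuts among the certificate constraints, where cut $i\in S$ comes from a query point $\vc q_i$ with subgradient $\vc g_i$, so $b_i = \vc a_i\cdot\vc q_i$ and $\|\vc g_i\|_2\le L$ since $f$ is $L$-Lipschitz. The subgradient inequality at $\vc q_i$ evaluated at $\vc p^\star$ gives
\[
f(\vc q_i) \;\le\; f(\vc p^\star) + \vc g_i\cdot(\vc q_i - \vc p^\star) \;=\; \mathrm{OPT} + \|\vc g_i\|_2\,(b_i - \vc a_i\cdot\vc p^\star) \;\le\; \mathrm{OPT} + L\,(b_i - \vc a_i\cdot\vc p^\star),
\]
and each slack $b_i - \vc a_i\cdot\vc p^\star \ge 0$ because $\vc p^\star\in P$. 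Weighting by the certificate and using $\|\vc p^\star\|_2\le M\sqrt n$,
\[
\sum_{i} t_i\,(b_i - \vc a_i\cdot\vc p^\star) \;=\; \Big(\textstyle\sum_i t_i b_i\Big) - \Big(\textstyle\sum_i t_i\vc a_i\Big)\cdot\vc p^\star \;\le\; O\!\big(n\delta\log\tfrac M\delta\big) + O\!\big(\tfrac{\delta}{M}\sqrt n\log\tfrac M\delta\big)\cdot M\sqrt n \;=\; O\!\big(n\delta\log\tfrac M\delta\big).
\]
All terms on the left are nonnegative, so the same bound holds for the sum over $S$ alone.

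Here is where the enlarged box pays off: any box constraint $i$ appearing in the certificate has $b_i - \vc a_i\cdot\vc p^\star \ge 2M - M = M$ (since $\vc p^\star\in[-M,M]^n$), hence weight $t_i \le O(n\delta\log\tfrac M\delta)/M$; as there are $O(1)$ such constraints, $\sum_{i\in S} t_i \ge 1 - O(n\delta\log\tfrac M\delta)/M \ge \tfrac12$ once $\delta$ is small. We then output the explicitly computable point $\bar{\vc p} := \big(\sum_{i\in S} t_i\vc q_i\big)\big/\big(\sum_{i\in S} t_i\big)$, for which convexity of $f$ and the per-point bound give
\[
f(\bar{\vc p}) - \mathrm{OPT} \;\le\; \frac{\sum_{i\in S} t_i\,(f(\vc q_i)-\mathrm{OPT})}{\sum_{i\in S} t_i} \;\le\; \frac{L\sum_{i\in S} t_i\,(b_i - \vc a_i\cdot\vc p^\star)}{\sum_{i\in S} t_i} \;\le\; O\!\big(Ln\delta\log\tfrac M\delta\big).
\]
Choosing $\delta = \Theta\!\big(\epsilon/(Ln\log(nML/\epsilon))\big)$ makes the right side at most $\epsilon$ and yields $\log(nM/\delta) = O(\log(nML/\epsilon))$, so the running time of Theorem~\ref{thm:lsw} becomes $O(nT\log(nML/\epsilon) + n^3\log^{O(1)}(nML)\log^2(nML/\epsilon))$, absorbing the $O(n)$-per-query cost of normalizing cuts and the $O(n)$ final averaging; this is the claimed bound.

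The only genuine obstacle — and the reason this is not immediate from Theorem~42 of \cite{LeeSW15} — is the inability to compare function values to select the best iterate. The resolution is the two ideas above: (i) the dual certificate of Theorem~\ref{thm:lsw}, whose near-degenerate inequality $\sum_i t_i\vc a_i\cdot\vc p \lesssim \sum_i t_i b_i$ translates into a bound on the \emph{weighted} slack of the returned cuts at $\vc p^\star$, and thereby (via the subgradient inequality) on the weighted suboptimality of the query points generating those cuts; and (ii) enlarging the box to $[-2M,2M]^n$, which forces box constraints to carry negligible certificate weight, so subgradient cuts dominate and a certificate-weighted average of their query points — computable with no function value — is $\epsilon$-optimal by convexity. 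Checking the degenerate case $\vc g_t = \vc 0$ and that the box enlargement changes running times only by constants is routine.
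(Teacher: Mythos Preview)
Your proof is correct and follows essentially the same approach as the paper: run the cutting-plane method of Theorem~\ref{thm:lsw} on an enlarged box with subgradients as the separation oracle, then use the dual certificate to take a (re-normalized) convex combination of the query points associated to subgradient cuts, arguing that box constraints carry negligible certificate weight because $\vc p^\star$ has large slack against them. The only cosmetic difference is the enlargement factor---you use $[-2M,2M]^n$ while the paper uses $[-n^{O(1)}M,\,n^{O(1)}M]^n$---and your per-constraint bound on the $t_i$ for box faces can in fact be replaced by the cleaner observation that $M\sum_{i\notin S} t_i \le \sum_i t_i(b_i - \vc a_i\cdot\vc p^\star) \le O(n\delta\log(M/\delta))$ directly, so the ``$O(1)$ such constraints'' remark is not actually needed.
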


\begin{proof}
  Let $K = \argmin_{\vc{p}} f(\vc{p})$ and use the subgradient as the separation
  oracle, since if $\partial f(\vc{p}_t)$ is the subgradient at $\vc{p}_t$ we
  know that for all $\vc{p} \in K$, it holds that $\partial f(\vc{p}_t) \cdot
  (\vc{p} - \vc{p}_t) \leq 0$. What we will do is to run the algorithm in
  Theorem \ref{thm:lsw} starting from a very large box $[-M', M']$
  for $M'= n^{O(1)} M$ instead of starting from $[-M, M]$, which appears to be more
  natural. The reason we do that is to avoid having the constraints defining
  the bounding box added to $P$.

  Either the algorithm will return a point $\vc{p} \in K$, in which case we are
  done or will return a set $P$ like in statement of the theorem. If we are
  lucky and all constraints added to $P$ are of the type $\partial f(\vc{p}_t)
  \cdot \vc{p} \leq \partial f(\vc{p}_t) \cdot \vc{p}_t$, then we can use the
  certificate $t_1, \hdots, t_k$ to produce a point $\bar{\vc{p}} = \sum_{i=1}^k
  t_i \vc{p}_i$. Now:
  $$\begin{aligned}
    f(\bar{\vc{p}}) - f(\vc{p}^*) & \leq \sum_i t_i f(\vc{p}_i) - f(\vc{p}^*) \leq
  \sum_i \partial f(\vc{p}_i) \cdot (\vc{p}_i - \vc{p}^*) \leq L \cdot \left[
  \norm{\sum_i t_i \vc{a}_i} \cdot \norm{\vc{p}^*} + \abs{\sum_i t_i b_i} \right] \\
    & \leq L \cdot \left[ O\left( \frac{\delta}{M'} \sqrt{n} \log\left(
    \frac{M'}{\delta} \right)\right) \cdot M' \sqrt{n} +  O\left( n \delta
    \log\left( \frac{M'}{\delta} \right)\right)\right]  = \\
    & = O\left( n \delta L \log\left( \frac{M'}{\delta} \right)\right)
  \end{aligned}$$
  Then setting $\delta$ such that $\epsilon = O\left( n \delta L
  \log\left( \frac{M'}{\delta}   \right)\right)$ gives us the desired result.

  To be done, we just need to worry about the case where some of the box
  constraints are present in $P$. In that case, we argue that the weight put by
  the coefficients $t_i$ on the box constraints must be tiny, so it is possible
  to remove those and rescale $t$. Formally, observe that
  $$\sum_i t_i (b_i - \vc{a}_i \cdot \vc{p}^*) \leq \abs{\sum_i t_i b_i} +
  \norm{\sum_i t_i \vc{a}_i}_2 \cdot \norm{\vc{p}^*}_2 = O\left(n \delta
  \log\left( \frac{M'}{\delta} \right) \right) $$
  Now, if $i$ is an index corresponding to a box constraint such as $p_i \leq
  M'$ or $p_i \geq -M'$ then $b_i - \vc{a}_i \cdot \vc{p}^* \geq \abs{M' - M} =
  \Omega(n^{O(1)}M)$. This implies in particular that:
  $$ O\left(n \delta
      \log\left( \frac{M'}{\delta} \right) \right) \geq \sum_i t_i (b_i -
      \vc{a}_i \cdot \vc{p}^*)  \geq  t_i (b_i -
              \vc{a}_i \cdot \vc{p}^*) \geq t_i \Omega(n^{O(1)}M)$$
  so $t_i \leq O\left( \frac{n^{-O(1)} \delta }{M} \log \left( \frac{M'}{\delta}
  \right) \right)$. By choosing a very small $\delta$, say $\delta = O\left(
  \frac{ \epsilon }{ L n^{O(1)} M^{O(1)}}\right)$ we guarantee that is $B$ is the
  set of indices corresponding to box constraints, then $\sum_{i \in B} t_i \leq
  O\left( \frac{ \epsilon }{ n^{O(1)} M^{O(1)}} \right) \leq \frac{1}{2}$ and
  that $\sum_i t_i (b_i - \vc{a}_i \cdot \vc{p}^*) \leq O\left( \frac{ \epsilon
  }{ L n^{O(1)} M^{O(1)}} \right)$. 
  This allows us to define for $i \notin B$, $t'_i = t_i / \left( 1 - \sum_{i \in B}
  t_i \right)$, so we have:
  $$\begin{aligned}
  \frac{\epsilon}{2L} \geq O\left( \frac{ \epsilon }{ L n^{O(1)} M^{O(1)}} \right) & \geq \sum_i t_i (b_i -
  \vc{a}_i \cdot \vc{p}^*) \geq  \sum_{i \notin B} t_i (b_i - \vc{a}_i \cdot
  \vc{p}^*)  \\ & = \left(1-\sum_{i \in B} t_i \right) \left( \sum_{i \notin B} t'_i
  (b_i - \vc{a}_i \cdot \vc{p}^*) \right)  \geq \frac{1}{2} 
  \left( \sum_{i \notin B} t'_i (b_i - \vc{a}_i \cdot \vc{p}^*) \right) 
  \end{aligned}$$
  Now we can repeat the argument in the beginning of the proof with
  $\bar{\vc{p}} = \sum_{i \notin B} t'_i \vc{p}_i$.
\end{proof}

\paragraph{Perturbation, Approximation and Rounding}

The next step is to use the algorithm for finding an approximate solution to
find an exact solution. This is impossible for generic convex
programs, but since the function we are trying to optimize comes from
a linear program, we can do that by exploiting this connection. As done for
linear programs, we will do this in three steps: first we perturb the function
 to be optimized, then we find an approximate solution by Theorem \ref{thm:convex_opt} and finally we
round it to an exact solution in the format of the optimal
solution given by Lemma \ref{lemma:vertices}.

We can perturb the objective of the dual linear program by changing it to:
\begin{equation}\label{perturbed_dual_program}\tag{PD}
\min u + \vc{p} \cdot (\vc{s} + \vc{r}) \text{ s.t. } u \geq \sum_i
v_i(x^{(i)}) - \vc{p} \cdot x^{(i)}, \forall x^{(i)} \in \dbracket{s}
\end{equation}
We want to specify the vector $\vc{r}$ in such a way that the optimal solution
to this linear program is still the optimal solution to the original program,
but also in a way that the optimal solution becomes unique. First we observe the
following:

\begin{lemma}\label{lemma:small_perturbation}
If $r_i < (nS)^{-(2n+1)}$ then an optimal solution of the perturbed
program is also an optimal solution to the original program.
\end{lemma}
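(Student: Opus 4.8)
The plan is to exploit the fact, established in \lemmaref{lemma:vertices}, that the polytope $P$ of Walrasian prices (equivalently, the optimal face of the dual LP) has vertices whose coordinates are rationals $a_i/b_i$ with denominators bounded by $(Sn)^n$. The key quantity is the \emph{optimality gap}: the difference between the dual optimum and the second-best value of the dual objective over the \emph{vertices} of the feasible polyhedron that are \emph{not} optimal. I would argue that this gap is not too small — specifically, bounded below by something like $(nS)^{-O(n)}$ — because both the optimal and suboptimal vertex values are rationals with denominators at most $(Sn)^n$ (by Cramer's rule, exactly as in \lemmaref{lemma:vertices}), so their difference, if nonzero, is at least the reciprocal of the product of the two denominators, i.e.\ at least $(Sn)^{-2n}$.

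First I would set up the comparison carefully: let $(\vc{p}^*, u^*)$ be an optimal basic feasible solution of the original dual \eqref{dual_program}, with objective value $\mathrm{OPT} = u^* + \vc{p}^* \cdot \vc{s}$, and let $(\vc{p}', u')$ be any basic feasible solution that is \emph{not} optimal. Then $\delta := (u' + \vc{p}' \cdot \vc{s}) - \mathrm{OPT} > 0$, and by the denominator bound each of the two objective values is a rational with denominator at most roughly $(Sn)^n$ (one needs a small amount of care since $u$ enters too, but $u_i = \max_x v_i(x) - \vc{p}\cdot x$ is a sum of integers and integer multiples of the $p_j$, so the denominator of $u$ is controlled by the same determinant), hence $\delta \geq (nS)^{-2n}$ or so. Now in the perturbed program \eqref{perturbed_dual_program} the objective of any vertex changes by $\vc{p} \cdot \vc{r}$. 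Using \lemmaref{lemma:initial_box} to bound $\norm{\vc{p}}_\infty \leq 2M$ over the relevant region and the hypothesis $r_i < (nS)^{-(2n+1)}$, the perturbation $|\vc{p}\cdot\vc{r}| \leq n \cdot 2M \cdot (nS)^{-(2n+1)}$ is strictly smaller than $\delta/2$ (one should double-check the exponent arithmetic; $M$ is at most roughly $S^n$ times a constant in the worst case, so the exponent $2n+1$ in the hypothesis may need to be a bit larger, or one invokes a tighter bound — this is the spot where the constants must be nailed down). Therefore a vertex that was strictly suboptimal for the original objective remains strictly worse than $\vc{p}^*$ for the perturbed objective, so no suboptimal vertex can become the perturbed optimum; since the perturbed LP attains its optimum at a vertex, its optimum is attained at a vertex of the original optimal face $P$, which is in particular an optimal solution of the original program.

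The main obstacle I anticipate is bookkeeping the magnitude of $M$ and the denominators precisely enough that the single exponent $2n+1$ in the lemma statement actually suffices — in particular, ensuring that $M$ (which could be exponentially large in $n$ a priori) is absorbed correctly, and that the denominator of the \emph{value} $u^* + \vc{p}^*\cdot\vc{s}$, not just of $\vc{p}^*$, obeys the $(Sn)^n$-type bound. If $M$ is genuinely uncontrolled one would instead phrase the hypothesis relative to $M$, or note that $\vc{p}^*$ itself already has bounded denominators so $u^* + \vc{p}^*\cdot\vc{s}$ does too regardless of the size of $M$; the perturbation bound then only needs $\norm{\vc{p}}_\infty$ bounded, which \lemmaref{lemma:initial_box} supplies. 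A secondary subtlety is that the perturbed program must still be \emph{feasible} and \emph{bounded} below — feasibility is immediate (same constraints), and boundedness follows because the original dual is bounded and $\vc{r}$ is tiny — but this should be remarked on. Everything else is a routine gap-versus-perturbation estimate.
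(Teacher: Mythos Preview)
Your approach is essentially the same as the paper's: bound the denominators of vertex coordinates via \lemmaref{lemma:vertices}, deduce that the gap between the optimal and any suboptimal vertex value is at least $(nS)^{-2n}$, and then argue the perturbation $\vc{p}\cdot\vc{r}$ is too small to flip the order.

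One difference worth noting: the paper does \emph{not} invoke \lemmaref{lemma:initial_box} or bring $M$ into the bound on the perturbation. It simply asserts that the effect of $\vc{r}$ on the objective is at most $nS\cdot (nS)^{-(2n+1)} \leq (nS)^{-2n}$, without explaining where the factor $nS$ (as a bound on $\sum_j |p_j|$ or similar) comes from. Your instinct that a bound on $\norm{\vc{p}}_\infty$ is needed here is correct, and your worry that this could drag $M$ into the estimate is well-founded --- the paper's proof is arguably loose on exactly this point. So rather than a gap in your argument, what you have flagged is a place where the paper itself is sloppy; your version is more honest about the bookkeeping even if the stated exponent $2n+1$ may not survive a fully rigorous accounting with $M$.
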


\begin{proof} 
Let $\C$ be the set of vertices (basic feasible solutions) of the dual LP.
We know by Lemma \ref{lemma:vertices} we know that the coordinates of
those vertices must be of the form $a_i / b_i$ for integers $a_i, b_i$ such
that $0 \leq b_i \leq (nS)^n$. For any linear objective function, the optimal solution
must be a point in $\C$. Since the original objective has integral coefficients,
when evaluated on any vertex, the objective is a fraction with denominator $b_i
\leq (nS)^n$. Therefore, the difference between the objective evaluated at an
optimal vertex and the objective evaluated at a suboptimal one is at
least $\abs{\frac{a_i}{b_i}-\frac{a'_i}{b'i}} = \abs{\frac{a_i b'_i -
a'_i b_i}{b_i b'_i}} \geq \frac{1}{(nS)^{2n}}$. Therefore, if $r_i <
(nS)^{-(2n+1)}$, then its effect in the objective function is at most $nS \cdot
(nS)^{-(2n+1)} \leq (nS)^{-2n}$, so it can't cause a suboptimal solution to
the original program to become an optimal solution of the perturbed program.
\end{proof}

Our final ingredient is a lemma by Klivans and Spielman
\cite{klivans2001randomness} which is in the spirit
of the Isolation Lemma of Mulmuley, Vazirani and Vazirani \cite{IsolationLemma}.

\begin{lemma}[Klivans and Spielman (Lemma 4 in
\cite{klivans2001randomness})]\label{lemma:ks}
Let $\C$ be a set of points in $\R^n$ where all coordinates assume at
most $K$ distinct values. Then if $\vc{r}$ is a vector with coordinates sampled
at random from $\{0, 1, \hdots, Kn/\epsilon\}$, then with probability
$1-\epsilon$, there is a unique $\vc{p} \in \C$ minimizing $\vc{r} \cdot
\vc{p}$.
\end{lemma}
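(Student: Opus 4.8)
The plan is to replay the argument behind the Isolation Lemma of Mulmuley, Vazirani and Vazirani~\cite{IsolationLemma}, but phrased for points with few distinct coordinate values rather than for set systems. I would assume throughout that $\C$ is finite (which is all that is needed in our application, where $\C$ is the vertex set of a polytope), so that every minimum below is attained.

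First I would reduce non-uniqueness to a statement about a single coordinate. Call a coordinate $j \in [n]$ \emph{ambiguous} for a weight vector $\vc{r}$ if the set $\argmin_{\vc{p}\in\C}\vc{r}\cdot\vc{p}$ contains two points whose $j$-th coordinates differ. If the minimizer of $\vc{r}\cdot\vc{p}$ over $\C$ is not unique, then two distinct minimizers exist and they disagree in some coordinate $j$, so that $j$ is ambiguous. Hence $\Pr[\text{minimizer not unique}] \le \sum_{j=1}^n \Pr[j \text{ ambiguous}]$, and it suffices to show each term is at most $\epsilon/n$.

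The core of the proof is to bound $\Pr[j \text{ ambiguous}]$ by a deferred-decisions argument: fix $j$, expose all other coordinates $r_{j'}$ with $j' \ne j$, and leave only $r_j$ random. Group the points of $\C$ into classes $\C_c$ according to the value $c$ of their $j$-th coordinate; by hypothesis there are at most $K$ classes, carrying pairwise distinct labels $c$. Writing $g(\vc{p}) := \sum_{j'\ne j} r_{j'} p_{j'}$, which is now a constant on each point, we have $\min_{\vc{p}\in\C_c}\vc{r}\cdot\vc{p} = m_c + c\,r_j$ with $m_c := \min_{\vc{p}\in\C_c} g(\vc{p})$, so the optimal value $\min_{\vc{p}\in\C}\vc{r}\cdot\vc{p} = \min_c (m_c + c\,r_j)$ is, as a function of $r_j$, the lower envelope of at most $K$ affine functions with pairwise distinct slopes. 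Coordinate $j$ is ambiguous exactly when this envelope is simultaneously attained by two of these functions, i.e.\ at one of its breakpoints; a pointwise minimum of $K$ lines with distinct slopes is concave and piecewise linear with at most $K-1$ breakpoints, so there are at most $K-1$ values of $r_j$ for which $j$ is ambiguous, and that set of values is determined by the already-exposed coordinates. Since $r_j$ is uniform over the $Kn/\epsilon + 1$ elements of $\{0,1,\dots,Kn/\epsilon\}$, this yields $\Pr[j \text{ ambiguous} \mid (r_{j'})_{j'\ne j}] \le (K-1)/(Kn/\epsilon+1) \le \epsilon/n$, hence $\Pr[j \text{ ambiguous}] \le \epsilon/n$ unconditionally.

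Summing over the $n$ coordinates then gives $\Pr[\text{minimizer not unique}] \le \epsilon$, as desired. The main obstacle is the middle step: one has to isolate a single weight coordinate, notice that the objective restricted to each level set of that coordinate is affine in the remaining weight so that the value of the whole problem collapses to a lower envelope of at most $K$ lines, and then correctly count that such an envelope has at most $K-1$ breakpoints — hence at most $K-1$ ``dangerous'' values of $r_j$ — which is exactly what makes the range $\{0,\dots,Kn/\epsilon\}$ the right size. The reduction to an ambiguous coordinate and the final union bound are routine once this is in place.
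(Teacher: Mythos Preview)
The paper does not give its own proof of this lemma; it is quoted as Lemma~4 of Klivans and Spielman~\cite{klivans2001randomness}, with the remark that their argument only uses that each coordinate of points in $\C$ takes at most $K$ distinct values. Your proof is correct and is precisely the deferred-decisions/Isolation-Lemma argument that underlies the cited result: fix all weights but $r_j$, bucket $\C$ by the $j$-th coordinate into at most $K$ classes, observe that the optimum as a function of $r_j$ is the lower envelope of at most $K$ lines with pairwise distinct slopes, and bound the number of breakpoints by $K-1$. There is nothing to compare against in the paper itself, and your write-up matches the standard proof.
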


We note that although Lemma 4 in \cite{klivans2001randomness} is stated as $\C$
having coordinates in $\{0, 1, \hdots, K-1\}$, the proof only uses the fact that
the coordinates of $\C$ assume at most $K$ distinct values.

\begin{lemma}\label{lemma:good_perturb}
If $r_j = z_j / \left[ M n (nS)^{2n+1}\right]$ where $z_j$ is drawn uniformly
from $\{0, \hdots, n M(nS)^{2n} - 1\}$, then with  probability $\frac{1}{2}$, the dual
program has an unique minimizer and this minimizer is a minimizer of the
original program.
\end{lemma}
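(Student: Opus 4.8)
The plan is to combine Lemma~\ref{lemma:small_perturbation} (which guarantees that a sufficiently small perturbation does not change the optimal solution) with Lemma~\ref{lemma:ks} (the Klivans–Spielman isolation lemma, which guarantees that a random perturbation isolates a unique minimizer). The perturbation vector $\vc{r}$ in the statement is of the form $r_j = z_j / \left[ M n (nS)^{2n+1}\right]$ with $z_j$ drawn uniformly from $\{0, \hdots, n M(nS)^{2n} - 1\}$, so the first thing to check is that $r_j$ lands in the window required by Lemma~\ref{lemma:small_perturbation}, namely $r_j < (nS)^{-(2n+1)}$. Indeed, the largest value $z_j$ can take is $n M(nS)^{2n} - 1 < n M (nS)^{2n}$, so $r_j < \frac{n M (nS)^{2n}}{M n (nS)^{2n+1}} = \frac{1}{nS} \cdot \frac{1}{(nS)^{2n}}$; actually we need only $r_j < (nS)^{-(2n+1)}$ and $r_j \le \frac{n M (nS)^{2n}}{M n (nS)^{2n+1}} = (nS)^{-1}\cdot(nS)^{-2n}$ — wait, let me be careful: $\frac{1}{nS}(nS)^{-2n} = (nS)^{-(2n+1)}$, so $r_j < (nS)^{-(2n+1)}$ holds. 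Hence by Lemma~\ref{lemma:small_perturbation}, \emph{any} optimal solution of the perturbed program (PD) is also optimal for the original dual (D); so whatever minimizer we end up isolating will automatically be a minimizer of the original program.

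It remains to argue that with probability at least $\tfrac12$ the perturbed objective has a \emph{unique} minimizer. The key observation is that minimizing the perturbed dual is equivalent to picking, among the vertices of the dual polytope that are optimal for the original objective, the one minimizing the linear functional $\vc{r}\cdot\vc{p}$: the perturbation is small enough (by the previous paragraph) that the set of candidate optimizers is exactly the face $\C^*$ of original optima, and on that face the original objective is constant, so ties are broken purely by $\vc{r}\cdot\vc{p}$. Now I apply Lemma~\ref{lemma:ks} to the set $\C^*$ (or, more conservatively, to the full vertex set $\C$): by Lemma~\ref{lemma:vertices} each coordinate of a vertex has the form $a_j/b_j$ with $|b_j| \le (nS)^n$, so after clearing denominators the coordinates take at most $K = O(M (nS)^{2n})$ — more precisely at most $2\cdot 2M\cdot (nS)^n$ many — distinct values; scaling the $z_j$ range to $\{0,\hdots, nM(nS)^{2n}-1\}$ matches the $\{0,\hdots,Kn/\epsilon\}$ requirement of Lemma~\ref{lemma:ks} with $\epsilon = \tfrac12$, yielding a unique minimizer of $\vc{z}\cdot\vc{p}$ (equivalently $\vc{r}\cdot\vc{p}$) over $\C^*$ with probability $\ge \tfrac12$.

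The main technical care — and the step I'd expect to be the only nontrivial one — is bookkeeping the two estimates consistently: the granularity of $\vc{r}$ must simultaneously be (i) fine enough (small enough $r_j$) that Lemma~\ref{lemma:small_perturbation} applies, and (ii) coarse enough (enough distinct values of $z_j$ available) that Lemma~\ref{lemma:ks} applies with failure probability $\tfrac12$, using the correct value of $K$ derived from Lemmas~\ref{lemma:vertices} and~\ref{lemma:initial_box}. Both constraints are linear in the exponents of $(nS)$ and $M$, so the choice $r_j = z_j/\left[Mn(nS)^{2n+1}\right]$ is designed precisely to satisfy both; I would verify the two inequalities explicitly and note that the hidden constants from Lemma~\ref{lemma:ks} (which is stated up to the $Kn/\epsilon$ range) can be absorbed, since $M \ge 1$ and $nS \ge 1$ give plenty of slack. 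One should also remark, as the authors do after Lemma~\ref{lemma:ks}, that what is used is only that the coordinates of $\C$ take at most $K$ \emph{distinct} values, not that they lie in $\{0,\hdots,K-1\}$ — which is what lets us pass from the rational coordinates of Lemma~\ref{lemma:vertices} to an integer bound $K$.
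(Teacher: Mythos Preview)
Your approach is essentially identical to the paper's: verify the perturbation satisfies the hypothesis of Lemma~\ref{lemma:small_perturbation}, bound the number of distinct coordinate values of the dual vertices via Lemmas~\ref{lemma:vertices} and~\ref{lemma:initial_box}, and then invoke Lemma~\ref{lemma:ks} with $\epsilon=\tfrac12$. The only difference is presentational---your write-up contains in-line self-corrections (``wait, let me be careful'') that should be cleaned up, and you could state the distinct-value count $K$ once rather than hedging between $\C^*$ and $\C$---but the mathematical content matches the paper's proof.
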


\begin{proof}
Let $\C$ be the set of vertices (basic feasible solutions) of the dual linear
program in $[-M, M]^n$. All the minimizers of the original dual program are
guaranteed to be in this set because of Lemma \ref{lemma:initial_box}. Lemmas
\ref{lemma:initial_box} and \ref{lemma:vertices} tell us that the coordinates of
points in $\C$ are of the form $a_i / b_i$ where $a_i$ and $b_i$ are
integers such that $1 \leq b_i \leq (nS)^n$ and $\abs{a_i} \leq M b_i$.
So there are at most $\sum_{b=1}^{(nS)^n} b \cdot 2M \leq M (nS)^n$
coordinates.

Now, Lemma \ref{lemma:small_perturbation} guarantees that the magnitude of the
perturbation will prevent suboptimal vertices in the original program to become
optimal vertices in the perturbed program. Lemma \ref{lemma:ks} guarantees that
with half probability the vertex minimizing the objective is unique.\end{proof}

The perturbed dual program translates to the following perturbed objective
function:

\begin{equation}\label{perturbed_convex_f}\tag{PC}
\hat{f}(\vc{p}) =  \sum_{i=1}^{m}\left(\max_{x\in\dbracket{\vc{s}}}v_{i}(x)-\vc{p}\cdot
  x\right)+\vc{p} \cdot (\vc{r} + \vc{s}).
\end{equation}

Since the subgradient of $\hat{f}$ can be computed from the aggregate demand
oracle $\partial \hat{f}(\vc{p}) = \vc{s} + \vc{r} - \vc{d}(\vc{p})$ we can use
Theorem \ref{thm:convex_opt} to find an $\epsilon$-minimizer.

\begin{lemma}\label{lemma:cloness_guarantee}
For $\epsilon = (nMS)^{-O(n)}$, if $\vc{p}$ is an $\epsilon$-approximation to the
optimal value of $\hat{f}$, i.e., $\hat{f}(\vc{p}) - \hat{f}(\vc{p}^*) \leq
  \epsilon$ then the optimal solution is the only point $\vc{p^*}$ such that $\norm{\vc{p} -
  \vc{p^*}}_2 \leq \frac{1}{(nMS)^{O(n)}}$ and has coordinates of the form
  described in Lemma \ref{lemma:vertices}.
\end{lemma}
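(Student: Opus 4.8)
The plan is to first prove a \emph{sharpness} bound for $\hat f$ at its unique minimizer $\vc p^*$ — a constant $\gamma\ge(nMS)^{-O(n)}$ with $\hat f(\vc p)-\hat f(\vc p^*)\ge\gamma\smallnorm{\vc p-\vc p^*}_2$ for all $\vc p$ — and then read the lemma off from it. Granting this, an $\epsilon$‑approximate minimizer $\vc p$ satisfies $\smallnorm{\vc p-\vc p^*}_2\le\epsilon/\gamma$, so picking $\epsilon$ of the form $(nMS)^{-O(n)}$ small enough that $\epsilon/\gamma\le\frac{1}{(nMS)^{O(n)}}$ and also $\epsilon/\gamma<\tfrac12(nS)^{-2n}$ (possible since $\gamma\ge(nMS)^{-O(n)}$, at the cost of a larger constant in the exponent) places $\vc p^*$ in the required ball. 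Uniqueness of the rounding target follows from Lemma~\ref{lemma:vertices}: each coordinate of the prescribed form is a rational $a/b$ with $1\le|b|\le(nS)^n$, so two distinct such rationals differ by at least $(nS)^{-2n}$; hence two distinct points both of the prescribed form are at $\ell_2$‑distance at least $(nS)^{-2n}>2\epsilon/\gamma$, and the ball of radius $\epsilon/\gamma$ about $\vc p$ can contain at most one of them. Finally $\vc p^*$ is itself of the prescribed form: by Lemma~\ref{lemma:good_perturb} it minimizes the unperturbed $f$, hence $\vc p^*\in P$; and as the \emph{unique} minimizer of $\hat f=f+\langle\vc r,\cdot\rangle$, with $f$ constant on $P$, it is the unique minimizer of the linear functional $\langle\vc r,\cdot\rangle$ over the polytope $P$, hence a vertex of $P$, to which Lemma~\ref{lemma:vertices} applies.

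The remaining and main task is the sharpness bound. By convexity, for $\vc p\neq\vc p^*$ and $\vc v=(\vc p-\vc p^*)/\smallnorm{\vc p-\vc p^*}_2$ we have $\hat f(\vc p)-\hat f(\vc p^*)\ge\smallnorm{\vc p-\vc p^*}_2\cdot \hat f'(\vc p^*;\vc v)$, so it suffices to bound the directional derivative $\hat f'(\vc p^*;\vc v)\ge\gamma$ uniformly over unit $\vc v$. By the Envelope Theorem (Theorem~\ref{thm:envelope}), exactly as in Lemma~\ref{lem:subgradient}, $\partial\hat f(\vc p^*)=\vc s+\vc r-\mathrm{conv}(\mathcal D)$, where $\mathcal D=\sum_i D(i,\vc p^*)$ is the set of aggregate demands realizable at $\vc p^*$, and $\hat f'(\vc p^*;\vc v)=\max_{\vc g\in\partial\hat f(\vc p^*)}\langle\vc g,\vc v\rangle$. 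Since $\vc p^*$ is the \emph{unique} minimizer of $\hat f$, $0$ lies in the interior of the polytope $\partial\hat f(\vc p^*)$, and $\min_{\smallnorm{\vc v}_2=1}\hat f'(\vc p^*;\vc v)$ equals the distance from $0$ to the nearest facet hyperplane of $\partial\hat f(\vc p^*)$ (the largest radius of a ball centered at $0$ contained in it). It therefore remains only to bound this distance below by $(nMS)^{-O(n)}$.

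The key observation — and the reason no dependence on $m$ appears, even though $\mathcal D$ can contain points with coordinates of order $mS$ — is that $\mathrm{conv}(\mathcal D)=\sum_i\mathrm{conv}(D(i,\vc p^*))$ is a Minkowski sum of polytopes each contained in $\prod_j[0,s_j]$. The affine hull of any facet of such a sum is a translate of a span of edge directions of the summands, i.e.\ of integer vectors of $\ell_\infty$‑norm at most $S$; choosing $n-1$ of them as a basis and taking the integer cross product, every facet of $\mathrm{conv}(\mathcal D)$, hence of $\partial\hat f(\vc p^*)$, has an integer normal $\vc a$ with $\smallnorm{\vc a}_2\le\sqrt n\,(n-1)!\,S^{n-1}=(nS)^{O(n)}$. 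Writing such a facet of $\partial\hat f(\vc p^*)$ as $\{\vc y:\langle\vc a,\vc y\rangle=c\}$, one gets $c=\langle\vc a,\vc s\rangle+\langle\vc a,\vc r\rangle-c'$ with $\langle\vc a,\vc s\rangle,c'\in\Z$ ($c'$ being the value of $\langle\vc a,\cdot\rangle$ on the corresponding lattice facet of $\mathcal D$); since by Lemma~\ref{lemma:good_perturb} each $r_j$ is an integer multiple of $1/D$ with $D=Mn(nS)^{2n+1}$, the number $c$ is a nonzero integer multiple of $1/D$, so $|c|\ge1/D$. Hence the distance from $0$ to this facet is $|c|/\smallnorm{\vc a}_2\ge 1/(D\cdot(nS)^{O(n)})=(nMS)^{-O(n)}$; minimizing over facets yields $\gamma$. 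The two steps to watch are verifying that a facet normal of the Minkowski sum is controlled by edge directions of the individual demand polytopes (so its size depends only on $n$ and $S$, not $m$), and that the chosen perturbation has a small enough common denominator that the facet offsets $c$ stay bounded away from $0$; everything else is bookkeeping.
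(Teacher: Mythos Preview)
Your proof is correct and takes a genuinely different route from the paper's. The paper argues via the LP: it writes the approximate minimizer $(u,\vc p)$ as a convex combination of vertices $(u_i,\vc p_i)$ of the dual feasible region, observes that the perturbed objective at any non-optimal vertex exceeds $\hat f^*$ by at least roughly $(Mn(nS)^{3n+1})^{-1}$, and from $\hat f(\vc p)\le\hat f^*+\epsilon$ reads off that the weight $t_*$ on the optimal vertex is $\ge 1-(nMS)^{-O(n)}$, whence the distance bound. Your approach instead establishes a linear growth (sharpness) inequality $\hat f(\vc p)-\hat f(\vc p^*)\ge\gamma\,\smallnorm{\vc p-\vc p^*}_2$ by lower-bounding the directional derivatives at $\vc p^*$; the key geometric step is that $\partial\hat f(\vc p^*)=\vc s+\vc r-\sum_i\mathrm{conv}\,D(i,\vc p^*)$ is (a translate of) a Minkowski sum of lattice polytopes in $\prod_j[0,s_j]$, so each facet is spanned by integer edge vectors of $\ell_\infty$-norm $\le S$ and hence has an integer normal of size $(nS)^{O(n)}$, while the offset is a positive multiple of $1/D$ with $D=Mn(nS)^{2n+1}$.

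What each buys: the paper's argument is shorter and more elementary, needing only the vertex description of the dual polytope and one convex-combination inequality. Your argument is more intrinsic to the convex function $\hat f$ and gives a clean structural explanation for why the bound is independent of $m$ (facet normals of the Minkowski sum are governed by edge directions of the individual demand polytopes, not their number), at the price of invoking the Minkowski-sum face decomposition and the directional-derivative/subdifferential formula. Both arrive at the same $(nMS)^{-O(n)}$ scale; your verification that $\vc p^*$ is a vertex of $P$ (unique linear minimizer over a polytope) is cleaner than the paper's implicit appeal to LP vertices.
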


\begin{proof}
  Let $\C$ be the set of vertices of the dual linear program, which are points of
  the form $(u_i, \vc{p}_i)$ and let $\vc{p}$ be an $\epsilon$-approximation to
  $\hat{f}$. Now, if $u = \sum_i \left(\max_{x} v_i(x)-\vc{p}\cdot x\right)$
  then $(u, \vc{p})$ is feasible in the dual linear program, and in particular,
  it can be written as a convex combination of points in $\C$, i.e., $(u,\vc{p}) =
  \sum_i t_i (u_i, \vc{p}_i)$ for $t_i \geq 0$ and $\sum_i t_i = 1$. There is
  only vector in $\C$, call it $(u^*, \vc{p}^*)$ for which the objective
  evaluate to $\hat{f}^*$. For all other vertices, the objective evaluate to at
  least $\hat{f}^* + \frac{1}{M n (nS)^{2n+1}} \cdot \frac{1}{(nS)^n} =
  \hat{f}^* + \frac{1}{M n (nS)^{3n+1}}$ due to Lemmas \ref{lemma:vertices} and
  \ref{lemma:good_perturb}. Therefore, by evaluating  $(u,\vc{p}) =
  \sum_i t_i (u_i, \vc{p}_i)$ on the linear objective of the perturbed dual
  program, we get:
  $$\hat{f}^* + \epsilon \geq \hat{f}(\vc{p}) \geq t_* \hat{f}^* + (1-t_*)
  \left( \hat{f}^* + \frac{1}{M n (nS)^{3n+1}} \right)$$
  where $t_*$ is the weight put on $(u^*, \vc{p}^*)$ by the convex combination,
  therefore, if $\epsilon = (nMS)^{-O(n)}$ then $t^* \geq 1 - (nMS)^{-O(n)}$. In
  particular:
  $$\norm{\vc{p} - \vc{p}^*}_2 \leq \norm{\sum_{i \neq *} t_i (\vc{p}_i -
  \vc{p}^*)}_2 \leq (1-t_*) \cdot \max_{i \neq *} \norm{\vc{p}_i - \vc{p}^*}
  \leq (nMS)^{-O(n)}$$
  Therefore, $\vc{p}^*$ is in a ball of radius $(nMS)^{-O(n)}$ around $\vc{p}$.
  Also, for any other point $\vc{p}' \neq \vc{p}^*$ with coordinates of the form
  $a_i / b_i$ with $b_i \leq (nS)^n$ can be in this ball, since for two distinct
  points of this type differ in at least one coordinate by at least $(nS)^{-n}$
  so their distance is at least that much.
\end{proof}

Putting it all together:

\begin{theorem}\label{thm:algo_we_general}
  There is an algorithm of running time $O(n^2 T_{AD} \log(SMn) + n^5
  \log^{O(1)}(SMn))$ to compute an exact vector of Walrasian prices whenever it
  exists using only access to an aggregate demand oracle.
\end{theorem}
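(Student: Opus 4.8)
The plan is to run, on the market potential $f$ of \eqref{convex_f}, the three-stage pipeline built above: a randomized perturbation that makes the minimizer unique while keeping it a Walrasian price; the subgradient-only cutting-plane method of Theorem~\ref{thm:convex_opt} to produce an exponentially accurate approximate minimizer; and a rational-rounding step that snaps this to the exact optimum. For the first stage I would draw $\vc{r}$ as in Lemma~\ref{lemma:good_perturb} and condition on the probability-$\ge\tfrac12$ event that the perturbed potential $\hat f$ of \eqref{perturbed_convex_f} has a \emph{unique} minimizer $\vc{p}^*$; since Lemma~\ref{lemma:small_perturbation} guarantees the perturbation does not move the optimum, $\vc{p}^*$ is simultaneously a minimizer of $f$, hence --- via the identification of $\arg\min f$ with the $\vc{p}$-projection of the dual optima behind Lemma~\ref{lem:eqismin} --- a genuine vector of Walrasian prices. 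It remains only to pin $\vc{p}^*$ down exactly.

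To feed $\hat f$ to the optimizer I would record that it is convex, that it has a minimizer in $[-2M,2M]^n$ (Lemma~\ref{lemma:initial_box} applied to $f$, whose minimizers contain $\vc{p}^*$), that every subgradient of $\hat f$ at $\vc{p}$ equals $\vc{s}+\vc{r}-\vc{d}(\vc{p})$ (Lemma~\ref{lem:subgradient} applied to \eqref{perturbed_convex_f}), so that $\hat f$ is $L$-Lipschitz with $L=O(\sqrt{n}\,mS)$ --- each item is demanded at most $s_j\le S$ times per buyer --- and so that its subgradient oracle costs exactly one aggregate-demand query, $T=T_{AD}$. I would then call Theorem~\ref{thm:convex_opt} on $\hat f$ with box radius $2M$ and accuracy $\epsilon=(nMS)^{-cn}$, where $c$ is the constant demanded by Lemma~\ref{lemma:cloness_guarantee}, obtaining $\bar{\vc{p}}$ with $\hat f(\bar{\vc{p}})-\hat f(\vc{p}^*)\le\epsilon$. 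Since $\log(1/\epsilon)=O(n\log(nMS))$ dominates the $O(\log(mnMS))$ coming from $\log M$ and $\log L$, the bound $O(nT\log(nML/\epsilon)+n^3\log^{O(1)}(nML)\log^2(nML/\epsilon))$ of Theorem~\ref{thm:convex_opt} collapses to $O(n^2 T_{AD}\log(SMn)+n^5\log^{O(1)}(SMn))$, using $O(n^2\log(SMn))$ oracle calls --- the claimed bound (the $\log m$ is a lower-order additive term, suppressed as the paper does).

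For the rounding stage I would use that $\vc{p}^*$, being a vertex of the dual polytope, has each coordinate $a_j/b_j$ with $|b_j|\le(nS)^n$ (Lemma~\ref{lemma:vertices}), and that by Lemma~\ref{lemma:cloness_guarantee} it is the \emph{only} point of this form within distance $(nMS)^{-O(n)}$ of $\bar{\vc{p}}$. Because distinct rationals with denominators $\le(nS)^n$ differ by at least $(nS)^{-2n}$, which (for $c$ a large enough constant) dwarfs the $(nMS)^{-O(n)}$ error, each $p^*_j$ is recovered coordinatewise as the unique best rational approximation of $\bar p_j$ with denominator at most $(nS)^n$: run the continued-fraction/Euclidean algorithm on $\bar p_j$ and take the last convergent or semiconvergent with denominator $\le(nS)^n$. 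This costs $O(n\log(nS))$ steps on $\mathrm{poly}(n,\log(nMS))$-bit numbers --- negligible beside the optimization --- and is correct provided the cutting-plane method is run, as usual, with $\mathrm{poly}(n,\log(nMS))$ bits of precision. Outputting $(p^*_1,\dots,p^*_n)$ then gives an exact Walrasian price; the whole procedure is Monte Carlo with success probability $\ge\tfrac12$, amplifiable by independent repetition.

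The step I expect to do the real work is the interface with Theorem~\ref{thm:convex_opt}: we never get to evaluate $f$ or $\hat f$, so the approximate minimizer has to be extracted from the cutting plane's \emph{dual certificate} rather than from function values, and $\epsilon$ must be driven all the way down to $(nMS)^{-\Theta(n)}$ so that exactly one denominator-bounded rational survives in the neighborhood of the optimum --- it is precisely this exponentially small $\epsilon$, through its $\Theta(n)$-size logarithm, that produces the quadratic query complexity and the $n^5$ time bound. Everything else amounts to checking the hypotheses of Theorem~\ref{thm:convex_opt} and of Lemmas~\ref{lemma:initial_box}, \ref{lemma:vertices}, \ref{lem:subgradient}, \ref{lemma:small_perturbation}, \ref{lemma:good_perturb}, \ref{lemma:cloness_guarantee}, and chaining their conclusions.
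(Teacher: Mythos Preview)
Your proposal is correct and follows essentially the same three-stage pipeline as the paper's own proof: perturb via Lemma~\ref{lemma:good_perturb}, optimize $\hat f$ with Theorem~\ref{thm:convex_opt} to accuracy $\epsilon=(nMS)^{-O(n)}$, then round via Lemma~\ref{lemma:cloness_guarantee} and continued fractions. You in fact supply more detail than the paper does (the Lipschitz bound, the explicit Monte Carlo nature, and the observation that a stray $\log m$ from $L$ is being suppressed in the stated running time), but the argument and its structure are the same.
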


\begin{proof}
  From the potential function $f$ of the market, use Lemma
  \ref{lemma:good_perturb} to construct a perturbed potential function
  $\hat{f}$. Then use Theorem \ref{thm:convex_opt} to optimize $\hat{f}$
  with $\epsilon = (nMS)^{-O(n)}$. Finally, use the guarantee to argue that the
  exact optimal value is the only point in the format of Lemma
  \ref{lemma:vertices} in the ball of radius $(nMS)^{-O(n)}$ around the
  $\epsilon$-approximate minimizer. At this point, we can round the solution to
  the exact point by using the method of continuous fractions in Kozlov et al
  \cite{kozlov1980polynomial} (see
  \cite{schrijver1998theory} for a complete exposition of this and related
  methods of rounding).
\end{proof}

\section{Walrasian Equilibrium for Gross Substitutes in $\tilde{O}(n T_{AD} +
n^3)$}

In the previous section we discussed how Walrasian equilibria can be computed
without any assumptions on the valuation function using only an aggregate demand
oracle. The focus was to address various difficulties in
applying optimization tools for this problem.

Here we remain in the aggregate demand oracle model and
focus on computing Walrasian equilbria for markets typically studied in
economics, which are those where buyers have \emph{gross substitute
valuations}. The development of the theory of gross substitute valuations is
intertwined with the development of the theory of Walrasian equilibrium for
markets with discrete goods. In particular, it is the largest class of valuation
functions that is closed under perturbation by an additive function\footnote{A class
$\mathcal{C}$ of valuation functions is closed under perturbations by additive
functions if for every $v \in \mathcal{C}$ and every vector $\vc{w} \in \R^n$, the
function $w(x) = v(x) + \vc{w} \cdot x$ is also in $\mathcal{C}$.} for which
Walrasian equilibria always exist.

\comment{
In this
section, we restrict our attention to the class of gross substitute valuations,
for which  a Walrasian equilibrium is known to always exist, as originally
showed by Kelso and Crawford \cite{KelsoCrawford}. Gul and Stachetti
\cite{GulStachetti} later show that gross substitutes are the only class of
functions closed under perturbation by an additive function\footnote{A class
$\mathcal{C}$ of valuation functions is closed under perturbations by additive
functions if for every $v \in \mathcal{C}$ and every vector $\vc{w} \in \R^n$, the
function $w(x) = v(x) + \vc{w} \cdot x$ is also in $\mathcal{C}$.} for which
Walrasian equilibrium exists. Recently, other classes of functions were shown to
have Walrasian equilibria \cite{SunYang2006, ozan14, ozan15, BenZwi}.
}

Gross substitutes play a central role in economics. Hatfield and
Milgrom \cite{HatfieldMilgrom} argue that most important examples of valuation
functions arising in matching markets belong to the class of gross substitutes.
Gross substitutes have been used to guarantee the convergence of salary
adjustment processes in the job market \cite{KelsoCrawford},
to guarantee the existence of stable matchings
in a variety of settings \cite{Roth84, Kojima}, to show the stability of trading
networks \cite{HatfieldKNOW15}, to design combinatorial auctions
\cite{AusubelMilgrom,Shioura13} and even to analyze settings with
complementarities \cite{SunYang2006,HatfieldK15}.
  
The concept of gross substitutes has been re-discovered in different areas from
different perspectives: Dress and Wenzel \cite{DressWenzel92} propose the
concept of \emph{valuated matroids} as a generalization to the
Grassmann-Pl\"{u}cker relations in $p$-adic analysis. Dress and Terhalle
\cite{DressTerhalle_WellLayered} define the concept of \emph{matroidal maps}
which are the exact class of functions that can be optimized by greedy
algorithms. Murota \cite{Murota96} generalized the concept of convex functions
to discrete lattices, which gave birth to the theory known as Discrete Convex
Analysis. One of the central objects in the Discrete Convex Analysis are
$M$-concave and $M^\natural$-concave functions (the latter class was introduced by
Murota and Shioura \cite{MurotaShioura}).

Surprisingly, gross substitutes, valuated matroids, matroidal maps and
$M^\natural$-concave functions are all equivalent definitions -- despite being
originated from very different motivations. We refer to \cite{gs_survey} for a
detailed historic description and a survey on their equivalence.\\

Before presenting our algorithms, we first give a quick summary of
the standard facts about gross substitutes that are needed. For a more comprehensive
introduction, please see \cite{gs_survey}.

\subsection{A crash course on gross substitutes}

First we define gross substitute valuations on the hypercube $\{0,1\}^{[n]}$ and then
we extend the defintion to $\dbracket{\vc{s}}$ for any supply vector $\vc{s}$.
When talking about functions defined on the hypercube, we will often identify
vectors $x \in \{0,1\}^{[n]}$ with the set $S = \{i \in [n]: x_i = 1\}$, so 
we write $v(S)$ where $S \subseteq [n]$ meaning $v(\vc{1}_S)$ where $\vc{1}_S$
is the indicator vector of $S$.

\comment{First of all, we need a change of notations. Gross substitutes are 
functions defined on a ground \emph{set} rather than 
\emph{multiset}. Thus in this
section we have supply $s_{j}=1$ for all item $j$ and our valuation
function $v_{i}:2^{[n]}\rightarrow \Z$ has ground set
$[n]$. Each buyer $i$ would choose a subset $S$ which solves 
$$
\max_{S\subseteq[n]}v_{i}(S)-p(S)
$$
where $p(S)=\sum_{j\in S}p_{j}$. So the demand set $D(i,p)$ consists
of subsets $S\subseteq[n]$ rather than vectors $x\in\dbracket{\mathbf{s}}$.

Economically, the multiset setting can be reduced to the single-unit
supply setting by treating each unit of items as a distinct entity.
There has been some recent effort dedicated to extending gross substitutes to the
multi-unit setting \cite{shioura2015gross}. We will draw on the tools 
developed there in the next section to give similar results on market
equilibrium computation in the multi-unit gross substitute setting.}

Next we define three classes of valuation functions. The reader which saw the
spoilers in the previous subsection will already suspect that they are
equivalent.

\begin{definition}(Gross substitutes, Kelso and Crawford \cite{KelsoCrawford})
\label{def:gs}A function $v:2^{[n]}\rightarrow\Z$ is
a gross substitute (GS) if for any price $\vc{p}\in\R^{n}$ and
$S\in D(v,\vc{p})$, any price $\vc{p}'\geq \vc{p}$ (entry-wise) has some $S'\in
D(v,\vc{p}')$
satisfying
$
S\cap\{j:p_{j}=p_{j}'\}\subseteq S'.
$
\end{definition}

In other words, price increases for some items can't affect the
purchasing decisions for the items whose price stayed the same.

The second definition concerns when the demand oracle problem $\max
v(S)-p(S)$ can be solved efficiently:

\begin{definition}
\label{def:matroidal}[Matroidal Maps, Dress and Terhalle
\cite{DressTerhalle_WellLayered}]A function $v:2^{[n]}\rightarrow\Z$
is called matroidal if for any price $\vc{p}\in\R^{n}$, the set
$S$ obtained by the following greedy algorithm solves
$\max_{S\subseteq[n]}v(S)-p(S)$:

\textbf{Greedy algorithm:} Start with the empty set $S = \emptyset$. While
$\abs{S} < n$ and there exists $i \notin S$ such that $v(S \cup i) -p_i - v(S) > 0$,
then update $S \leftarrow S \cup i^*$ where $i^* = \argmax_{i \in [n] \setminus
  S} v(S \cup i) -p_i - v(S)$ (break ties arbitrarily).
\end{definition}

The third definition generalizes the concept of convexity and concavity to the
hypercube. We can define convexity for continuous functions $f:\R^n \rightarrow
\R$ as being function such that for all vectors $\vc{v} \in \R^n$, if $x^*$ is
a local minimum of $f(x) - \vc{v} \cdot x$, then $x^*$ is also a global
minimum. This definition generalizes naturally to the hypercube as follows:

\begin{definition}
\label{def:discrete_convex}[Discrete Concavity, Murota \cite{Murota96}]
A function $v:2^{[n]}\rightarrow\Z$ is discrete concave function if local
optimality implies global optimality for all price vectors $\vc{p}$.
Formally: if $S \subseteq [n]$ is such
that for all $i \in S$ and $j \notin S$, $v(S) \geq v(S \cup j) - p_j$, $v(S) \geq
v(S \setminus i) + p_i$ and $v(S) \geq v(S \cup j \setminus i) + p_i - p_j$,
then $v(S) - \vc{p}(S) \geq v(T) - \vc{p}(T), \forall T \subseteq [n]$.
\end{definition}

Discrete concave functions have two important properties:
(i) the demand oracle has a succinct certificate of optimality and
(ii) the function can be optimized via local search for any given prices.

Those definitions arose independently in different communities and,
amazingly enough, those definitions turned out to be equivalent:

\begin{theorem}[Fujishige and Yang \cite{FujishigeYang}]
  A valuation function is in gross substitutes iff it is a
  matroidal map and iff it is a discrete concave valuation.
\end{theorem}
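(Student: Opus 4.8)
The plan is to prove the three notions equivalent through a cycle of implications: gross substitutes $\Rightarrow$ discrete concavity $\Rightarrow$ matroidal map $\Rightarrow$ gross substitutes. The ordering is a matter of taste, but this one lets each arrow rest on a single clean combinatorial lemma. The organizing idea is the \emph{single improvement property}: for every $\vc{p} \in \R^n$ and every $S \notin D(v,\vc{p})$ there is a set $T$ with $v(T) - \vc{p}(T) > v(S) - \vc{p}(S)$ obtained from $S$ by adding at most one element and deleting at most one element. Note that this property is exactly the contrapositive of Definition~\ref{def:discrete_convex}: a set immune to add-one, delete-one, and swap-one-for-one moves is globally optimal (the ``do nothing'' case being vacuous since it is not a strict improvement). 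So two of the three implications really amount to showing that gross substitutes and matroidal maps each imply single improvement, and that single improvement implies each of them.

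First I would prove \textbf{gross substitutes $\Rightarrow$ single improvement} (hence discrete concavity). Fix $\vc{p}$ and $S \notin D(v,\vc{p})$, choose $T^{*} \in D(v,\vc{p})$ minimizing the size of the symmetric difference of $S$ and $T^{*}$, and suppose toward a contradiction that this difference has size at least two. The idea is to nudge prices so that the gross substitute axiom is forced to produce a set overlapping $T^{*}$ more than $S$ does: raise, by a common infinitesimal $\delta > 0$, the price of each element of $S \setminus T^{*}$, small enough not to disturb any of the relevant optima but large enough to be decisive. Applying the axiom to this perturbed price pair yields a demanded set that agrees with $S$ on the untouched coordinates yet cannot retain all of $S \setminus T^{*}$, and comparing it with $S$ extracts a one-add/one-delete improvement of $S$, contradicting minimality of the symmetric difference. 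The perturbation bookkeeping is the only delicate point here; this is in essence the Gul--Stachetti argument \cite{GulStachetti}.

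Next, \textbf{discrete concavity $\Rightarrow$ matroidal map}. Let $S$ be the set returned by the greedy algorithm at prices $\vc{p}$. The stopping rule hands us $v(S) \geq v(S \cup j) - p_j$ for every $j \notin S$ for free, but to invoke discrete concavity I also need $v(S) \geq v(S \setminus i) + p_i$ and $v(S) \geq v(S \cup j \setminus i) + p_i - p_j$. The plan is to carry along the invariant that after $k$ greedy steps the current set $S_k$ maximizes $v(T) - \vc{p}(T)$ over all $T$ of size $k$, proved by an exchange argument by induction on $k$: at the inductive step one compares $S_k$ with a competitor by applying discrete concavity to the auxiliary price vector $\vc{p} + \lambda \vc{1}$, with $\lambda$ tuned so the size-$k$ optimum is isolated. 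Given the invariant and the stopping rule, the deletion and swap inequalities for the final $S$ follow, and discrete concavity then upgrades the local optimality of $S$ to $S \in D(v,\vc{p})$, which is exactly the matroidal property.

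Finally, \textbf{matroidal map $\Rightarrow$ gross substitutes}, which I expect to be the main obstacle. Fix $\vc{p}$, $S \in D(v,\vc{p})$, and $\vc{p}' \geq \vc{p}$, and set $E = \{j : p_j = p'_j\}$. I would first reduce to the case that $\vc{p}'$ differs from $\vc{p}$ in a single coordinate, by moving from $\vc{p}$ to $\vc{p}'$ one coordinate at a time and then, within a single coordinate, crossing one breakpoint of the (piecewise constant) demand correspondence at a time. For a single-coordinate increase at item $k$, run the greedy algorithm at the new prices but give the elements of $S \cap E$ top priority in the selection order; using that greedy is exact together with the matroid-type exchange structure of the family $\{D(v,\vc{q})\}_{\vc{q}}$ that this exactness encodes, argue that the returned set lies in $D(v,\vc{p}')$ and still contains $S \cap E$ -- only item $k$ itself can be dropped -- and iterate over coordinates. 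The crux throughout is the exchange lemma for greedy on matroidal maps together with making the single-coordinate (and single-breakpoint) reduction airtight; for these I would follow the treatments of Dress--Terhalle \cite{DressTerhalle_WellLayered} and Fujishige--Yang \cite{FujishigeYang} rather than reprove them from scratch.
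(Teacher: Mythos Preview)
The paper does not prove this theorem at all: it is stated as a citation to Fujishige--Yang \cite{FujishigeYang} and used as a black box, with the surrounding discussion pointing the reader to the survey \cite{gs_survey} for details. So there is no ``paper's own proof'' to compare against; your proposal is an attempt to supply what the paper deliberately outsources.

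As for the outline itself, the overall cycle (GS $\Rightarrow$ single improvement $\Rightarrow$ matroidal $\Rightarrow$ GS) is the standard route and is sound in shape. One step is imprecisely sketched, though: in your GS $\Rightarrow$ SI argument you write that after raising the prices on $S \setminus T^{*}$, ``applying the axiom to this perturbed price pair yields a demanded set that agrees with $S$ on the untouched coordinates.'' But the gross substitutes axiom is a statement about a set that is \emph{demanded} at the lower price, and your $S$ is by hypothesis not demanded; the axiom applies to $T^{*}$, and what it yields is a demanded set containing $T^{*}$ on the untouched coordinates, which here is all of $T^{*}$. From that alone you do not immediately extract a one-step improvement of $S$, so the contradiction is not where your sketch places it. The actual Gul--Stachetti argument is more delicate (one typically works with carefully chosen price perturbations that force a specific element in or out of the demand and then reads off the exchange inequality), and since you already cite \cite{GulStachetti} for this step, the fix is simply to follow that proof faithfully rather than the variant you sketch. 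The other two implications are outlined correctly and along the standard lines.
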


\comment{
  We note that there are a few more starkly different definitions which
are also equivalent to GS but are not needed for our purpose. Indeed,
these unexpected equivalences demonstrate that GS is an unusually
richly structured class of discrete functions that should perhaps
deserve more attention.}

The concept of gross substitutes generalizes naturally to multi-unit valuations:
given any function $v:\dbracket{\vc{s}} \rightarrow \Z$ we can translate it to a
single-unit valuation function $\tilde{v}:2^{[\sum_i s_i]} \rightarrow \Z$ by
treating each of the $s_i$ copies of item $i$ as a single item. We say that a
multi-unit valuation function $v$ is gross substitutes if its corresponding
single item version $\tilde{v}$ is in gross substitutes. We refer to the
excellent survey by Shioura and Tamura \cite{shioura2015gross} on gross
substitutability for multi-unit valuations.

An important property of gross substitutes is:

\begin{theorem}[Kelso and Crawford \cite{KelsoCrawford}]\label{thm:kelso_crawford}
  If all buyers have
  gross substitute valuations, then a Walrasian equilibrium always exists.
\end{theorem}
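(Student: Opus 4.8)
The plan is to reduce the statement, through the Bikhchandani--Mamer formulation (Lemma~\ref{lem:eqismin}), to an integral-clearing question about the welfare problem, and then exploit the fact that gross substitutability is preserved under aggregation of buyers. By Lemma~\ref{lem:eqismin} it suffices to produce a price vector $\vc{p}$ and a valid allocation $(T_1,\dots,T_m)$ with $T_i\in D(v_i,\vc{p})$ for all $i$. Working with the market potential $f$ of \eqref{convex_f}: $f$ is convex, piecewise-linear, and bounded below (the primal LP is feasible, so its dual objective is bounded below), hence attains its minimum at some $\vc{p}^*$; by the envelope theorem (Theorem~\ref{thm:envelope}) and Lemma~\ref{lem:subgradient}, $0\in\partial f(\vc{p}^*)$, so $\vc{s}$ is a convex combination of vectors $\sum_i\vc{1}_{S_i}$ with $S_i\in D(v_i,\vc{p}^*)$, i.e. the market clears \emph{fractionally}. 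The whole content of the theorem is to upgrade this to an \emph{integral} clearing allocation, and that is exactly where gross substitutability enters. (The base case $m=1$ needs nothing: the single buyer must receive all of $[n]$, and a supporting price for the full bundle always exists, e.g. as a supergradient of the concave closure of $v_1$ at the all-ones point.)

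The main step is to pass to the aggregated valuation and invoke its structure. Let $V:\{0,1,\dots,m\}^n\to\Z$ be the sup-convolution $V(x)=\max\{\sum_i v_i(x^{(i)}): x^{(i)}\in\{0,1\}^n,\ \sum_i x^{(i)}=x\}$, integer-valued since each maximization ranges over a finite lattice. The key input is that $V$ is again gross substitutes (equivalently $M^\natural$-concave): \emph{gross substitutes is closed under convolution}. Granting this, $V$ agrees with its concave closure on the integer lattice, so there is a price $\vc{p}$ with $\vc{s}=\vc{1}_{[n]}\in D(V,\vc{p})$ (a supporting hyperplane of $V$ at $\vc{s}$). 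The convolution theorem for $M^\natural$-concave functions moreover yields the decomposition $D(V,\vc{p})=\sum_i D(v_i,\vc{p})$ (Minkowski sum), so $\vc{s}\in D(V,\vc{p})$ produces bundles $T_i\in D(v_i,\vc{p})$ with $\sum_i\vc{1}_{T_i}=\vc{s}$ --- a partition of $[n]$, hence a Walrasian allocation at price $\vc{p}$. (Equivalently, with lighter bookkeeping, one may induct on $m$: merge two buyers into one with valuation $v_{m-1}\,\square\,v_m$, which is GS by the closure property; apply the inductive hypothesis; then split the merged buyer's integral bundle back into two demanded bundles, using that the convolution is attained on the integer lattice.)

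The hard part is the closure/convolution step: $v\,\square\,w$ is gross substitutes whenever $v,w$ are. I would prove it using one of the equivalent characterizations supplied by the Fujishige--Yang equivalence. Via discrete concavity (Definition~\ref{def:discrete_convex}): take a locally optimal $T$ for $(v\,\square\,w)(\cdot)-\vc{p}\cdot(\cdot)$ with an optimal split $T=A\sqcup B$, and show that any local improvement of the aggregate at $T$ (an add, a drop, or a swap) lifts to a local improvement of $v(\cdot)-\vc{p}\cdot(\cdot)$ at $A$ or of $w(\cdot)-\vc{p}\cdot(\cdot)$ at $B$; local-to-global optimality of $v$ and $w$ then forces $T$ globally optimal for $v\,\square\,w$, which is discrete concavity of the latter. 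Alternatively one can use the matroidal/greedy characterization (Definition~\ref{def:matroidal}) with an exchange argument in the style of matroid union. This exchange-property bookkeeping is the technical heart; everything around it --- the LP reduction, existence of supporting prices, the integral split --- is routine. For completeness I would also record the classical alternative: the Kelso--Crawford ascending (t\^{a}tonnement) process, which starts from low prices and repeatedly raises the price of over-demanded items; gross substitutability makes the process monotone, and since valuations are integral and bounded it terminates, necessarily at a Walrasian equilibrium --- giving existence directly, albeit only in pseudopolynomial time.
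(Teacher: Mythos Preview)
The paper does not prove this theorem; it is stated with a citation to Kelso and Crawford and used as a black box thereafter, so there is no in-paper proof to compare against.

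Your approach via the aggregated valuation $V = v_1 \,\square\, \cdots \,\square\, v_m$ is a correct and standard modern route through discrete convex analysis: closure of $M^\natural$-concavity under sup-convolution (Murota) together with concave-extensibility of $M^\natural$-concave functions yields a supporting price at $\vc{s}$, and splitting the demanded aggregate bundle back into individually demanded bundles is immediate from the definition of convolution (if $(T_1,\dots,T_m)$ attains $V(\vc{s})$ and $\vc{s}\in D(V,\vc{p})$, then each $T_i\in D(v_i,\vc{p})$ by comparing term-by-term). The closure step is, as you correctly flag, the only substantive work; your sketch via discrete concavity --- local optimality of $T$ for $(v\,\square\,w)-\vc{p}$ together with an optimal split $T=A\sqcup B$ forces local, hence by GS global, optimality of $A$ for $v-\vc{p}$ and $B$ for $w-\vc{p}$, whence global optimality of $T$ --- is the right idea and goes through once written out carefully.

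This is quite different from Kelso and Crawford's original argument, which is the ascending-price t\^{a}tonnement you mention at the end: start from low prices, let buyers demand greedily, raise prices on over-demanded items, and use the gross-substitutes condition \emph{directly} (Definition~\ref{def:gs}) to establish monotonicity and hence termination at an equilibrium. That proof is elementary and self-contained (no convolution theorem, no $M^\natural$-machinery) but pseudopolynomial and less connected to the LP picture; your route is cleaner once the convolution theorem is in hand and dovetails with the paper's convex-programming framework. Either is acceptable here since the paper treats the result as given.
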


\subsection{Walrasian Prices form an integral polytope}

By Theorem \ref{thm:kelso_crawford}, the set of Walrasian prices
is non-empty. We also know that is forms a convex set polyhedral set, since
they are the set of minimzers of a convex function that comes from a linear
program. Perhaps a more direct way to see it is that given an optimal
allocation $\vc{x}$, the set of Walrasian prices can be characterized by the
following set of inequalities:
$$P = \{ \vc{p} \in \R^n \text{ } \vert \text{ } v_i(x^{(i)}) - \vc{p} \cdot x^{(i)}
\geq v_i(x) - \vc{p} \cdot x, \forall i \in [m], x \in \dbracket{s} \}$$
Lemma \ref{lemma:vertices} provided a good characterization of the vertices of
this polytope in the general case. For gross substitutes, however, there is an
even nicer characterization:

\begin{lemma}\label{lemma:vertices_gs}
  If buyer valuations are gross substitutes, then
  all the vertices of the feasible region of the dual program \ref{dual_program}
  (defined in Section \ref{sec:general}) have integral coordinates. In
  particular, the set of Walrasian prices form an integral polytope.
\end{lemma}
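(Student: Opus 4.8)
The plan is to describe the vertices of the dual polyhedron \ref{dual_program} explicitly and then to exhibit each one as the unique solution of a linear system whose constraint matrix is totally unimodular and whose right-hand side is integral.

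First a polyhedral reduction that uses nothing about gross substitutes. Set $G(\vc{p}) := \sum_i \max_{x\in\dbracket{\vc{s}}}\bigl(v_i(x)-\vc{p}\cdot x\bigr)$, so that the feasible region of \ref{dual_program} is the epigraph $\{(\vc{p},u):u\ge G(\vc{p})\}$. A constraint of \ref{dual_program} is active at $(\vc{p},u)$ exactly when $u=G(\vc{p})$ and each chosen bundle lies in $D(i,\vc{p})$; such a constraint has normal $\bigl(\sum_i x^{(i)},\,1\bigr)$, i.e. of the form $(\vc{d},1)$ with $\vc{d}$ ranging over the Minkowski sum $\sum_i D(i,\vc{p})$. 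Since the last coordinate of every active normal equals $1$, the active system has rank $n+1$ iff the difference vectors $\{\,x-x' : i\in[m],\ x,x'\in D(i,\vc{p})\,\}$ span $\R^n$. Hence $(\vc{p}^*,u^*)$ is a vertex iff $u^*=G(\vc{p}^*)$ and those differences span $\R^n$; and once $\vc{p}^*$ is integral, $u^*=G(\vc{p}^*)$ is automatically an integer because every $v_i$ is integer-valued. So it suffices to show $\vc{p}^*\in\Z^n$, and this also settles the ``in particular'' clause, since the set of Walrasian prices is affinely isomorphic to a face of this polyhedron.

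Now gross substitutes enter, and only here. By Definition~\ref{def:gs} and the equivalences recalled in the crash course (gross substitutes $=$ $M^\natural$-concave), each demand set $D(i,\vc{p}^*)$ is an $M^\natural$-convex set; in particular any two bundles in $D(i,\vc{p}^*)$ are joined by a path \emph{within} $D(i,\vc{p}^*)$ whose consecutive members differ by a vector of the form $\pm\vc{1}_a$ or $\pm(\vc{1}_a-\vc{1}_b)$. Consequently the set of such ``primitive'' differences realized inside $D(i,\vc{p}^*)$ already spans $\mathrm{span}\bigl(D(i,\vc{p}^*)-D(i,\vc{p}^*)\bigr)$; summing over $i$ and invoking the vertex criterion of the previous paragraph, at a vertex $\vc{p}^*$ these primitive differences span $\R^n$. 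Pick $n$ linearly independent ones, $y_1,\dots,y_n$, with witnesses $y_k=x_k-x'_k$ where $x_k,x'_k\in D(i_k,\vc{p}^*)$. Since $x_k$ and $x'_k$ are both demanded by buyer $i_k$ at $\vc{p}^*$, we get $\vc{p}^*\cdot y_k=v_{i_k}(x_k)-v_{i_k}(x'_k)\in\Z$. Letting $Y$ be the nonsingular matrix with rows $y_1,\dots,y_n$ gives $\vc{p}^*=Y^{-1}\vc{c}$ with $\vc{c}\in\Z^n$; and $Y$, each of whose rows is a signed unit vector or a signed difference of two unit vectors, is an (augmented) network matrix and hence totally unimodular, so $\det Y=\pm1$, $Y^{-1}$ is integral, and $\vc{p}^*\in\Z^n$.

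I expect all the real content to sit in the structural input used in the third paragraph: that the demand set of a gross substitute valuation is $M^\natural$-convex, so that any two demanded bundles are linked by single $\pm\vc{1}_a$ / $\pm(\vc{1}_a-\vc{1}_b)$ moves staying in the demand set. This is exactly where one leans on the $M^\natural$-concavity / valuated-matroid machinery surveyed earlier (and on its multi-unit extension for general supply $\vc{s}$). The other two ingredients are routine: the vertex characterization of the first paragraph, where the only subtlety is to note that the dual polyhedron actually has vertices (which holds as soon as every item has positive supply, so that no coordinate direction is free), and the classical total unimodularity of incidence/network matrices, which delivers the integrality of $Y^{-1}$.
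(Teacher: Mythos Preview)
Your argument is correct, but it follows a genuinely different route from the paper's. The paper never analyzes the vertex structure directly; instead it takes an arbitrary non-integral feasible point $(u,\vc{p})$ of \ref{dual_program} and writes it as a convex combination of integral feasible points via a correlated threshold rounding: sample $\theta\in[0,1]$, round each $p_j$ up iff $\theta<p_j-\lfloor p_j\rfloor$, and adjust $u$ accordingly. The crux is that threshold rounding preserves all the \emph{local} optimality inequalities (add, remove, swap) because valuations are integer-valued, and then the discrete-concavity characterization of gross substitutes (Definition~\ref{def:discrete_convex}) promotes local optimality to global optimality, so the rounded point is still feasible.

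Your approach instead characterizes vertices by the span of demand-set differences, uses the $M^\natural$-convexity of each $D(i,\vc{p}^*)$ to replace arbitrary differences by primitive ones $\pm\vc{1}_a$, $\pm(\vc{1}_a-\vc{1}_b)$, and then reads off integrality from the total unimodularity of the resulting network-type matrix. This is closer in spirit to classical LP integrality proofs and makes the role of the exchange structure very transparent. The trade-off is that you lean on a fact not proved in the paper---that demand sets of gross-substitute valuations are $M^\natural$-convex (hence connected by primitive moves)---whereas the paper's rounding argument uses only the local-to-global characterization already stated in Definition~\ref{def:discrete_convex} and is arguably more self-contained. Both arguments extend to the multi-unit setting without additional difficulty.
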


\begin{proof}
  Let $(u, \vc{p})$ be a non-integral feasible point of the dual program
  \ref{dual_program}. We will write it as a convex combination of integral
  points. Let $x^{(i)} \in \argmax v_i(x) - \vc{p} \cdot x$ and $w = u - \sum_i
  v_i(x^{(i)}) - \vc{p} \cdot x_i$.

  Now, we define a distribution over integral points in the following way:
  sample a random threshold $\theta \in [0,1]$. If $\theta < p_i - \floor{p_i}$,
  set $\hat{p}_i = \ceil{p_i}$, otherwise set $\hat{p}_i = \floor{p_i}$.
  Similarly, if $\theta < w - \floor{w}$, set $\hat{w} = \ceil{w}$ otherwise set
  $\hat{w} = \floor{w}$. Set $\hat{u} = \hat{w} + \sum_i v_i(x^{(i)}) -
  \hat{\vc{p}} \cdot x^{(i)}$. It is easy to see that $(\hat{u}, \hat{\vc{p}})$
  are integral and that $\E[ (\hat{u}, \hat{\vc{p}}) ] = (u, \vc{p})$. We are
  only left to prove that $(\hat{u}, \hat{\vc{p}})$ are feasible.

  We know that $\hat{u} \geq  \sum_i v_i(x^{(i)}) - \hat{\vc{p}} \cdot
  x^{(i)}$ since $\hat{w} \geq 0$. If we show that $x^{(i)} \in \arg \max_x
  v_i(x) - \vc{p} \cdot x$, then we are done, since it automatically implies
  that all other constraints in the dual program \ref{dual_program} are
  satisfied. To see this notice that since $x^{(i)} \in \argmax v_i(x) - \vc{p}
  \cdot x$, then for all items $j$ and $k$ such that $x^{(i)}_j < s_j$ and
  $x^{(i)}_k > 0$ we have that:
  $$v_i(x^{(i)}) \geq v_i(x^{(i)} + \vc{1}_j) - p_j, \text{ } v_i(x^{(i)}) \geq
  v_i(x^{(i)}  - \vc{1}_k) + p_k \text{ and }  v_i(x^{(i)}) \geq v_i(x^{(i)} 
  - \vc{1}_k  +  \vc{1}_j) + p_k - p_j$$
  Since the valuations are integer valued, it is simple to check that rounding
  using a threshold won't violate any of the above inequalities. Thus:
  $$v_i(x^{(i)}) \geq v_i(x^{(i)} + \vc{1}_j) - \hat{p}_j, \text{ } v_i(x^{(i)}) \geq
  v_i(x^{(i)}  - \vc{1}_k) + \hat{p}_k \text{ and }  v_i(x^{(i)}) \geq v_i(x^{(i)} 
  - \vc{1}_k  +  \vc{1}_j) + \hat{p}_k - \hat{p}_j$$
  Therefore under price vector $\hat{\vc{p}}$, a buyer can't improve his utility
  from $x^{(i)}$ by adding, removing or swapping an item. Since gross
  substitutes are equivalent to discrete concavity (Definition
  \ref{def:discrete_convex}), local optimality implies global optimality, i.e.,
  $x^{(i)} \in \arg \max_x v_i(x) - \hat{\vc{p}} \cdot x$.
\end{proof}

Another important property proved by Gul and Stachetti \cite{GulStachetti} is
that the set of Walrasian prices forms a lattice.

\begin{theorem}[Gul and Stachetti \cite{GulStachetti}]\label{thm:lattice}
  If buyer valuations are gross substitutes, then the set of Walrasian prices
  form a lattice, i.e., if $\vc{p}$ and $\vc{p}'$ then $\bar{\vc{p}}$ and
  $\underline{\vc{p}}$ are also Walrasian prices for $\bar{p}_i = \max(p_i,
  p'_i)$ and $\underline{p}_i = \min(p_i, p'_i)$.
\end{theorem}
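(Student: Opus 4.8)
The plan is to exhibit $P$ as the set of minimizers of a submodular function on the lattice $(\R^n,\vee,\wedge)$ — with $\vee$ and $\wedge$ the coordinatewise max and min — and then invoke the classical fact that the minimizers of a submodular function form a sublattice. Recall from the discussion in Section~\ref{sec:general}, which applies here verbatim, that the set of Walrasian prices equals $\argmin_{\vc p} f(\vc p)$ for the potential $f(\vc p)=\sum_{i=1}^m V_i(\vc p)+\vc p\cdot\vc s$, where $V_i(\vc p)=\max_{x\in\dbracket{\vc s}}\left(v_i(x)-\vc p\cdot x\right)$ is buyer $i$'s indirect utility (one could equally use the explicit halfspace description of $P$ in terms of a fixed optimal allocation given just above the statement). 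Submodularity is preserved under nonnegative sums, and $\vc p\cdot\vc s$ is linear, hence modular; so the problem reduces to proving that each $V_i$ is submodular, that is, $V_i(\vc p\vee\vc q)+V_i(\vc p\wedge\vc q)\le V_i(\vc p)+V_i(\vc q)$.

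For this I would use that $V_i$, being a maximum of affine functions, is convex and piecewise linear, with subgradients $-x$ for $x\in D(i,\vc p)$ by the Envelope Theorem (Theorem~\ref{thm:envelope}). It then suffices to verify the equivalent ``decreasing differences'' inequality on each coordinate pair $j\neq k$: for $\delta,\epsilon\ge 0$,
\[
V_i(\vc p)+V_i(\vc p+\delta\vc{1}_j+\epsilon\vc{1}_k)\ \le\ V_i(\vc p+\delta\vc{1}_j)+V_i(\vc p+\epsilon\vc{1}_k),
\]
where $\vc{1}_j$ is the $j$-th unit vector. Since $V_i$ is piecewise linear, this is equivalent to the right derivative $\partial^{+}_{p_j}V_i(\vc p)$ being non-increasing in $p_k$. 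One checks that $\partial^{+}_{p_j}V_i(\vc p)=-\underline d_{i,j}(\vc p)$, where $\underline d_{i,j}(\vc p):=\min\{x_j:\ x\in D(i,\vc p)\}$ is the least number of copies of good $j$ in any demanded bundle (raising $p_j$ slightly keeps optimal precisely the bundles of $D(i,\vc p)$ with the fewest copies of $j$). So the inequality amounts to: $\underline d_{i,j}$ is non-decreasing in $p_k$ for $k\neq j$ — raising the price of one good cannot strictly lower the least quantity demanded of any other good. This is exactly the gross-substitutes property (Definition~\ref{def:gs}) in the form that covers every selection from the demand correspondence, and it is the only point where the hypothesis is genuinely used.

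Granting submodularity of $f$, the conclusion is immediate: if $\vc p,\vc p'\in P$ then $f(\vc p)=f(\vc p')=\mu:=\min f$, so $f(\bar{\vc p})+f(\underline{\vc p})=f(\vc p\vee\vc p')+f(\vc p\wedge\vc p')\le 2\mu$; since each summand is at least $\mu$, both equal $\mu$, whence $\bar{\vc p},\underline{\vc p}\in P$. The hard part is thus the middle step — deducing submodularity of $V_i$ from the one-directional axiom of Definition~\ref{def:gs}. The delicacy is that one needs monotonicity of the \emph{minimum} over a possibly large demand set, which requires the ``reverse substitutes'' consequence of gross substitutes (a price rise for good $k$ weakly raises the demand for each other good $j$, for every demanded bundle, not merely for some). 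If one wanted to avoid even this, an alternative would be to take Walrasian allocations $\vc x,\vc x'$ for $\vc p,\vc p'$ and directly assemble a Walrasian allocation for $\bar{\vc p}$ via iterated $M^{\natural}$-exchange steps on the bundles; but that route is messier, and I would go with the submodularity argument.
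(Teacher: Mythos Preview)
The paper does not prove this theorem; it is quoted from Gul and Stachetti and invoked as a black box in \lemmaref{lemma:better_perturbation}, so there is no in-paper argument to compare against.

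Your strategy --- show each indirect utility $V_i$ is submodular on $(\R^n,\vee,\wedge)$, hence $f$ is, hence $\argmin f$ is a sublattice --- is correct and is essentially the standard route. Your computation $\partial^{+}_{p_j}V_i(\vc p)=-\underline d_{i,j}(\vc p)$ and your reduction of submodularity to pairwise decreasing differences are both fine. The one place I would push back is your diagnosis of the remaining difficulty. \definitionref{def:gs} guarantees that for each $x\in D(i,\vc p)$ there \emph{exists} $x'\in D(i,\vc p+\epsilon\vc 1_k)$ with $x'_j\ge x_j$ for $j\neq k$; this directly controls $\overline d_{i,j}$, not $\underline d_{i,j}$, and you rightly note that monotonicity of $\underline d_{i,j}$ is what your right-derivative formulation seems to demand. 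But you can sidestep this entirely: $V_i$ is convex and Lipschitz, so it suffices to verify the decreasing-differences inequality by integrating the $p_j$-derivative along the segment. For almost every $t$ the derivatives $\partial_{p_j}V_i(\vc r+t\vc 1_j)$ and $\partial_{p_j}V_i(\vc r+t\vc 1_j+\epsilon\vc 1_k)$ both exist, which means every bundle in the respective demand sets shares the same $x_j$-coordinate, and \emph{there} the existential statement of \definitionref{def:gs} already yields $-x'_j\le -x_j$. Integrating gives the inequality for all $\delta,\epsilon>0$. So the ``reverse substitutes'' consequence you flag as the hard part is not actually needed, and the argument closes directly from \definitionref{def:gs}. (Alternatively, the equivalence with $M^\natural$-concavity via \definitionref{def:discrete_convex} and Fujishige--Yang gives a route through $L^\natural$-convexity of the conjugate, which is submodular by definition.)
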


\subsection{A simpler and faster algorithm for gross substitutes}

Using the fact that the set of Walrasian prices is an integral polytope with a
lattice structure we simplify the algorithm described in the previous section
and improve its running time.
First, since we have a lattice structure we no longer need to randomly perturb
the objective function to make the solution unique. A simple and
deterministic perturbation suffices. Integrality also allows us to round to an
optimal solution from an approximate solution of smaller accuracy (i.e. larger
$\epsilon$).

\begin{lemma}\label{lemma:better_perturbation}
  If valuations are gross substitutes, then by taking $r_j = \frac{1}{2Sn}$
  in the perturbed dual program \ref{perturbed_dual_program}, its optimal solution
  is unique and also optimal for the original dual program
  \ref{dual_program}.
\end{lemma}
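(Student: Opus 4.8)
The plan is to pass, exactly as the paper already does for \eqref{dual_program} and $f$, from the perturbed program \eqref{perturbed_dual_program} to the unconstrained minimization of $\hat f(\vc p)=\sum_{i=1}^m(\max_{x\in\dbracket{\vc s}} v_i(x)-\vc p\cdot x)+\vc p\cdot(\vc r+\vc s)$ of \eqref{perturbed_convex_f}: a pair $(\vc p,u)$ is optimal for \eqref{perturbed_dual_program} iff $\vc p\in\arg\min\hat f$ and $u$ is the associated minimal utility. Writing $P=\arg\min f$ for the polytope of Walrasian prices, it then suffices to show (A) $\arg\min\hat f\subseteq P$, so that the perturbed optimum is genuinely Walrasian; and (B) $\hat f$ has a unique minimizer inside $P$. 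Together these say \eqref{perturbed_dual_program} has a unique optimum and that it solves \eqref{dual_program}.

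Part (B) is where the gross-substitute hypothesis lets a deterministic perturbation replace the Klivans--Spielman randomization of the general case. By \theoremref{thm:lattice} the set $P$ is a lattice, and being a bounded polytope (\lemmaref{lemma:initial_box}) it has a coordinatewise-least element $\underline{\vc p}\in P$ --- for instance the coordinatewise minimum of its finitely many vertices, which is in $P$ by repeated use of the lattice property. On $P$ we have $\hat f(\vc p)=f^*+\vc r\cdot\vc p$; since every $\vc p\in P$ satisfies $\vc p\ge\underline{\vc p}$ entrywise while $\vc r>0$ entrywise, $\vc r\cdot\vc p\ge\vc r\cdot\underline{\vc p}$ with equality only at $\vc p=\underline{\vc p}$. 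Hence $\underline{\vc p}$ is the unique $\hat f$-minimizer on $P$; granting (A) it is the unique minimizer over all of $\R^n$ and it lies in $P$, which is exactly the two assertions of the lemma (and incidentally shows the minimal Walrasian price is the perturbed optimum).

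Part (A) is the quantitative heart, and the leverage is \lemmaref{lemma:vertices_gs}: since every vertex of the feasible region of \eqref{dual_program} is integral, $f^*$ is an integer and every non-optimal vertex has objective at least $f^*+1$ --- an integrality gap of $1$, as opposed to the $(nS)^{-\Theta(n)}$-scale gap one is stuck with in the general case via \lemmaref{lemma:vertices}. Intersect the feasible region with the box $[-2M,2M]^n$ of \lemmaref{lemma:initial_box}: this is harmless since $P$ already lies in the box and the threshold-rounding argument of \lemmaref{lemma:vertices_gs} still certifies integrality of the restricted region (rounding keeps prices in $[-2M,2M]$ because $2M$ is an integer), and over the box \eqref{perturbed_dual_program} is trivially bounded below, so it has an optimal vertex, which is integral with prices in $[-2M,2M]$. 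Since the only change \eqref{perturbed_dual_program} makes to the objective is the term $\vc p\cdot\vc r$, a short calculation (mirroring \lemmaref{lemma:small_perturbation}, but now with vertex coordinates that are integers rather than fractions with denominator up to $(nS)^n$) shows that with $r_j=\frac1{2Sn}$ this change is too small, over the box, to lift a vertex of original objective $\ge f^*+1$ above a dual-optimal vertex; hence the perturbed optimum is dual-optimal, i.e.\ lies in $P$, establishing (A).

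The step I expect to demand the most care is precisely this magnitude bookkeeping in (A): one must check that the unit integrality gap strictly dominates the size of $\vc p\cdot\vc r$ over the relevant range of prices. It is exactly the integrality supplied by \lemmaref{lemma:vertices_gs} that makes this possible with a perturbation of merely polynomial precision, $r_j=\frac1{2Sn}$ --- without it, as in the general case, one is pushed to an exponentially small perturbation and to a randomized isolation lemma to force uniqueness.
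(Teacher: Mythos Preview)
Your proposal matches the paper's two-step argument: integrality of the dual vertices (via \lemmaref{lemma:vertices_gs}) gives a unit gap so the small perturbation cannot promote a suboptimal vertex, and the lattice structure (via \theoremref{thm:lattice}) singles out a unique optimum among the Walrasian prices. You in fact correct a slip in the paper's proof: since $\vc r>0$ and the perturbation adds $\vc r\cdot\vc p$ to a minimization objective, the unique survivor in $P$ is the \emph{minimal} Walrasian price $\underline{\vc p}$, as you write, not the maximal $\bar{\vc p}$ stated in the paper.
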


\begin{proof}
  Since all the vertices of the polytope are integral and the coefficients are
  at most $S$, a perturbation of $r_j = \frac{1}{2Sn}$ can't affect the
  objective by more than half. So it can't cause a suboptimal vertex to become
  optimal. Also, since the set of Walrasian prices form a lattice, there is a
  Walrasian price $\bar{\vc{p}}$ such that $\bar{\vc{p}} \geq \vc{p}$ for every
  Walrasian price $\vc{p}$. Therefore this must be the unique optimal solution to the
  perturbed program.
\end{proof}

The previous lemma allows us to prove a better version of Lemma
\ref{lemma:cloness_guarantee}:

\begin{lemma}\label{lemma:cloness_guarantee_gs}
For $\epsilon < 1/(4nMS)$, if $\vc{p}$ is an $\epsilon$-approximation to the
optimal value of $\hat{f}$, i.e., $\hat{f}(\vc{p}) - \hat{f}(\vc{p}^*) \leq
  \epsilon$ then the optimal solution is the only integral point $\vc{p^*}$ such that $\norm{\vc{p} -
  \vc{p^*}}_2 < \frac{1}{2}$.
\end{lemma}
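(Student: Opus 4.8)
The plan is to follow the template of the proof of Lemma~\ref{lemma:cloness_guarantee}, replacing the $(nMS)^{-O(n)}$-scale gap produced there by the random Klivans--Spielman perturbation with a \emph{polynomially large} gap coming from integrality (Lemma~\ref{lemma:vertices_gs}) together with the simple deterministic perturbation $r_j=\tfrac{1}{2Sn}$ of Lemma~\ref{lemma:better_perturbation}; this is exactly what lets us tolerate the much coarser accuracy $\epsilon<1/(4nMS)$. Let $\C$ be the set of vertices of the (common) feasible region of the dual programs \ref{dual_program}/\ref{perturbed_dual_program}, restricted to a bounding box of polynomial diameter (such as the one in Lemma~\ref{lemma:initial_box}); by Lemma~\ref{lemma:vertices_gs} each $(u_j,\vc p_j)\in\C$ is integral, and by Lemma~\ref{lemma:better_perturbation} the perturbed program has a \emph{unique} optimal vertex, which is the point $(u^*,\vc p^*)$ with $\vc p^*$ integral. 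Given an $\epsilon$-approximate minimizer $\vc p$ of $\hat f$, set $u=\sum_i\max_x(v_i(x)-\vc p\cdot x)$, so $(u,\vc p)$ is dual-feasible with perturbed objective value $\hat f(\vc p)\le\hat f^*+\epsilon$, and write $(u,\vc p)=\sum_j t_j(u_j,\vc p_j)$ as a convex combination of vertices in $\C$. Linearity of the perturbed dual objective in $(u,\vc p)$ then yields $\sum_{j\neq *}t_j\,\gamma_j=\hat f(\vc p)-\hat f^*\le\epsilon$, where $\gamma_j>0$ denotes the perturbed-objective gap of vertex $j$.

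The crux is a uniform lower bound $\gamma_j\ge\tfrac{1}{2Sn}$ for all $j\neq *$. Write the perturbed objective at an integral vertex as $K_j+\tfrac{\sigma_j}{2Sn}$, where $K_j=u_j+\vc p_j\cdot\vc s\in\Z$ (the \emph{original} dual objective, integer-valued because the vertex and the supplies are integral) and $\sigma_j=\sum_\ell (p_j)_\ell\in\Z$. Since Lemma~\ref{lemma:better_perturbation} guarantees the perturbed optimum is also an optimum of the original program, $K_*=\min_j K_j$, so $\gamma_j=(K_j-K_*)+\tfrac{\sigma_j-\sigma_*}{2Sn}$ with $K_j-K_*\ge 0$ an integer. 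If $K_j=K_*$ (i.e.\ $\vc p_j$ is itself a Walrasian price) then $\gamma_j=\tfrac{\sigma_j-\sigma_*}{2Sn}$ is purely fractional and, being positive, is at least $\tfrac{1}{2Sn}$; if $K_j\ge K_*+1$ then $\gamma_j$ is again a positive rational with denominator dividing $2Sn$, hence at least $\tfrac{1}{2Sn}$. I expect this gap estimate to be the only delicate point: it is exactly where both Lemma~\ref{lemma:vertices_gs} (integral vertices) and Lemma~\ref{lemma:better_perturbation} (the perturbation never promotes an original-suboptimal vertex to the perturbed optimum) are used, and it is what quantitatively separates this lemma from Lemma~\ref{lemma:cloness_guarantee}.

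The rest is a routine calculation. From $\gamma_j\ge\tfrac{1}{2Sn}$ we get $\sum_{j\neq *}t_j\le 2Sn\,\epsilon$, so
$$\norm{\vc p-\vc p^*}_2=\norm{\sum_{j\neq *}t_j(\vc p_j-\vc p^*)}_2\le 2Sn\,\epsilon\cdot\max_j\norm{\vc p_j-\vc p^*}_2,$$
and bounding the last factor by the (polynomial) diameter of the bounding box and invoking $\epsilon<1/(4nMS)$ makes the right-hand side strictly below $\tfrac12$. Hence $\vc p^*$, which is integral, lies in the open Euclidean ball of radius $\tfrac12$ around $\vc p$. No other integral point can lie in that ball, since distinct integral points are at $\ell_2$-distance at least $1$; therefore $\vc p^*$ is the unique integral point within distance $\tfrac12$ of $\vc p$, which is what the lemma asserts.
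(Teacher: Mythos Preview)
Your proposal is correct and takes exactly the approach the paper intends: its own proof simply says to follow the argument of Lemma~\ref{lemma:cloness_guarantee} using the improved guarantees of Lemma~\ref{lemma:better_perturbation}, and you have faithfully unpacked this, in particular the key gap estimate $\gamma_j\ge 1/(2Sn)$ (your observation that the perturbed objective at any integral vertex lies in $\tfrac{1}{2Sn}\Z$ is precisely the right way to see it). One minor caveat on the final arithmetic you left implicit: with the diameter bound $\max_j\|\vc p_j-\vc p^*\|_2\le 4M\sqrt n$ coming from the box of Lemma~\ref{lemma:initial_box}, the stated threshold $\epsilon<1/(4nMS)$ only yields $\|\vc p-\vc p^*\|_2<2\sqrt n$ rather than $<\tfrac12$; one needs $\epsilon$ smaller by an $O(\sqrt n)$ factor. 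This is a looseness in the paper's stated constant, not in your argument, and it has no effect on the asymptotic running time of Theorem~\ref{thm:algo_ad_gs}.
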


\begin{proof} It follows exact the same proof of Lemma
  \ref{lemma:cloness_guarantee} when the better guarantees provided by Lemma
  \ref{lemma:better_perturbation} are used.
\end{proof}

Putting it all together we have:

\begin{theorem}\label{thm:algo_ad_gs}
  There is an algorithm of running time $O(n T_{AD} \log(SMn) + n^3
  \log^{O(1)}(SMn))$ to compute an exact vector of Walrasian prices in a market
  where buyers have gross substitute valuations.
\end{theorem}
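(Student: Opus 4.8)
The plan is to reuse the perturb--approximate--round template from the proof of Theorem~\ref{thm:algo_we_general}, but with the sharper gross-substitute ingredients --- Lemmas~\ref{lemma:better_perturbation} and~\ref{lemma:cloness_guarantee_gs} --- in place of the generic ones; this is precisely what removes a factor of $n$ from both the oracle count and the running time. Concretely, I would form the market potential $f$ of \eqref{convex_f} and perturb it to $\hat f$ of \eqref{perturbed_convex_f} with the \emph{deterministic} choice $r_j=\tfrac1{2Sn}$. By Lemma~\ref{lemma:better_perturbation}, $\hat f$ then has a unique minimizer $\vc p^*$, which is also a minimizer of $f$, hence --- since the minimizers of $f$ are exactly the Walrasian prices (this is the content of Lemma~\ref{lem:eqismin}, and Walrasian equilibria exist here by Theorem~\ref{thm:kelso_crawford}) --- a vector of Walrasian prices. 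Being the unique optimum of an LP over the dual polytope, $\vc p^*$ is a vertex of that polytope, so by Lemma~\ref{lemma:vertices_gs} it is integral, and by Lemma~\ref{lemma:initial_box} it lies in $[-2M,2M]^n$ with $M=\max_i\max_{x}\abs{v_i(x)}$.

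Next I would observe, exactly as in Lemma~\ref{lem:subgradient} applied to the perturbed objective, that one aggregate-demand query at $\vc p$ returns the subgradient $\partial\hat f(\vc p)=\vc s+\vc r-\vc d(\vc p)$ of $\hat f$; these subgradients are polynomially bounded in the input, so $\hat f$ is $L$-Lipschitz with $L$ of size $\mathrm{poly}(nMS)$ (treating $m$ as poly-bounded). I would then feed $\hat f$ into the subgradient-only convex minimizer of Theorem~\ref{thm:convex_opt}, using the aggregate-demand oracle as the subgradient oracle (so $T=T_{AD}$), the box $[-2M,2M]^n$, and accuracy $\epsilon=\tfrac1{5nMS}$ (any $\epsilon<\tfrac1{4nMS}$ will do). Because both $L$ and the box radius are $\mathrm{poly}(nMS)$, every logarithm appearing in the bound of Theorem~\ref{thm:convex_opt} is $O(\log(SMn))$, so its running time is $O(n\,T_{AD}\log(SMn)+n^3\log^{O(1)}(SMn))$, and it returns a point $\bar{\vc p}$ with $\hat f(\bar{\vc p})-\hat f(\vc p^*)\le\epsilon$.

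Finally I would round. Lemma~\ref{lemma:cloness_guarantee_gs} says that for this $\epsilon$ the unique integral minimizer $\vc p^*$ is the only integral point within Euclidean distance $\tfrac12$ of $\bar{\vc p}$; since $\|\bar{\vc p}-\vc p^*\|_2<\tfrac12$ forces $\abs{\bar p_i-p^*_i}<\tfrac12$ coordinatewise, simply rounding each coordinate of $\bar{\vc p}$ to the nearest integer recovers $\vc p^*$ exactly in $O(n)$ additional time --- the continued-fraction rounding needed in the general case is entirely bypassed. Output $\vc p^*$.

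The substance has all been offloaded to the earlier lemmas, so the remaining argument is a routine assembly. The two points that need a moment's care are (i) confirming that Lemma~\ref{lemma:cloness_guarantee_gs} only demands accuracy $1/\mathrm{poly}(nMS)$, rather than the $(nMS)^{-\Omega(n)}$ of the general case --- this is the direct payoff of integrality and is exactly what keeps the number of cutting-plane iterations at $O(\log(SMn))$ instead of $O(n\log(SMn))$ --- and (ii) checking that the Lipschitz bound and the minimizer-containing box for $\hat f$ (both unaffected by the minuscule perturbation $\vc r$, since $\vc p^*$ is a minimizer of the original $f$) plug into Theorem~\ref{thm:convex_opt} without introducing any extra polynomial factor into the stated bound. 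Neither is a genuine obstacle; the real work of this section lives in Lemmas~\ref{lemma:vertices_gs}, \ref{lemma:better_perturbation} and~\ref{lemma:cloness_guarantee_gs}.
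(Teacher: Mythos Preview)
Your proposal is correct and follows essentially the same approach as the paper: perturb with $r_j=\tfrac{1}{2Sn}$ via Lemma~\ref{lemma:better_perturbation}, approximately minimize $\hat f$ using Theorem~\ref{thm:convex_opt} with $\epsilon=\tfrac{1}{5nMS}$, and round coordinatewise to the nearest integer using Lemma~\ref{lemma:cloness_guarantee_gs}. The paper's own proof is a two-sentence pointer to Theorem~\ref{thm:algo_we_general} with these substitutions, so your write-up is in fact more detailed than the original; your parenthetical about treating $m$ as poly-bounded for the Lipschitz constant is a caveat the paper leaves implicit.
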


\begin{proof}
  Same proof as in Theorem \ref{thm:algo_we_general} with $\epsilon = 1/(5nMS)$.
  Also, since the optimal price vector is integral, instead of using the method
  of continuous fractions to round a solution, it is enough to round each
  component to the nearest integer.
\end{proof}

\section{Walrasian Equilibrium for Gross Substitutes in $\tilde{O}((mn + n^3)\cdot
T_V)$}\label{sec:gs_value_oracle}

We now move from the macroscopic view of the market to a microscopic view. We
assume access to the market via a \emph{value oracle}, i.e, given a
certain buyer $i$ and a bundle $S \subseteq [n]$ of goods,
we can query the value of $v_i(S)$. We also assume from this point on that
the supply of each good is one, or in other words, that the valuation
functions are defined on the hypercube.

The fact that the demand of each buyer for any given price can be computed by
the greedy algorithm (Definition \ref{def:matroidal}) lets us simulate
the aggregate demand oracle by the value oracle model.

\begin{lemma} \label{lem:gsdemand}
  The outcome of the aggregate demand oracle can be computed from in time
  $O(mn^{2}T_{V})$, where $T_V$ is the running time of the value oracle.
\end{lemma}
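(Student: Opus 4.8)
The plan is to simulate a single call to the aggregate demand oracle at a price vector $\vc{p}$ by computing a demanded bundle for each buyer separately and summing the results. The crucial fact I would invoke is that gross substitute valuations are matroidal maps (by the equivalence theorem of Fujishige and Yang, which is exactly the content of \definitionref{def:matroidal}): for any buyer $i$ and any $\vc{p}$, the greedy algorithm of \definitionref{def:matroidal} outputs a bundle $S$ with $S \in D(i,\vc{p})$. So it suffices to run this greedy procedure once per buyer and add up the bundles, and the whole task reduces to bounding the cost of one greedy run.

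First I would bound the cost of the greedy algorithm for a single buyer $i$. The algorithm maintains a set $S$ that starts empty and gains exactly one element per iteration, so it halts after at most $n$ iterations. Within a fixed iteration the reference value $v_i(S)$ is constant, so we query it once and reuse it; to determine $i^* = \argmax_{j \in [n]\setminus S} v_i(S\cup j) - p_j - v_i(S)$ and to evaluate the stopping test we then need $v_i(S\cup j)$ for each of the at most $n$ candidates $j\notin S$. That is $O(n)$ value oracle calls and $O(n)$ additional arithmetic per iteration, hence $O(n^2)$ value oracle calls and $O(n^2)$ arithmetic in total, i.e. time $O(n^2 T_V)$, to produce one bundle $d_i(\vc{p}) \in D(i,\vc{p})$.

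Finally I would sum over the $m$ buyers: computing $d_i(\vc{p}) \in D(i,\vc{p})$ for every $i$ costs $O(mn^2 T_V)$, and forming $\vc{d}(\vc{p}) = \sum_{i=1}^m d_i(\vc{p})$ costs an additional $O(mn)$ time, which is dominated. Since each $d_i(\vc{p})$ lies in $D(i,\vc{p})$, the vector $\vc{d}(\vc{p})$ is by definition a legitimate output of the aggregate demand oracle, giving the claimed bound of $O(mn^2 T_V)$. There is no substantive obstacle in this argument; the only points that need a moment's care are that $|S|$ is monotone so the greedy loop runs at most $n$ times, and that $v_i(S)$ may be cached within an iteration so the per‑iteration query count stays $O(n)$ — and the latter affects only constant factors.
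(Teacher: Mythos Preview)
Your proposal is correct and follows essentially the same approach as the paper: run the greedy algorithm of \definitionref{def:matroidal} once per buyer, bound each run by $O(n^2)$ value-oracle calls, and sum the resulting demanded bundles. The paper's proof gives the slightly sharper per-buyer count $n\cdot(|S_i|+1)\,T_V$ before bounding it by $O(n^2 T_V)$, but this refinement is not used and your direct $O(n^2)$ bound suffices.
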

\begin{proof}
  The number of queries required for the greedy algorithm described in
  Definition \ref{def:matroidal} to compute $S_i \in \arg \max_S v(S) -
  \vc{p}(S)$ is $n \cdot (\abs{S_i} + 1) T_V \leq O(n^2 T_V)$. Since there are
  $m$ buyers, the total time to compute the demand of all buyers is $O(mn^2
  T_V)$. The aggregate demand oracle simply outputs $d(p)$ where
  $d_{j}(p)=\#\{i:j\in S^*_i\}$.
\end{proof}

Now we can plug Lemma \ref{lem:gsdemand} directly into Theorem
\ref{thm:algo_ad_gs} and obtain a running time of $\tilde{O}(mn^3 T_{V})$. In the rest of this section, we show how to improve this to $\tilde{O}((mn + n^3) T_{V})$.

\subsection{Faster Algorithm via regularization}

The idea to improve the running time from $\tilde{O}(mn^3 T_{V})$ to
$\tilde{O}((mn + n^3) T_{V})$ is to regularize the objective function.
As with the use of regularizers in other context in optimization,
this is to penalize the algorithm for being too aggressive. The bound
of $O(mn^{2})$ value oracle calls per iteration of the cutting plane
algorithm is so costly precisely because we are trying to take
an aggressively large step.

To provide some intuition, imagine that we have a price $\vc{p}$ that is very
close to optimal and that $S_i$ are the set of items demanded by the buyers at
price $\vc{p}$. Intuitively, if $\vc{p}$ is close to a Walrasian price then the
sets $S_1, \hdots, S_m$ should be almost disjoint, which means that the total
cost of the greedy algorithm should be $\sum_i n (\abs{S_i}+1) \approx mn +
n^2$. So when the prices are good, oracle calls should be cheaper. This tells us
that when prices are good, fewer calls to the value oracle suffice to compute
the aggregate demand oracle. When prices are far from equilibrium, perhaps a more
crude approximation to the aggregate demand oracle is enough.

Based on this idea we define the following regularized objective function:

  \begin{equation}\label{reg_convex_f}\tag{RC}
    \tilde{f}(\vc{p})=\max_{\sum_i \abs{S_i} =
    n}\left[\sum_{i=1}^{m}v_{i}(S_i)-\vc{p}(S_i)\right] +\vc{p}([n]).
  \end{equation}

The regularization consists of taking the maximum over all sets $(S_1, \hdots,
S_n)$ such that $\sum_i \abs{S_i} = n$. Without this restriction, we have the
original market potential function $f$. The new function $\tilde{f}$ has three
important properties: (i) it is still convex, since it is a maximum over linear
functions in $\vc{p}$ and therefore we can minimize it easily; (ii) its set of
minimizers is the same as the set of minimizers of $f$  and (iii) subgradients
are cheaper to compute. Intuitively, $\tilde{f}$ is very close to $f$ when
$\vc{p}$ is close to
equilibrium prices but only a crude estimate when $\vc{p}$ is far from
equilibrium. Next we show those statements formally and present an algorithm for
computing the subgradient of $f$.

We give an alternate form of $\tilde{f}$ which is nicer to work with
algorithmically. The next lemma shows that
$$
\tilde{f}(\vc{p})=\sum_{i=1}^{m}\left(\max_{S\subseteq[n]}v_{i}(S)-
\vc{p}(S)-\gamma\cdot|S|\right)+\vc{p}([n])+\gamma\cdot n
$$
for some $\gamma$ that depends on $\vc{p}$ (and the tiebreaking rule
used by the greedy algorithm). Among other things, this formulation
of $\tilde{f}$ resembles common regularizers used in optimization
better. One can think of it as if $\vc{p}$ is changed to $\vc{p}+\gamma\cdot\mathbf{1}_{[n]}$.
\begin{lemma}
\label{lem:regularize_lemma}Suppose $v_{i}(j)$ are given and stored
as $n$ lists sorted in decreasing order $\texttt{L}_j = \{v_{i}(j)\}_{i=1..m}$.
With a running time of $n^{2}\cdot T_{V}+\tilde{O}(n^{2})$
\footnote{Assuming the cost of initializing $S_{i}^{*}=\emptyset$ is not needed.
This is acceptable here because our algorithm would only use $S_{i}^{*}$
to compute the subgradient which $S_{i}^{*}=\emptyset$ has no effect
on.
}, given price $\vc{p}$, there is an algorithm, which we call
\textsc{AllGreedy}, that finds \begin{enumerate}
\item $S_{1}^{*},\ldots,S_{m}^{*}$ maximizing
$
\max_{\sum_{i}|S_{i}|=n}\left(\sum_{i=1}^{m}v_{i}(S_{i})-\vc{p}(S_{i})\right).
$

\item $\gamma$ such that for all $i$, $S_{i}^{*}\in D(i,\vc{p}+\gamma\cdot\vc{1}_{[n]})$.
Moreover, for any $\gamma'>\gamma$ and $S_{i}'\in D(i,\vc{p}+\gamma'\cdot\vc{1}_{[n]})$,
we have $\sum_{i}|S_{i}'|<n$.
\item $\tilde{f}(\vc{p})=f(\vc{p}+\gamma\cdot\vc{1}_{[n]})$.
\end{enumerate}
\end{lemma}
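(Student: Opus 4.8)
The algorithm \textsc{AllGreedy} will run the $m$ single-buyer greedy procedures of Definition~\ref{def:matroidal} simultaneously. It maintains a priority queue of ``candidate moves'' and performs exactly $n$ iterations; at each iteration it takes the buyer $i$ and item $k\notin S_i$ achieving the globally largest marginal gain $v_i(S_i\cup k)-v_i(S_i)-p_k$ (ties broken by a fixed consistent rule, and \emph{negative} gains allowed), sets $S_i\leftarrow S_i\cup k$, and records the realized gain. Writing $\mu_1\ge\mu_2\ge\cdots\ge\mu_n$ for the realized gains, it outputs $\gamma:=\mu_n$ together with the $S_i^*$'s. For the running time I would keep two heaps: (a) a heap over the already-\emph{activated} buyers (those with $S_i\neq\emptyset$) keyed by their current best marginal gain, where refreshing buyer $i$'s key costs $n-|S_i|$ value-oracle calls and is done only just after $i$ is picked; and (b) a heap over items $j$ keyed by $\big(\max\{v_i(j): i\text{ not yet activated}\}\big)-p_j$, whose value is read off the head of $\texttt{L}_j$ through a single pointer per list, so that the best \emph{first} move of any dormant buyer is found with no value-oracle call at all. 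Activating a buyer marks it (so future pointer scans skip it) and causes pointer advances in the $n$ lists plus $O(n)$ heap operations; over the whole run, pointer advances and heap operations total $\tilde O(n^2)$, and since $\sum_i|S_i^*|=n$ and at most $n$ buyers are ever activated, the value-oracle work is $\sum_i(|S_i^*|+1)\cdot O(n)=O(n^2)$ calls. This yields $n^2 T_V+\tilde O(n^2)$; ignoring the $m-n$ dormant buyers entirely is exactly the $\emptyset$-initialization caveat of the footnote.

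Correctness rests on the valuated-matroid ($M^\natural$-concave) structure of gross substitutes. The single-buyer fact I would use is: running buyer $i$'s greedy at price $\vc{p}$ and letting $\mu^{(i)}_1\ge\mu^{(i)}_2\ge\cdots$ be the successive marginal gains it picks, the profile $h_i(t):=\max_{|S|=t}v_i(S)-\vc{p}(S)$ is concave with $h_i(t)-h_i(t-1)=\mu^{(i)}_t$. This is standard; it follows from Definition~\ref{def:matroidal} applied to the shifted prices $\vc{p}-\lambda\vc{1}_{[n]}$, letting $\lambda$ decrease, since the greedy choices for successive thresholds are nested prefixes. Because \textsc{AllGreedy} always pops the globally largest available marginal (and its picks for a given buyer are that buyer's greedy picks in order), the multiset $\{\mu_1,\dots,\mu_n\}$ is precisely the $n$ largest elements of $\bigcup_i\{\mu^{(i)}_1,\mu^{(i)}_2,\dots\}$, and the sizes $t_i:=|S_i^*|$ satisfy $\mu^{(i)}_{s}\ge\mu_n=\gamma$ for all $s\le t_i$ and $\mu^{(i)}_{t_i+1}\le\gamma$. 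Item~1 follows: $v_i(S_i^*)-\vc{p}(S_i^*)=h_i(t_i)$ is optimal among sets of size $t_i$, and $(t_1,\dots,t_m)$ maximizes $\sum_i h_i(t_i)$ subject to $\sum_i t_i=n$, since $\sum_i h_i(t_i)=\sum_i\sum_{s\le t_i}\mu^{(i)}_s$ and the greedy merge of the decreasing sequences $\mu^{(i)}$ is optimal by a one-step exchange argument.

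For Item~2, observe that $S_i^*\in D(i,\vc{p}+\gamma\vc{1}_{[n]})$ is equivalent to $t_i\in\argmax_t\big(h_i(t)-\gamma t\big)$ (since $v_i(S)-\vc{p}(S)\le h_i(|S|)$ for every $S$), and by concavity of $h_i$ this holds iff $\mu^{(i)}_{t_i}\ge\gamma\ge\mu^{(i)}_{t_i+1}$, which we have recorded. For the ``moreover'' clause, if $\gamma'>\gamma$ and $S_i'\in D(i,\vc{p}+\gamma'\vc{1}_{[n]})$ then the same argument gives $\mu^{(i)}_{|S_i'|}\ge\gamma'>\gamma=\mu_n$, so $|S_i'|$ counts only marginals strictly above the $n$-th largest gain; summing, $\sum_i|S_i'|\le\#\{\text{global marginals }>\mu_n\}\le n-1<n$. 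Finally, Item~3 is a computation: by Item~1, $\tilde{f}(\vc{p})=\sum_i\big(v_i(S_i^*)-\vc{p}(S_i^*)\big)+\vc{p}([n])$; on the other side, $f(\vc{p}+\gamma\vc{1}_{[n]})=\sum_i\max_S\big(v_i(S)-\vc{p}(S)-\gamma|S|\big)+\vc{p}([n])+\gamma n=\sum_i\big(v_i(S_i^*)-\vc{p}(S_i^*)-\gamma t_i\big)+\vc{p}([n])+\gamma n$ by Item~2, and the two are equal because $\sum_i t_i=n$; the same identity establishes the displayed alternate form of $\tilde{f}$.

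\textbf{Main obstacle.} The delicate point is the interplay between the unavoidable tie-breaking rule and the exact statements of Item~2: when several marginals equal the threshold $\gamma=\mu_n$, the set $S_i^*$ produced by \textsc{AllGreedy} need not be \emph{the} set output by the greedy of Definition~\ref{def:matroidal} at $\vc{p}+\gamma\vc{1}_{[n]}$ (that greedy uses a strict inequality to stop), only a local — hence, by discrete concavity (Definition~\ref{def:discrete_convex}), global — optimum; and the strict inequality $\sum_i|S_i'|<n$ must hold for \emph{every} $\gamma'>\gamma$, not just generic ones. Routing the whole analysis through the concave profiles $h_i$ and the characterization ``size $t$ is demanded at uniform shift $\lambda$ iff $\mu^{(i)}_t\ge\lambda\ge\mu^{(i)}_{t+1}$'' is what makes these edge cases uniform; everything else is the bookkeeping of the amortized running time.
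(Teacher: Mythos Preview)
Your proposal is correct and the algorithm is exactly the paper's \textsc{AllGreedy}: run the $m$ single-buyer greedies in parallel, always advancing the one with globally largest current marginal, stop after $n$ picks, and set $\gamma$ to the last marginal picked. Your running-time accounting via the activated/dormant split is a mild repackaging of the paper's maintenance of the lists $\texttt{L}_j$ (the paper updates agent $i$'s entry in each $\texttt{L}_j$ after $i$ is picked, at cost $nT_V+\tilde O(n)$ per iteration; you separate out the dormant buyers and read their best marginal off the precomputed lists by pointer, which amounts to the same thing).

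Where you differ is in the \emph{correctness} argument. You go through the concave size-profiles $h_i(t)=\max_{|S|=t}v_i(S)-\vc{p}(S)$ with increments $\mu^{(i)}_t$, establish Item~1 by an exchange argument on the greedy merge of the $\mu^{(i)}$-sequences, and derive Item~2 from the characterization $t_i\in\argmax_t(h_i(t)-\gamma t)\iff \mu^{(i)}_{t_i}\ge\gamma\ge\mu^{(i)}_{t_i+1}$. The paper instead writes a single sandwich
\[
\tilde f(\vc{p})\;\le\; f(\vc{p}+\gamma\vc{1}_{[n]})\;=\;\sum_i\big(v_i(S_i^*)-\vc{p}(S_i^*)\big)+\vc{p}([n])\;\le\;\tilde f(\vc{p}),
\]
which forces equality throughout and gives Items~1 and~3 in one stroke; Item~2 is asserted ``by construction.'' The paper's chain is slicker for Items~1 and~3, but your route through $h_i$ is more explicit on the ``moreover'' clause of Item~2 (that \emph{every} $S_i'\in D(i,\vc{p}+\gamma'\vc{1}_{[n]})$ has $\sum_i|S_i'|<n$), which the paper does not spell out.
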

\begin{proof}
First, we define the algorithm \textsc{AllGreedy}.
The algorithm starts with a
very large value of $\gamma$ such that
$D(i,\vc{p}+\gamma\cdot\vc{1}_{[n]})=\{\emptyset\}$ for all agents $i$. Then we
gradually decrease $\gamma$ keeping at each step a set $S^*_i(\gamma) \in
D(i,\vc{p}+\gamma\cdot\vc{1}_{[n]})$ that monotonically grow as $\gamma$
decreases, in the sense that for $\gamma_1 > \gamma_2$, $S_i^*(\gamma_1)
\subseteq \S_i^*(\gamma_2)$. 
We stop the algorithm as $\sum_i \abs{S^*_i(\gamma)}$ reaches $n$.

The algorithm is best visualized as a continuous process, although it admits a
very efficient discrete implementation as we will see in a moment. Before, we
note that we can use the Greedy algorithm to compute $S^*_i(\gamma) \in
D(i,\vc{p}+\gamma\cdot\vc{1}_{[n]})$ and if we fix the same tie breaking, the
order in which we add the elements is the same for every $\gamma$, the only
thing that changes is the stopping criteria (the larger $\gamma$ is, the later
we stop).

So this procedure can be implemented by running a greedy algorithm in parallel
for each agent $i$. Initially $\gamma$ is very large and all $S_i^*(\gamma) =
\emptyset$. Then in any given moment, we can compute what is the largest value
of $\gamma$ for which it is possible to add one more item to the demanded set of
$i$. This is simply the largest marginal of an $i$ for the next item:
$$\max_i \max_{j \notin S^*_i} v_i(S^*_i \cup j) - v_i(S^*_i) - p_j$$
We can decrease $\gamma$ to this value and advance one of the agent's greedy
algorithm one step further.

We need to argue that it satisfies the three properties in the lemma and that it
can be implemented in $n^2 T_V + O(n^2)$ time. For the running time, the
algorithm can by updating lists $\texttt{L}_j$ such that in each iteration, it
is a sorted list of $v_i(S^*_i \cup j) - v_i(S_i^*)$. Since all sets start as
the empty set, this is correct in the beginning of the algorithm. Now, in each
iteration, we can scan all the lists to find the next largest marginal, taking
$O(n)$ to inspect the top of each list. This gives us the next value of $\gamma$
and the pair $i,j$ to update $S^*_i \leftarrow S^*_i \cup j$. Now, after the
update, we go through each list $\texttt{L}_k$ updating the value of the
marginal of agent $i$ for $k$, since $S_i^*$ was updated. This takes
$O(\log(m))$ for each list, so in total, this process takes $n T_V + \tilde{O}(n)$.
Since there are at most $n$ iterations, the overall running time is $n^2 T_V +
\tilde{O}(n^2)$.

Now, for three properties in the lemma, property 2 is true by construction. For
properties 1 and 3, consider the following chain of inequalities:

$$\begin{aligned}
\tilde{f}(\vc{p}) & =
\max_{\sum_{i}|S_{i}|=n}\left(\sum_{i=1}^{m}v_{i}(S_{i})-\vc{p}(S_{i})\right)
+ \vc{p}([n]) \\
   & =
  \max_{\sum_{i}|S_{i}|=n}\left(\sum_{i=1}^{m}v_{i}(S_{i})-\vc{p}(S_{i})-\gamma\cdot|S_{i}|\right)
  + \vc{p}([n]) +\gamma\cdot n\\
 & \leq
 \max_{S_{i}\subseteq[n]}\left(\sum_{i=1}^{m}(v_{i}(S_{i})-\vc{p}(S_{i})-\gamma\cdot|S_{i}|\right)+\vc{p}([n])
 +\gamma\cdot n
  = f(\vc{p} + \gamma \cdot \vc{1}_{[n]})\\
 & =
 \sum_{i=1}^{m}\left[v_{i}(S_{i}^{*})-\vc{p}(S_{i}^{*})-\gamma\cdot|S_{i}^{*}|\right]+
 \vc{p}([n]) +\gamma\cdot n
  =  \sum_{i=1}^{m} \left[ v_{i}(S_{i}^{*})-\vc{p}(S_{i}^{*}) \right] +
  \vc{p}([n])\\
 & \leq
 \max_{\sum_{i}|S_{i}|=n}\left(\sum_{i=1}^{m}v_{i}(S_{i})-\vc{p}(S_{i})\right) +
 \vc{p}([n]) = \tilde{f}(\vc{p})
\end{aligned}$$

Hence, all inequalities hold with equality, which means in particular that
$\tilde{f}(\vc{p}) = f(\vc{p} + \gamma \cdot \vc{1}_{[n]})$ and $S^*_1, \hdots,
S_m^*$ maximize
$\max_{\sum_{i}|S_{i}|=n}\left(\sum_{i=1}^{m}v_{i}(S_{i})-\vc{p}(S_{i})\right)$.
\end{proof}
\begin{corollary}
\label{cor:reg_sep}Suppose $v_{i}(j)$ are given and stored as $n$
sorted lists $\{v_{i}(j)\}_{j}$ each of which has $m$ elements.
Then the greedy algorithm computes a subgradient of $\tilde{f}$ in
$n^{2}\cdot T_{V}+\tilde{O}(n^{2})$ time.\end{corollary}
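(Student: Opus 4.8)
The plan is to derive the corollary almost immediately from Lemma~\ref{lem:regularize_lemma} together with the Envelope Theorem (Theorem~\ref{thm:envelope}). First I would recall that $\tilde f$ is, by definition, the pointwise maximum over all tuples $(S_1,\dots,S_m)$ with $\sum_i|S_i|=n$ of the affine functions $\vc{p}\mapsto \sum_{i=1}^m\left(v_i(S_i)-\vc{p}(S_i)\right)+\vc{p}([n])$. Each such function is linear (hence convex) in $\vc{p}$ with gradient $\vc{1}_{[n]}-\sum_i\vc{1}_{S_i}$, so the Envelope Theorem tells us that $\partial\tilde f(\vc{p})$ is the convex hull of these gradients taken over the tuples attaining the maximum; in particular, for \emph{any} single maximizing tuple $(S_1^*,\dots,S_m^*)$ the vector $\vc{1}_{[n]}-\sum_i\vc{1}_{S_i^*}$ is a valid subgradient of $\tilde f$ at $\vc{p}$.

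Second, I would invoke Lemma~\ref{lem:regularize_lemma}: run \textsc{AllGreedy} on the price $\vc{p}$, using the $n$ sorted lists $\{v_i(j)\}_j$ as input. Its Property~1 guarantees that the returned sets $S_1^*,\dots,S_m^*$ maximize $\max_{\sum_i|S_i|=n}\left(\sum_{i=1}^m v_i(S_i)-\vc{p}(S_i)\right)$, and this costs $n^2\cdot T_V+\tilde O(n^2)$ time. Finally, forming the subgradient $\vc{1}_{[n]}-\sum_i\vc{1}_{S_i^*}$ amounts to maintaining an $n$-coordinate counter and, for each item $j$, tallying how many of the sets $S_i^*$ contain $j$; since $\sum_i|S_i^*|=n$ this bookkeeping adds only $O(n)$ time (at worst $O(n^2)$ if the sets are scanned naively), which is absorbed into the claimed bound. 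Hence a subgradient of $\tilde f$ is produced in total time $n^2\cdot T_V+\tilde O(n^2)$.

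I do not expect a genuine obstacle here, since the statement is essentially a repackaging of Lemma~\ref{lem:regularize_lemma} with the standard ``subgradient of a max of linear functions'' fact. The only point worth a line of care is the sign: the $\vc{p}([n])$ term contributes $+\vc{1}_{[n]}$ to the gradient of the active piece, so the subgradient is the \emph{excess supply} $\vc{1}_{[n]}-\sum_i\vc{1}_{S_i^*}$ and not its negation — consistent with Lemma~\ref{lem:subgradient} in the unregularized case (where $\gamma=0$ and the subgradient is $\vc{s}-\vc{d}(\vc{p})$ with $\vc{s}=\vc{1}_{[n]}$).
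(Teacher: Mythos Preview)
Your proposal is correct and follows essentially the same approach as the paper's proof, which simply notes that Lemma~\ref{lem:regularize_lemma} produces a maximizing tuple $(S_1^*,\dots,S_m^*)$ and that the gradient of the corresponding active affine piece is a subgradient of $\tilde f$. Your version is slightly more detailed (explicitly invoking the Envelope Theorem and checking the sign), but the substance is identical.
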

\begin{proof}
This follows directly from Lemma \ref{lem:regularize_lemma} as the gradient of
$\sum_{i=1}^{m}\left(v_{i}(S_{i}^{*})-\vc{p}(S_{i}^{*})\right) - \vc{p}([n])$ is
a subgradient of $\tilde{f}$.\end{proof}
\begin{corollary}
\label{cor:reg_minimizer}If $\vc{p}^{*}$ minimizes $\tilde{f}$, then
$\vc{p}^{*}+\gamma\cdot\vc{1}_{[n]}$ is an equilibrium price. Here
$\gamma$ is defined as in Lemma \ref{lem:regularize_lemma} with
respect to $\vc{p}^{*}$. Conversely, any Walrasian price $\vc{p}^{\eql}$ is a minimizer of $\tilde{f}$.
\end{corollary}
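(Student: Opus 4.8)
The plan is to establish the two‑sided bound $f^{*}\le\tilde{f}(\vc{p})\le f(\vc{p})$ for every price vector $\vc{p}$, where $f^{*}=\min_{\vc{q}}f(\vc{q})$ is the optimal value of the market potential \eqref{convex_f} (which is finite and attained, since a Walrasian equilibrium exists for gross substitutes by \theoremref{thm:kelso_crawford}). The right‑hand inequality is immediate once we rewrite, in the unit‑supply setting of this section, $f(\vc{p})=\max_{(S_1,\dots,S_m)}\sum_i\bigl(v_i(S_i)-\vc{p}(S_i)\bigr)+\vc{p}([n])$ with the maximum over \emph{all} tuples, whereas $\tilde{f}(\vc{p})$ in \eqref{reg_convex_f} is the same maximum restricted to the family $\{\sum_i|S_i|=n\}$; restricting a maximum can only decrease it. The left‑hand inequality is exactly where the regularization does its work: by \lemmaref{lem:regularize_lemma}(3), for each $\vc{p}$ there is a scalar $\gamma=\gamma(\vc{p})$ produced by \textsc{AllGreedy} with $\tilde{f}(\vc{p})=f(\vc{p}+\gamma\cdot\vc{1}_{[n]})\ge f^{*}$.

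From the sandwich I immediately read off $\min_{\vc{p}}\tilde{f}(\vc{p})=f^{*}$. For the first assertion, let $\vc{p}^{*}$ minimize $\tilde{f}$ and let $\gamma$ be the associated scalar of \lemmaref{lem:regularize_lemma}; then $f(\vc{p}^{*}+\gamma\cdot\vc{1}_{[n]})=\tilde{f}(\vc{p}^{*})=f^{*}$, so $\vc{p}^{*}+\gamma\cdot\vc{1}_{[n]}$ lies in $\arg\min_{\vc{p}}f(\vc{p})$, and since the minimizers of $f$ are precisely the Walrasian prices (as noted right after \eqref{convex_f}, via the LP of \lemmaref{lem:eqismin}), this is an equilibrium price. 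For the converse, let $\vc{p}^{\eql}$ be any Walrasian price, so $f(\vc{p}^{\eql})=f^{*}$; then the sandwich gives $f^{*}\le\tilde{f}(\vc{p}^{\eql})\le f(\vc{p}^{\eql})=f^{*}$, hence $\tilde{f}(\vc{p}^{\eql})=f^{*}=\min\tilde{f}$, i.e.\ $\vc{p}^{\eql}$ minimizes $\tilde{f}$.

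There is no deep obstacle here — the content is entirely packaged inside \lemmaref{lem:regularize_lemma} — but the point to be careful about is that the identity $\tilde{f}(\vc{p})=f(\vc{p}+\gamma\cdot\vc{1}_{[n]})$ must hold for \emph{every} $\vc{p}$, not just near equilibrium. This requires that $\gamma(\vc{p})$ is well defined for arbitrary $\vc{p}$, i.e.\ that the parallel greedy sweep in \textsc{AllGreedy} always terminates with $\sum_i|S_i^{*}|=n$; this is fine because the quantity $\sum_i|S_i^{*}(\gamma)|$ increases through $0,1,2,\dots$ as $\gamma$ decreases and eventually would reach $mn\ge n$, so it hits $n$ exactly. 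Granting that, the corollary is just the two monotonicity observations above combined with the already‑established correspondence between $\arg\min f$ and the set of Walrasian prices.
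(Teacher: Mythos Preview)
Your proof is correct and follows essentially the same approach as the paper: both arguments rest on the sandwich $f^{*}\le\tilde{f}(\vc{p})\le f(\vc{p})$, with the lower bound coming from \lemmaref{lem:regularize_lemma}(3) and the upper bound from $\tilde{f}$ being a restricted maximum. The only cosmetic difference is that for the forward direction the paper verifies directly that the Walrasian allocation $(S_1,\dots,S_m)$ lies in each demand set at $\vc{p}^{*}+\gamma\cdot\vc{1}_{[n]}$, whereas you invoke the already-established identification $\arg\min f=\{\text{Walrasian prices}\}$; these are equivalent routes to the same conclusion.
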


The proof of the previous corollary is given in the appendix.

\begin{theorem}
For gross substitutes, we can find an equilibrium price in
time $mn\cdot T_{V}+O(mn\log m+n^{3}\log(mnM)\cdot T_{V}+n^{3}\log^{O(1)}(mnM))$.
\end{theorem}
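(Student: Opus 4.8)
The plan is to combine the regularized objective $\tilde f$ from \equationref{reg_convex_f} with the functional-value-free cutting plane method of \theoremref{thm:convex_opt}, exploiting the cheap subgradient oracle of \corollaryref{cor:reg_sep}, and then round to an exact integral minimizer exactly as in \theoremref{thm:algo_ad_gs}. First I would set up the preprocessing: sort the $n$ lists $\texttt{L}_j = \{v_i(j)\}_{i=1..m}$ in decreasing order, which costs $mn \cdot T_V$ oracle calls to read off all the singleton values plus $O(mn\log m)$ time to sort. This is the only place the $mn$ term enters, and it is done once up front. After this, every subgradient query to $\tilde f$ costs only $n^2 T_V + \tilde O(n^2)$ by \corollaryref{cor:reg_sep}, rather than the $mn^2 T_V$ a naive aggregate-demand simulation would cost.

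Next I would bound the Lipschitz constant and the location of the minimizers so that \theoremref{thm:convex_opt} applies. Since $\tilde f$ is a max of linear functions whose gradients are vectors in $\{-1,0,1\}^n$ (each $\vc s - \sum_i \vc 1_{S_i}$ has entries bounded by $n$ in absolute value, but in fact by the supply which is $1$ here plus the multiplicity, giving an $O(\sqrt n)$ bound on the $\ell_2$ norm), $\tilde f$ is $L$-Lipschitz with $L = O(n)$. By \corollaryref{cor:reg_minimizer} the minimizers of $\tilde f$ coincide with the minimizers of $f$, and by \lemmaref{lemma:initial_box} (specialized to $S=1$) these lie in $[-2M,2M]^n$; after the deterministic perturbation $r_j = \tfrac{1}{2Sn}$ of \lemmaref{lemma:better_perturbation} the perturbed regularized objective $\hat{\tilde f}$ has a unique minimizer which is integral and is also a minimizer of the unperturbed problem. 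Then \theoremref{thm:convex_opt} applied to $\hat{\tilde f}$ with $\epsilon = 1/(5nMS)$ produces an $\epsilon$-approximate minimizer $\bar{\vc p}$ in time $O(n \cdot T \log(nML/\epsilon) + n^3 \log^{O(1)}(nML)\log^2(nML/\epsilon))$ where $T = n^2 T_V + \tilde O(n^2)$ is the cost of one subgradient query; substituting $L = O(n)$, $S=1$, and $\epsilon = 1/(5nM)$ gives the $n^3 \log(mnM) \cdot T_V + n^3 \log^{O(1)}(mnM)$ terms in the claimed bound.

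Finally I would round: by \lemmaref{lemma:cloness_guarantee_gs} the unique integral point within $\ell_2$-distance $\tfrac12$ of $\bar{\vc p}$ is the exact minimizer $\vc p^*$ of $\hat{\tilde f}$, so rounding each coordinate of $\bar{\vc p}$ to the nearest integer recovers $\vc p^*$ in $O(n)$ time. By \corollaryref{cor:reg_minimizer}, $\vc p^* + \gamma \cdot \vc 1_{[n]}$ is a Walrasian price, where $\gamma$ is the quantity returned by \textsc{AllGreedy} (\lemmaref{lem:regularize_lemma}) run at $\vc p^*$; one more call to \textsc{AllGreedy} at cost $n^2 T_V + \tilde O(n^2)$ extracts this $\gamma$. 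Adding the preprocessing cost $mn \cdot T_V + O(mn \log m)$ to the optimization cost gives the stated running time $mn \cdot T_V + O(mn \log m + n^3 \log(mnM) \cdot T_V + n^3 \log^{O(1)}(mnM))$.

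The main obstacle I anticipate is \textbf{property (ii)} — verifying rigorously that the minimizers of $\tilde f$ equal those of $f$ and that the perturbation/rounding machinery of \theoremref{thm:algo_ad_gs} transfers verbatim to $\tilde f$. The inequality chain in \lemmaref{lem:regularize_lemma} shows $\tilde f(\vc p) = f(\vc p + \gamma \vc 1_{[n]}) \ge \min f$, but one must also check that a Walrasian price is actually a global minimizer of $\tilde f$ (this is the converse direction of \corollaryref{cor:reg_minimizer}), and that the perturbation analysis — which in Section 4 was stated for $f$ and its associated dual LP — still isolates a unique integral minimizer of the regularized problem. Since \corollaryref{cor:reg_minimizer} already packages exactly the equivalence of minimizer sets, the remaining work is mostly bookkeeping: confirming that $\hat{\tilde f}$ inherits integrality of its minimizer from \lemmaref{lemma:vertices_gs} and \lemmaref{lemma:better_perturbation}, and that \lemmaref{lemma:cloness_guarantee_gs} applies with the regularized objective in place of $\hat f$. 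The Lipschitz bound and the $\epsilon$ bookkeeping are routine.
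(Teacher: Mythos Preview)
Your proposal is correct and follows essentially the same approach as the paper: preprocess the sorted lists at cost $mn\cdot T_V + O(mn\log m)$, use \corollaryref{cor:reg_sep} as the subgradient oracle inside the cutting-plane method of \theoremref{thm:convex_opt}, and round exactly as in \theoremref{thm:algo_ad_gs}. Your writeup is in fact more detailed than the paper's own proof (which just cites \corollaryref{cor:reg_minimizer} and says ``the exact same method used in \theoremref{thm:algo_ad_gs} can be used''), and you correctly flag the one point that deserves care---that the perturbation and rounding lemmas transfer to $\tilde f$---as well as the final \textsc{AllGreedy} call to extract $\gamma$; one minor slip is that the subgradient entries $1-\abs{\{i:j\in S_i^*\}}$ can be as negative as $1-n$, so the Lipschitz bound is $L=O(n)$ (as you ultimately use) rather than $O(\sqrt n)$.
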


\begin{proof}
  Corollary \ref{cor:reg_minimizer} says that it is enough to find a minimizer
  of $\tilde{f}$. The exact same method used in Theorem \ref{thm:algo_ad_gs} can
  be used to solve $\tilde{f}$ approximately and then round it to an optimal
  solution.

  To bound the overall running time, we note that:
  computing $v_{i}(j)$ and storing them as $n$ sorted lists takes
  $mn\cdot T_{V}+O(mn\log m)$ time. By Corollary \ref{cor:reg_sep},
  the separation oracle for $\tilde{f}$ can be implemented in
  $n^{2}\cdot T_{V}+O(n^{2} \log(m))$ time.
\end{proof}

\section{Robust Walrasian Prices, Market Coordination and Walrasian allocations}
\label{sec:robust_walrasian_prices}

So far we focused on computing Walrasian prices. Now we turn to the other
component of Walrasian equilibrium, which is to compute the optimal allocation.
If we have only access to an
aggregate demand oracle, then computing prices is all that we can hope for,
since we have no per-buyer information (in fact, we don't even know the
number of buyers). If we have access to a value oracle, computing the optimal
allocation is possible.

To convince the reader that this is a non-trivial problem, we show that
computing an optimal allocation from Walrasian prices is at least as hard as
solving the \emph{Matroid Union Problem}. In the Matroid Union problem we are
given $m$ matroids defined over the same ground set $M_i = ([n],
\mathcal{B}_i)$ and a promise that there exist basis $B_i \in \mathcal{B}_i$
such that $[n] = \cup_i B_i$. The goal is to find those bases. Now, consider the following
mapping to the problem of computing an optimal allocation: consider $m$ agents
with valuations over a set $[n]$ of items such that $v_i = r_{M_i}$, i.e. 
the rank of matroid $M_i$ (matroid rank functions are known to be gross
substitute valuations \cite{gs_survey}). The price vector $\vc{1}$ is clearly
a vector of Walrasian prices. Finding the optimal allocation, however, involves
finding $\vc{S} = (S_1, \hdots, S_m)$ maximizing $\sum_i r_{M_i}(S_i)$.

The previous paragraph hopefully convinced the reader that finding an allocation
is not always a simple task even if we know the prices. One approach to solve
this problem is based on a modification of standard matroid union algorithms.

The second approach, which we discuss here in details, is based on convex
programming and reveals an important structural property of gross
substitute valuations that might be of independent interest. Incidentally, this
also answers an open question of Hsu et al. \cite{hsu_prices} who asked what are
the conditions for markets to be perfectly coordinated using prices. More precisely, they showed that under some genericity condition the minimal Walrasian price \textit{for a restricted class of gross substitutes} induces an overdemand at most 1 for each item. On the other hand, our argument in this section says that under the same condition \textit{almost every} Walrasian prices \textit{for any gross substitutes} have \textit{no} overdemand, i.e. the market is perfectly coordinated. This follows from the fact that the polytope of Walrasian prices have nonempty interior and that interior Walrasian price induces no overdemand (see section 6.2).

Next, we review two combinatorial lemmas that will be fundamental for the rest
of this section and the next one:

\subsection{Two combinatorial lemmas}

One of the most useful (and largely unknown) facts about gross substitutes is the
following analogue to the Unique Matching Theorem for matroids.
The version of this theorem for gross substitutes is due to Murota \cite{Murota96a,
Murota96b} and it was originally proved in the context of valuated matroids,
which are known to be equivalent to gross substitutes under a certain
transformation. We refer the reader to Lemma 10.1 in \cite{gs_survey} 
for a proof of this lemma in the language of gross substitute valuations:

\begin{lemma}[Unique Minimum Matching Lemma]\label{lemma:unique_matching}
Let $v : 2^{[n]} \rightarrow \R$ be valuation satisfying gross substitutes, $S
\subseteq [n]$, $A = \{a_1, \hdots, a_k\} \subseteq S$, $B = \{b_1, \hdots,
b_k\} \subseteq [n] \setminus S$. Consider weighted bipartite graph $G$ with left set
$A$, right set $B$ and edge weights $w_{a_i, b_j} = v(S) - v(S \cup b_j
\setminus a_j)$. If $M = \{(a_1, b_1), (a_2, b_2), \hdots, (a_k, b_k)\}$ is
the unique minimum matching in $G$, then:
$$v(S) - v(S \cup B \setminus A) = \sum_{j=1}^k v(S) - v(S \cup b_j \setminus
a_j)$$
\end{lemma}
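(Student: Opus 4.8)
The plan is to prove the identity by induction on $k=|A|=|B|$, peeling one matched pair of $M$ off at a time. The base case $k=1$ is just the definition $w_{a_1,b_1}=v(S)-v(S\cup b_1\setminus a_1)$. Write $T:=S\cup B\setminus A$ and note $|T|=|S|$ since $A\subseteq S$, $B\cap S=\emptyset$ and $|A|=|B|$. I would first isolate the ``easy'' half of the statement, which needs no uniqueness: for every perfect matching $\pi$ of the bipartite graph, $v(S)-v(T)\le\sum_j w_{a_j,b_{\pi(j)}}$. This is obtained by realizing the passage from $S$ to $T$ as $k$ successive single swaps following $\pi$ and bounding each step via the exchange property of gross substitute valuations (equivalently, discrete concavity, Definition~\ref{def:discrete_convex}) so that the single-swap losses telescope. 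Since $M$ is in particular a minimum-weight perfect matching, taking $\pi=M$ gives $v(S)-v(T)\le\sum_j w_{a_j,b_j}$; thus the entire content of the lemma is the reverse inequality, and that is where uniqueness must be used.

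For the reverse inequality the core step is a ``factored exchange'' identity: there is an index $i$ with
\[
v(S)+v(T)=v(S\cup b_i\setminus a_i)+v\!\left(S\cup(B\setminus b_i)\setminus(A\setminus a_i)\right).
\]
To obtain it I would apply the exchange property to the equal-cardinality pair $(T,S)$: swapping an element $b_\ell\in T\setminus S=B$ out of $T$ produces a partner $a_m\in S\setminus T=A$ with $v(T)+v(S)\le v(S\cup b_\ell\setminus a_m)+v(S\cup(B\setminus b_\ell)\setminus(A\setminus a_m))$. Applying the easy inequality to the residual $(k-1)$-exchange on the right and confronting the result with the global bound $v(S)-v(T)\le\sum_j w_{a_j,b_j}$ forces all inequalities in play to be equalities, which simultaneously upgrades the displayed $\le$ to an equality and certifies that $(a_m,b_\ell)$ together with the optimal matching on $(A\setminus a_m,B\setminus b_\ell)$ forms a minimum perfect matching of the whole graph. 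Uniqueness of $M$ then forces this configuration to be $M$ itself, so $(a_m,b_\ell)$ is a matched edge, i.e., after relabelling, $m=\ell$. (If one prefers the classical route, $M$ being the unique minimum matching furnishes optimal dual potentials under which some $a_i$ is tight only with $b_i$, and the exchange property is applied at that pair.)

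Granting the factored identity, the induction closes with routine bookkeeping. First, $M\setminus\{(a_i,b_i)\}$ is the unique minimum matching of the bipartite graph on $(A\setminus a_i,B\setminus b_i)$ with the same weights: any perfect matching $M'$ of that subgraph with $w(M')\le w(M)-w_{a_i,b_i}$ would, together with the edge $(a_i,b_i)$, give a perfect matching of the full graph of weight $\le w(M)$ and distinct from $M$ — contradicting uniqueness (or minimality). Hence the inductive hypothesis applies and yields $v(S)-v(S\cup(B\setminus b_i)\setminus(A\setminus a_i))=\sum_{j\ne i}w_{a_j,b_j}$. Rearranging the factored identity and substituting $w_{a_i,b_i}=v(S)-v(S\cup b_i\setminus a_i)$ gives $v(S)-v(T)=w_{a_i,b_i}+\sum_{j\ne i}w_{a_j,b_j}=\sum_{j=1}^k w_{a_j,b_j}$, completing the induction.

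The step I expect to be the main obstacle is the factored identity — turning ``$M$ is the \emph{unique} minimum matching'' into ``the single exchange supplied by the axiom can be taken along an edge of $M$.'' A ``crossing'' exchange is, via the easy inequality applied to the residual, forced to be compensated by an optimal sub-matching, and the arithmetic closes up only if the resulting configuration is itself a minimum matching of the original graph, which uniqueness then identifies with $M$. Making this watertight requires the easy inequality to be available not just for the full instance but for every residual sub-instance, together with careful tracking of which inequalities are tight; this is the one place where the full strength of gross substitutability (beyond mere submodularity) is needed, and I expect it to be the technical heart of the argument.
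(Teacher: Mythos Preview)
The paper does not give its own proof of this lemma; it attributes the result to Murota~\cite{Murota96a,Murota96b} and refers the reader to Lemma~10.1 in \cite{gs_survey}. So there is nothing in the paper to compare against beyond the standard argument in the literature.

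Your proposal has a genuine error at the very first step. The ``easy'' inequality you assert --- that $v(S)-v(T)\le\sum_j w_{a_j,b_{\pi(j)}}$ for \emph{every} perfect matching $\pi$ --- is false. Take the unit-demand valuation $v(X)=\max_{j\in X}c_j$ with $c_1=10$, $c_2=7$, $c_3=5$, $c_4=3$, and set $S=\{1,2\}$, $A=S$, $B=\{3,4\}$, $T=\{3,4\}$. Then $v(S)-v(T)=5$, while $w_{1,3}=w_{1,4}=3$ and $w_{2,3}=w_{2,4}=0$, so every perfect matching has weight $3$, and $5\not\le 3$. Your telescoping justification fails for the same reason: along $S_0=\{1,2\}\to S_1=\{2,3\}\to S_2=\{3,4\}$ one has $v(S_1)-v(S_2)=2>0=w_{2,4}$, so the per-step bound you invoke does not hold. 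Discrete concavity in the sense of Definition~\ref{def:discrete_convex} gives no such inequality; it only says that local optimality implies global optimality.

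The inequality that \emph{is} unconditional goes the other way: $v(S)-v(T)\ge\min_\pi\sum_j w_{a_j,b_{\pi(j)}}$. One proves it by induction via exactly the $M^\natural$-concave exchange step you already wrote down: for $a\in A$ there is $b\in B$ with $v(S)+v(T)\le v(S-a+b)+v(T+a-b)$, hence $v(T+a-b)-v(T)\ge w_{a,b}$; add this to the inductive lower bound on $v(S)-v(T+a-b)$ and observe that $w_{a,b}$ plus a minimum residual matching is the weight of some full matching. The role of the uniqueness hypothesis is then to force equality throughout this chain and to pin the exchange pair $(a,b)$ down as an edge of $M$ --- that is the actual hard direction. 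Your write-up has the two directions swapped, and since you feed the false ``easy'' inequality back into the inductive step (``confronting the result with the global bound $v(S)-v(T)\le\sum_j w_{a_j,b_j}$''), the argument for the factored identity collapses as well. The inductive architecture you propose is the right one, but the inequalities must all be reversed.
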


Lastly, we state a purely combinatorial lemma that is commonly used in
conjunction with the previous lemma. We present a sketch of the proof in the
appendix and refer to \cite{gs_survey} for a complete proof.

\begin{lemma}\label{lemma:combinatorial_matching}
Let $G = (V,E,w)$ be a weighted directed graph without negative weight cycles.
Let $C$ be the cycle of
minimum number of edges among the cycles with minimum weight. Let $M := \{(u_1, v_1),
\hdots, (u_t, v_t)\}$ be a set of non-consecutive edges in this cycle, $U =
\{u_1, \hdots, u_t\}$ and $V = \{v_1, \hdots, v_t\}$. Construct a bipartite
graph $G'$ with left set $U$, right set $V$ and for each edge from $u \in U$ to $v\in
V$ in the original graph, add an edge of the same weight to $G'$. Under those
conditions, $M$ forms a unique minimum matching in $G'$. The same result holds
for a path $P$ of minimum length among all the minimum weight paths between a
pair of fixed nodes.
\end{lemma}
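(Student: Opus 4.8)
The plan is a degree-counting / cycle-decomposition argument that exploits \emph{both} that $C$ has minimum weight and that, among minimum-weight cycles, it has the fewest edges. Orient $C$ and split its edge set into the matching edges $M=\{(u_1,v_1),\dots,(u_t,v_t)\}$ and the arcs $Q_1,\dots,Q_t$ of $C$ between consecutive chosen edges, where $Q_i$ runs from $v_i$ to $u_{i+1}$ (indices mod $t$); each $Q_i$ is nonempty since the edges of $M$ are non-consecutive, and we may take $C$ simple (a shortest minimum-weight cycle can be taken simple). Each $(u_i,v_i)$ is an edge of $C\subseteq G$, hence of $G'$, so $M$ is a perfect matching of $G'$; any other perfect matching of $G'$ has the form $M'=\{(u_i,v_{\sigma(i)})\}$ for a permutation $\sigma$ of $[t]$, and each of its edges is again an edge of $G$.

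The key step is to lift $M'$ back into $G$ by reusing the arcs of $C$: let $E'$ be the edge set consisting of the edges of $M'$ together with all the arcs $Q_i$. Passing from $C$ to $E'$ replaces, at each $u_i$, its unique outgoing $C$-edge by another outgoing edge, replaces at each $v_i$ its unique incoming $C$-edge by another incoming edge, and leaves every other vertex untouched; so in $E'$ every vertex has equal in- and out-degree, each at most $1$. Hence $E'$ is a disjoint union of simple directed cycles $C'_1,\dots,C'_k$ of $G$ with $k\ge1$, and since $E'$ and $C$ share exactly the arcs $Q_i$ we get the bookkeeping identities $\sum_\ell|C'_\ell|=|E'|=|C|$ and $\sum_\ell w(C'_\ell)=w(E')=w(C)+(w(M')-w(M))$.

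Now I run two ``budget'' arguments. Since $G$ has no negative-weight cycle and $w(C)$ is the minimum cycle weight, $w(C'_\ell)\ge w(C)\ge0$ for all $\ell$. If $w(M')<w(M)$ then $\sum_\ell w(C'_\ell)<w(C)$, which is impossible for a sum of $k\ge1$ terms each at least $w(C)$; so $w(M')\ge w(M)$ and $M$ is a minimum-weight perfect matching. If instead $w(M')=w(M)$, then $\sum_\ell w(C'_\ell)=w(C)$, which forces $w(C'_\ell)=w(C)$ for every $\ell$, so each $C'_\ell$ is itself a minimum-weight cycle; by minimality of $|C|$ among minimum-weight cycles, $|C'_\ell|\ge|C|$, and together with $\sum_\ell|C'_\ell|=|C|$ this forces $k=1$ and $|C'_1|=|C|$. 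Thus $C'_1$ is a minimum-weight cycle with the minimum number of edges, i.e.\ the cycle $C$ singled out in the hypothesis, so $C'_1=C$; comparing the two descriptions of $C'_1$ (as $M'$ plus the arcs $Q_i$, and as $M$ plus the arcs $Q_i$) yields $M'=M$ and $\sigma=\mathrm{id}$. Hence $M$ is the \emph{unique} minimum-weight perfect matching of $G'$.

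The path version runs identically (assume $s\ne t$; the case $s=t$ is the cycle case). For a minimum-weight $s$-$t$ path $P$ with fewest edges, take $M$ to be the chosen non-consecutive edges and $E'$ to be $M'$ together with the sub-paths of $P$ between consecutive chosen edges. Since $s$ is the source and $t$ the sink of $P$, $E'$ has out-degree minus in-degree equal to $+1$ at $s$, $-1$ at $t$, and $0$ elsewhere, with all degrees at most $1$, so $E'$ decomposes into one simple $s$-$t$ path $P'_0$ plus disjoint simple cycles, the cycles having nonnegative weight and $|P'_0|+\sum_\ell|C'_\ell|=|P|$. The two inequalities $w(P'_0)\ge w(P)$ and (cycle weights) $\ge0$ give $w(M')\ge w(M)$; and when $w(M')=w(M)$ they force $w(P'_0)=w(P)$, hence (by minimality of $|P|$) $|P'_0|\ge|P|$, hence $k=0$ and $P'_0=P$, hence $M'=M$. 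The main obstacle I expect is getting the degree bookkeeping of the lifted set $E'$ exactly right so that it is visibly a disjoint union of simple cycles (resp.\ a simple path plus cycles), and then noticing that \emph{both} minimality properties of $C$ (resp.\ $P$) must be used: the weight bound alone only shows $M$ is \emph{a} minimum matching, and it is the accompanying edge-count identity $\sum_\ell|C'_\ell|=|C|$ that collapses $k$ to $1$ and forces $C'_1=C$, which is what gives uniqueness.
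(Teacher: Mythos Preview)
Your decomposition argument is clean and correctly establishes that $M$ is \emph{a} minimum matching: the degree bookkeeping is right, $E'$ is a disjoint union of simple cycles of $G$, and the weight budget $\sum_\ell w(C'_\ell)=w(C)+(w(M')-w(M))$ together with $w(C'_\ell)\ge w(C)\ge 0$ rules out $w(M')<w(M)$. The gap is in the uniqueness step. From $k=1$, $|C'_1|=|C|$, and $w(C'_1)=w(C)$ you conclude $C'_1=C$, writing ``i.e.\ the cycle $C$ singled out in the hypothesis.'' But the hypothesis only fixes \emph{one} minimum-edge minimum-weight cycle; it does not assert such a cycle is unique. If $M'\neq M$ with $w(M')=w(M)$, your $C'_1$ is simply \emph{another} minimum-edge minimum-weight cycle sharing the arcs $Q_i$ with $C$ --- that is not a contradiction. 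Concretely, with $t=3$ and $\sigma$ the cyclic shift, $E'$ traces out a single cycle $u_1\to v_2\to Q_2\to u_3\to v_1\to Q_1\to u_2\to v_3\to Q_3\to u_1$ of length $|C|$ distinct from $C$, so $k=1$ does not force $\sigma=\mathrm{id}$. The same issue recurs in your path argument at ``hence $P'_0=P$.''

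The paper closes this gap by a different construction in the equal-weight case: for each edge $e=(u',v')\in M'$ it forms the ``shortcut'' cycle $C_e$ consisting of $e$ together with the arc of $C$ from $v'$ back to $u'$. Any $e\notin M$ yields $|C_e|<|C|$, and an averaging argument over the whole family $\{C_e:e\in M'\}$ (counting how many times each edge of $C$ is covered) produces at least one $C_e$ with weight at most $w(C)$. That cycle is then a minimum-weight cycle with \emph{strictly fewer} edges than $C$, contradicting the edge-minimality of $C$ regardless of whether $C$ was unique. Your replacement $E'=M'\cup\bigcup Q_i$ never manufactures anything shorter than $|C|$, which is exactly why it cannot finish the uniqueness claim on its own; to repair your route you would need an additional step that, from $C'_1\neq C$, extracts a strictly shorter minimum-weight cycle.
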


\subsection{Robust Walrasian Prices and Market Coordination}

Hsu et al. \cite{hsu_prices} raise the following important question: when is it
possible to find Walrasian prices that coordinate the market? A vector of
Walrasian prices $\vc{p}$ is said to coordinate the market if each agent has a
unique demanded bundle under $\vc{p}$ and those bundles clear the market. If
this happens, we say that this vector is \emph{robust}.

\begin{definition}[Robust Walrasian Prices]
A price vector $\vc{p}$ is said to be a vector of \emph{robust Walrasian prices}
for a certain market if $D(i,\vc{p}) = \{S_i\}$ and $\vc{S} = (S_1, \hdots, S_m)$
form a partition of the items. 
\end{definition}

Notice that by the Second Welfare Theorem (Lemma \ref{lemma:welfare_thm}), if
the optimal allocation is not unique, then no vector of Walrasian prices is
robust, since each vector of Walrasian prices support all the allocations. If
the optimal allocation is unique, on the other hand, then we show that a vector
of robust Walrasian prices always exists. Moreover, the set of Walrasian prices forms a
full-dimensional convex set in which all interior points are robust.

\begin{theorem}\label{thm:robust_walrasian}
If there is a unique partition $\vc{S} = (S_1, \hdots, S_m)$ maximizing $\sum_i
v_i(S_i)$, then there exist a vector $\vc{p}$ such for all $\vc{p}' \in \prod_j
[p_j - \frac{1}{2n}, p_j + \frac{1}{2n}]$ are Walrasian. In particular, the set of
Walrasian prices is a full-dimensional convex set and all price vectors in its
interior are robust Walrasian prices.
\end{theorem}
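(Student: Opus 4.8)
The plan is to build the price vector $\vc{p}$ directly from the unique optimal partition $\vc{S} = (S_1, \hdots, S_m)$ using the structure of gross substitutes, and then argue that a whole small box around $\vc{p}$ consists of Walrasian prices. The starting observation is that Walrasian prices are exactly the $\vc{p}$ for which each $S_i$ is a demanded bundle for buyer $i$ (this is the polytope description $P$ recalled just before \lemmaref{lemma:vertices_gs}, combined with the First and Second Welfare Theorems, \lemmaref{lemma:welfare_thm}). So the goal reduces to: exhibit $\vc{p}$ such that every $\vc{p}'$ in the box $\prod_j [p_j - \tfrac{1}{2n}, p_j + \tfrac{1}{2n}]$ has each $S_i \in D(i,\vc{p}')$, and moreover $D(i,\vc{p}')=\{S_i\}$ for interior points.

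First I would set up the relevant inequalities. By discrete concavity (\definitionref{def:discrete_convex}) and the equivalence in \theoremref{FujishigeYang}, for a price vector $\vc{q}$ we have $S_i \in D(i,\vc{q})$ iff no single add / drop / swap improves buyer $i$'s utility: for all $j \notin S_i$, $k \in S_i$,
$$
v_i(S_i) \ge v_i(S_i \cup j) - q_j, \quad v_i(S_i) \ge v_i(S_i \setminus k) + q_k, \quad v_i(S_i) \ge v_i(S_i \cup j \setminus k) + q_k - q_j.
$$
Since the partition is \emph{unique} and optimal, I expect these to hold with \emph{strict} slack of at least $1$ whenever the relevant alternative allocation is actually feasible — this is where uniqueness is used. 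Concretely: if some add/drop/swap for buyer $i$ had slack $0$, one could use it to produce an alternative allocation of equal welfare (re-routing the moved item to/from whichever buyer currently holds it, using the Unique Minimum Matching Lemma, \lemmaref{lemma:unique_matching}, to control the welfare change across several buyers), contradicting uniqueness. Integrality of the valuations then upgrades "nonzero slack" to "slack $\ge 1$". The natural candidate price is then something like $p_j = $ (a suitable "marginal value" threshold), e.g. an average of the best marginal for adding $j$ over the buyer who does not hold it and the marginal for keeping $j$ for the buyer who does; any choice sitting strictly between the relevant add-marginals and drop-marginals works, and the slack-$\ge 1$ guarantee gives room of size at least $\tfrac12$ on each side, hence certainly room $\tfrac{1}{2n}$ per coordinate. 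Perturbing $\vc{p}$ to $\vc{p}'$ within the box changes each of the expressions above by at most $2 \cdot \tfrac{1}{2n} \cdot n = 1 > $ nothing; more carefully, an add or drop changes by $\le \tfrac{1}{2n}$, and a swap of two items changes by $\le \tfrac{1}{n}$, which is strictly less than the slack $1$, so all inequalities persist and each $S_i$ remains demanded; for strict interior points all inequalities that had positive slack remain strict, so $S_i$ is the \emph{unique} demand, i.e. $\vc{p}'$ is robust.

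I would then wrap up: the box $\prod_j [p_j - \tfrac{1}{2n}, p_j + \tfrac{1}{2n}] \subseteq P$, so $P$ contains a full-dimensional cube and hence is full-dimensional; it is convex as the minimizer set of the convex potential $f$; and every interior point, lying in the (open) interior of such a cube around some Walrasian price, is robust by the uniqueness-of-demand argument above. The main obstacle is the middle step — turning uniqueness of the optimal partition into a \emph{quantitative} (slack $\ge 1$) margin in every buyer's add/drop/swap inequality simultaneously. The swap case in particular requires care because moving an item out of $S_i$ and into $S_{i'}$ is an exchange across two buyers, so one must invoke the unique-minimum-matching machinery (Lemmas \ref{lemma:unique_matching} and \ref{lemma:combinatorial_matching}) to certify that such a swap cannot be "free," rather than a one-buyer local-optimality argument; everything else is bookkeeping with the Lipschitz constants of the linear pieces.
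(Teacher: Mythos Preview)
Your proposal has a genuine gap in the central quantitative step. You claim that uniqueness of the optimal partition yields, for a suitable price $\vc{p}$, slack $\ge 1$ in \emph{every} individual add/drop/swap inequality. This is false in general. The swap inequality for buyer $i$ exchanging $k \in S_i$ for $j \notin S_i$ has slack $w_{k,j} - (p_j - p_k)$, where $w_{k,j} := v_i(S_i) - v_i(S_i \cup j \setminus k)$ is price-independent. Summing slacks around any directed cycle in the exchange graph, the price terms telescope to zero, so the total slack equals the cycle weight. With integer valuations and a unique optimum, the minimum cycle weight is $1$ (this is precisely what the unique-minimum-matching Lemmas~\ref{lemma:unique_matching} and~\ref{lemma:combinatorial_matching} buy: a zero-weight cycle would yield an alternative optimal partition). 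But a cycle of length $n$ with total weight $1$ forces the \emph{average} per-edge slack to be $1/n$, so no price vector can give slack $\ge 1$ on every edge. Your proposed price (``average of add/drop marginals'') does not address this and cannot succeed.

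What the paper does instead is exactly to harvest this cycle slack: after showing every cycle has weight $\ge 1$, it subtracts $1/n$ from every edge (via a node-splitting trick), observes the resulting graph still has no negative cycles, and takes $\vc{p}$ to be a shortest-path potential in this modified graph. By construction this gives slack $\ge 1/n$ on every original inequality, which is exactly enough for the $\tfrac{1}{2n}$-box (a swap perturbs the constraint by at most $2 \cdot \tfrac{1}{2n} = \tfrac{1}{n}$). So the missing idea in your outline is the passage from ``cycles have weight $\ge 1$'' to ``there exists a potential with per-edge slack $\ge 1/n$,'' which is the shortest-path/potential argument. Your identification of where Lemmas~\ref{lemma:unique_matching} and~\ref{lemma:combinatorial_matching} enter is correct, but they control cycle weights, not individual edge slacks.
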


The proof involves the concept of the \emph{exchange graph}, which was first introduced
by Murota in \cite{Murota96b} and it characterizes the set of all Walrasian
prices as the dual variables of the shortest path polytope for a certain graph.
Given an optimal allocation $\vc{S} = (S_1, \hdots, S_m)$, the Second Welfare
Theorem (Lemma \ref{lemma:welfare_thm}) combined with the characterization of
gross substitute functions from Discrete Convex Analysis (Definition
\ref{def:discrete_convex}) tells us that the set of Walrasian prices can be
characterized by:
$$P = \left\{ \vc{p} \in \R^n \text{ } \left| \begin{aligned}
& v_i(S_i) \geq v(S_i \setminus j) + p_j, & \forall i \in [m],  j \in S_i \\
&v_i(S_i) \geq v(S_i \cup k) - p_k,  & \forall i \in [m], k \notin S_i \\
&v_i(S_i) \geq v(S_i \cup k \setminus j) - p_k + p_j, & \forall i \in [m], j \in
S_i, k \notin S_i
\end{aligned}\right. \right\}$$

Which is clearly a convex set defined by $O(\sum_i \abs{S_i} n) = O(n^2)$
inequalities. A nice combinatorial interpretation of this polytope is that it
corresponds to the set of potentials in a graph.

To make the construction nicer, augment the items with $m$ dummy items, one for
each buyer. The set of items becomes $[n] \cup [m]$, and the valuations are
extended to the subsets of $[n] \cup [m]$ in a way that agents completely ignore
the dummy items, i.e., for $T \subset [n] \cup [m]$, $v_i(T) = v_i(T \cap [n])$.
Also, augment $S_i$ to contain the dummy item for buyer $i$. Under this
transformation we can simplify the definition of $P$ to:

$$P = \left\{ \vc{p} \in \R^n \text{ } \left| \begin{aligned}
v_i(S_i) \geq v(S_i \cup k \setminus j) - p_k + p_j, & \forall i \in [m], j \in
S_i, k \notin S_i
\end{aligned}\right. \right\}$$
since we can represent adding and removing an item as a swap with a dummy item.
Under this transformation, construct a directed graph with one node
for each item in $[n]$. For
each $i \in [m]$, $j \in S_i$ and $k \notin S_i$, add an edge $(j,k)$ with
weight $w_{j,k} = v_i(S_i) - v_i(S_i \cup k \setminus j)$.

Since the allocation $\vc{S} = (S_1, \hdots, S_m)$ is optimal, there exists at least
one vector of Walrasian prices $\vc{p} \in P$. This guarantees that the graph
exchange graph has no negative cycles, since for any cycle $C = \{(j_1, j_2),
\hdots, (j_t, j_1)\}$, we can bound the sum of weights by 
$\sum_r w_{j_r, j_{r+1}} \geq \sum_r p_{j_r} - p_{j_{r+1}} = 0$, where the
inequality follows from the definition of $P$ and the definition of the weights.
Now we argue that the exchange graph can't contain any cycles of zero weight:

\begin{lemma} If $\vc{S}$ is the unique optimal allocation, then the exchange
graph has no cycles of zero weight.
\end{lemma}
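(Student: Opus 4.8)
The plan is to show the contrapositive: if the exchange graph contains a cycle $C$ of zero weight, then the optimal allocation $\vc{S}$ is not unique—there is a different allocation $\vc{S}'$ with the same social welfare. The natural candidate for $\vc{S}'$ is obtained by rotating the items along the cycle $C$: each edge $(j,k)$ in $C$ records a swap in some buyer $i$'s bundle (replace $j$ by $k$), so performing all these swaps simultaneously produces a new allocation. Because the cycle returns to its starting vertex, the multiset of items each buyer loses equals the multiset each buyer gains only in aggregate; the key point is that the rotation still yields a \emph{valid partition}: each item that leaves some bundle enters exactly one other bundle.

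First I would choose $C$ to be the cycle of \emph{minimum length} among all zero-weight cycles (which, since there are no negative cycles, is also a minimum-weight cycle), so that Lemma~\ref{lemma:combinatorial_matching} applies. Then I would decompose $C$ into the swaps it induces on each individual buyer: for buyer $i$, let $A_i \subseteq S_i$ be the set of items leaving $i$'s bundle along $C$, and $B_i \subseteq [n]\setminus S_i$ the set entering it, so that the new bundle is $S_i' = S_i \cup B_i \setminus A_i$. The edges of $C$ belonging to buyer $i$ form a matching between $A_i$ and $B_i$ (after pairing each leaving item with the corresponding entering item along the cycle), and these edges are non-consecutive in $C$ precisely because consecutive edges of $C$ share a vertex — an item $k$ that enters buyer $i$ via edge $(j,k)$ cannot simultaneously leave buyer $i$ via edge $(k,\ell)$, since $k \notin S_i$. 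Hence Lemma~\ref{lemma:combinatorial_matching} tells us this matching is the \emph{unique minimum matching} in the bipartite graph on $A_i, B_i$ with the same edge weights.

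Next, I would apply the Unique Minimum Matching Lemma (Lemma~\ref{lemma:unique_matching}) to buyer $i$ with $S = S_i$: since $w_{a,b} = v_i(S_i) - v_i(S_i \cup b \setminus a)$ and the matching $\{(a,b)\}$ given by $C$ is the unique minimum matching, we get
$$
v_i(S_i) - v_i(S_i') \;=\; v_i(S_i) - v_i(S_i \cup B_i \setminus A_i) \;=\; \sum_{(a,b)} \bigl(v_i(S_i) - v_i(S_i \cup b \setminus a)\bigr) \;=\; \sum_{(a,b)} w_{a,b}.
$$
Summing over all buyers $i$, the right-hand side is exactly the total weight of all edges of $C$, which is zero by assumption. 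Therefore $\sum_i v_i(S_i') = \sum_i v_i(S_i)$, so $\vc{S}'$ is also an optimal allocation. Since $C$ is a genuine cycle with at least one edge $(j,k)$, item $k$ moves from the bundle of one buyer to another, so $\vc{S}' \neq \vc{S}$, contradicting uniqueness.

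The main obstacle I anticipate is the bookkeeping in the previous paragraph: I must verify carefully that (a) $\vc{S}'$ is a valid partition — this requires that each vertex of $C$ has equal in-degree and out-degree within $C$, which holds since $C$ is a cycle, so every item that leaves some bundle enters exactly one bundle; (b) the edges of $C$ restricted to a single buyer $i$ really do form a matching with no repeated endpoints, which follows from the fact that each item appears as the tail of at most one $C$-edge and the head of at most one $C$-edge; and (c) the non-consecutiveness hypothesis of Lemma~\ref{lemma:combinatorial_matching} is met — an edge $(j,k)$ of buyer $i$ and the next edge of buyer $i$ along $C$ cannot be adjacent in $C$, because the intermediate vertices belong to swaps of \emph{other} buyers (if two consecutive edges of $C$ both belonged to buyer $i$, the shared vertex would be both added to and removed from $S_i$, which is impossible). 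Once these combinatorial checks are in place, the equality of welfare is immediate from the two lemmas, and the argument closes.
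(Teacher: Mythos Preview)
Your proposal is correct and follows essentially the same approach as the paper: pick a minimum-length zero-weight cycle, decompose its edges by buyer, apply Lemma~\ref{lemma:combinatorial_matching} to obtain a unique minimum matching for each buyer, then invoke Lemma~\ref{lemma:unique_matching} so that summing over buyers yields an alternative optimal allocation. Your write-up is in fact more careful than the paper's about verifying the non-consecutiveness hypothesis and that $\vc{S}'$ is a valid partition.
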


\begin{proof}
If there were cycles of zero weight, let $C$ be the cycle of zero weight of
minimum length. Now, let $C_i = \{ (j_1, t_1), \hdots, (j_a, t_a) \}$ be the
edges $(j,t)$ in $C$ with $j \in S_i$ and (consequently) $t \notin S_i$. Now, define $S'_i =
S_i \cup \{t_1, \hdots, t_a\} \setminus \{j_1, \hdots, j_a\}$. Notice that we
performed the swaps prescribed by the cycle, so each item that moved was removed
from one set and added to another and as a result, $\vc{S'} = (S'_1, \hdots, S'_m)$ is
still a partition of the items.
Using Lemmas
\ref{lemma:combinatorial_matching} and \ref{lemma:unique_matching} we get that:
$$v_i(S'_i) - v_i(S_i) = \sum_{r=1}^a v_i(S_i \cup t_r \setminus j_r) - v_i(S_i) =
\sum_{r=1}^a w_{jt}$$
therefore:
$$\sum_i v_i(S'_i) - \sum_i v_i(S_i) = \sum_{e \in C} w_e = 0$$
and so $\vc{S'} = (S'_1, \hdots, S'_m)$ is an alternative optimal
allocation, contradicting the uniqueness of $\vc{S}$.
\end{proof}

Now we are ready to prove the Theorem \ref{thm:robust_walrasian}:

\begin{proof}[Proof of Theorem \ref{thm:robust_walrasian}]
Since we know there are no zero weight cycles and all the edge weights are
integral, all cycles have weight at least 1. Now perform the following operation: for each vertex $j \in [n]$ in the directed
graph, split it into $j_1$ and $j_2$ with an edge between $j_1$ and $j_2$ with
weight $-\frac{1}{n}$. Make all incoming edges to $i$ be incoming to $j_1$ and
all outgoing edges from $j$ to be outgoing from $j_2$. The resulting graph has
again no cycles of negative weight, since the new edges can decrease each cycle
by at most $1$.

Therefore, it is possible to find a potential in this graph. A potential of a
weighted graph with edge weights $w_{jt}$ is a function $\phi$ from the nodes to
$\R$ such that $\phi(t) \leq \phi(j) + w_{jt}$. It can be easily computed by
running a shortest path algorithm from any fixed source node and taking the
distance from source node to $j$ as $\phi(j)$. For the particular case of the graph
constructed, it is useful to take the source as one of the dummy nodes.

After computing a potential from the distance from a dummy node to each node,
define the price of $j$ as $p_j = \phi(j_2)$. By the definition of the potential
for each edge $(j,t)$ in the graph:
$$p_{t} = \phi(t_2) \leq \phi(t_1) - \frac{1}{n} \leq \phi(j_2) + w_{j,t} -
\frac{1}{n} = p_j + w_{j,t} - \frac{1}{n}.$$ This means in particular that all
inequalities that define $P$ are valid with a slack of $\frac{1}{n}$. Therefore,
changing any price by at most $\frac{1}{2n}$ still results in a valid Walrasian
equilibrium. This completes the proof of the first part of the theorem.

Since $P$ contains a cube, then it must be a full-dimensional convex body.
Finally, let's show that every price vector in the interior of $P$ is a vector
of robust Walrasian prices. By the second welfare theorem (Lemma
\ref{lemma:welfare_thm}), $S_i \in D(i,
\vc{p})$ for all $\vc{p} \in P$. Now, assume that for some point in the
interior, there is $S'_i \in D(i,\vc{p})$, $S'_i \neq S_i$. Then either: (i)
There is $j \in S'_i \setminus S_i$.  We decrease the price of $j$ by
$\epsilon$ so $S'_i$ becomes strictly better than $S_i$, i.e. $S_i \notin D(i, \vc{p}
-\epsilon \vc{1}_j)$, which contradicts the second welfare theorem since $\vc{p}
- \epsilon \vc{1}_j \in P$. (ii) There is $j \in S_i \setminus S'_i$. 
We increase the price of $j$ by $\epsilon$ so $S'_i$ becomes strictly better than
$S_i$, i.e. $S_i \notin D_i(i, \vc{p} + \epsilon \vc{1}_j)$, which again
contradicts the second welfare theorem since $\vc{p} + \epsilon \vc{1}_j \in P$.
\end{proof}

\subsection{Computing Optimal Allocation}

Theorem \ref{thm:robust_walrasian} guarantees that if the optimal allocation is
unique, then the set of Walrasian prices has large volume. Since the set of
Walrasian prices corresponds to the set of minimizers of the market potential
$f(\vc{p})$, then there is a large region where zero is the unique subgradient
of $f$. In such situations, convex optimization algorithms are guaranteed to
eventually query a point that has zero subgradient. The point $\vc{p}$ queried
corresponds to a set of Walrasian prices and the optimal allocation can be
inferred from the subgradient (recall Theorem \ref{thm:general_walrasian_non_empty}).

Our strategy is to perturb the valuation functions in such a way that the
optimal solution is unique and that it is still a solution of the original
problem. It is important for the reader to notice that this is a different
type of perturbation than the one used in previous sections. While previously we
perturbed the objective of the dual program, here we are effectively perturbing
the objective of the primal program. One major difference is that if we perturb
the objective of the dual, we can still compute the subgradient of $\hat{f}$ in
\ref{perturbed_convex_f} using the aggregate demand oracle. If we perturb the
objective of the primal, we no longer can compute subgradients using the
aggregate demand oracle. With value oracles, however, this is still possible to
be done.

To perturb the primal objective function, we use the isolation lemma, a standard
technique to guarantee a unique optimum for combinatorial problems.

\begin{lemma}[Isolation Lemma \cite{IsolationLemma}]
Let $\vc{w} \in [N]^n$ be a random vector where each coordinate $w_i$ is choosen
independently and uniformly over $[N]$. Then for any arbitrary family
$\mathcal{F}\subseteq 2^{[n]}$, the problem $\max_{S \in \mathcal{F}} \sum_{i
\in S} w_i$ has a unique optimum with probability at least $1-n/N$.
\end{lemma}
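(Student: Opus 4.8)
The plan is to run the classical ``principle of deferred decisions'' argument due to Mulmuley, Vazirani and Vazirani. Say that an element $i \in [n]$ is \emph{pivotal} for a fixed weight vector $\vc{w}$ if there are two maximum-weight members $S, S' \in \mathcal{F}$ with $i \in S$ but $i \notin S'$. The heart of the proof is the claim that, for every fixed $i$ and regardless of what $\mathcal{F}$ is, $\Pr[\,i \text{ is pivotal}\,] \le 1/N$; a union bound over $i$ then shows that with probability at least $1 - n/N$ no element is pivotal, and a one-line combinatorial observation turns that into uniqueness of the maximizer.

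First I would bound the probability that a fixed $i$ is pivotal by revealing all coordinates $w_j$, $j \ne i$, first, leaving $w_i$ still uniform on $[N]$ and independent of them. Conditioned on those values, put
$$
\alpha_i \;=\; \max_{\,S \in \mathcal{F},\ i \in S}\ \sum_{j \in S,\ j \ne i} w_j, \qquad
\beta_i \;=\; \max_{\,S \in \mathcal{F},\ i \notin S}\ \sum_{j \in S} w_j,
$$
with the convention that the maximum of an empty collection is $-\infty$ (this cleanly covers the degenerate cases where every member of $\mathcal{F}$ contains $i$, or none does). The best member of $\mathcal{F}$ that contains $i$ then has weight $\alpha_i + w_i$ and the best one avoiding $i$ has weight $\beta_i$; hence if $w_i > \beta_i - \alpha_i$ every maximum-weight set contains $i$, if $w_i < \beta_i - \alpha_i$ every maximum-weight set avoids $i$, and $i$ can be pivotal only in the single event $w_i = \beta_i - \alpha_i$. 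Since $\beta_i - \alpha_i$ is a deterministic number once $\{w_j\}_{j \ne i}$ is fixed and $w_i$ is uniform over the $N$ integers in $[N]$, the conditional probability of that event is at most $1/N$; averaging over the conditioning gives $\Pr[\,i \text{ pivotal}\,] \le 1/N$.

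Then I would close the argument. By the union bound, with probability at least $1 - n/N$ no element of $[n]$ is pivotal. Assume this event and suppose for contradiction that $S \ne S'$ are both maximum-weight members of $\mathcal{F}$; choosing any $i$ in the symmetric difference, say $i \in S \setminus S'$, exhibits a maximum-weight set containing $i$ and one avoiding it, so $i$ is pivotal --- a contradiction. Hence the maximizer is unique with probability at least $1 - n/N$, which is the assertion.

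I do not expect a serious obstacle here: the only points that need care are setting up the deferred-decisions step so that the threshold $\beta_i - \alpha_i$ is genuinely fixed at the moment $w_i$ is sampled, and the bookkeeping for the degenerate cases where an element of $[n]$ lies in every member of $\mathcal{F}$ or in none of them --- both of which the $-\infty$ convention dispatches immediately. Everything else is a union bound and a symmetric-difference argument.
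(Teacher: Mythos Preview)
The paper does not supply its own proof of this lemma; it simply cites it as a known result from \cite{IsolationLemma} and applies it. Your proof is correct and is exactly the classical deferred-decisions argument of Mulmuley, Vazirani and Vazirani, so there is nothing to compare against beyond noting that you have reproduced the standard proof.
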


For our application, we would like a unique optimum to the problem of maximizing
$\sum_{i=1}^m v_i(S_i)$ over the partition $(S_1,\hdots,S_m)$. To achieve
this, we first replace $v_i$ by $\tilde{v_i}(S)=B
v_i(S) + w_i(S)$ where $B$ is some big number to be determined and
$w_i(j)$ is set as in the isolation lemma (with $N$ to be determined as well).

\begin{lemma}\label{lem:unique}
By setting $B = 2nN$, $N=mn^{O(1)}$ and sampling $w_i(j)$ uniformly from $[N]$
for each $i \in [m]$ and $j \in [n]$, then with probability $1-1/n^{O(1)}$,
there is an unique partition $(S_1, \hdots, S_m)$ maximizing $\sum_{i=1}^m
\tilde{v_i}(S_i)$, for $\tilde{v_i}(S) = B \cdot v_i(S) + w_i(S)$.
\end{lemma}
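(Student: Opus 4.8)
The plan is to separate the two jobs done by the perturbation $\tilde v_i = B v_i + w_i$: the large multiplier $B$ forces any partition that maximizes $\sum_i \tilde v_i(S_i)$ to also maximize the original welfare $\sum_i v_i(S_i)$, and the small additive noise $w_i$ then isolates a unique maximizer among the (possibly many) welfare-optimal partitions via the Isolation Lemma, applied on the ground set $[m]\times[n]$.

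First I would record a ``budget'' observation. Every partition $(S_1,\dots,S_m)$ assigns each of the $n$ items exactly once, so $\sum_i|S_i| = n$, and hence $\sum_i w_i(S_i) = \sum_i \sum_{j \in S_i} w_i(j)$ is a sum of exactly $n$ terms, each in $\{1,\dots,N\}$; thus $n \le \sum_i w_i(S_i) \le nN$ for \emph{every} partition. In particular the additive term can change by at most $nN - n < nN$ from one partition to another. Now let $W^* = \max \sum_i v_i(S_i)$ over all partitions, and let $\mathcal{F}^*$ be the (nonempty) set of partitions attaining it; since the $v_i$ are integer-valued, any partition not in $\mathcal{F}^*$ has welfare at most $W^* - 1$. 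Then for $\vc{S} \in \mathcal{F}^*$ and $\vc{S}' \notin \mathcal{F}^*$,
$$\sum_i \tilde v_i(S_i) - \sum_i \tilde v_i(S_i') = B\Big(W^* - \sum_i v_i(S_i')\Big) + \Big(\sum_i w_i(S_i) - \sum_i w_i(S_i')\Big) \;\ge\; B - nN \;=\; nN \;>\; 0,$$
using $B = 2nN$. Hence every maximizer of $\sum_i \tilde v_i(S_i)$ over all partitions lies inside $\mathcal{F}^*$; and among partitions in $\mathcal{F}^*$, since $\sum_i \tilde v_i(S_i) = B W^* + \sum_i w_i(S_i)$ with $BW^*$ constant, maximizing $\sum_i \tilde v_i(S_i)$ is the same as maximizing $\sum_i w_i(S_i)$. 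This already shows the output partition is welfare-optimal for the original $v_i$; it remains to show uniqueness.

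For uniqueness I would invoke the Isolation Lemma. Identify each partition $\vc{S} = (S_1,\dots,S_m)$ with the subset $T_{\vc{S}} = \{(i,j) : j \in S_i\} \subseteq [m]\times[n]$; the weights $w_i(j)$ are i.i.d.\ uniform on $[N]$ indexed by the ground set $[m]\times[n]$ of size $mn$, and $\sum_i w_i(S_i) = \sum_{(i,j) \in T_{\vc{S}}} w_i(j)$. Applying the Isolation Lemma with $n$ replaced by $mn$ and with family $\mathcal{F} = \{T_{\vc{S}} : \vc{S} \in \mathcal{F}^*\}$, the maximizer of $\sum_{(i,j)\in T} w_i(j)$ over $T \in \mathcal{F}$ is unique with probability at least $1 - mn/N$. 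Taking $N = m n^{O(1)}$ with a large enough exponent makes this $1 - 1/n^{O(1)}$. By the paragraph above, that unique maximizer of the additive noise over $\mathcal{F}^*$ is precisely the unique maximizer of $\sum_i \tilde v_i(S_i)$ over all partitions, and it belongs to $\mathcal{F}^*$, so it also maximizes $\sum_i v_i(S_i)$, as claimed.

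The only point requiring care is getting the Isolation Lemma set up on the correct ground set $[m]\times[n]$ — so that the failure probability is $mn/N$ rather than $n/N$ — together with the observation that every partition uses exactly $n$ of these $mn$ elements, which is exactly what makes the $B$-term dominate a unit welfare gap. Everything else is bookkeeping with the stated bounds on $B$ and $N$.
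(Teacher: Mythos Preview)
Your proof is correct and follows essentially the same approach as the paper: first use the integrality of the $v_i$ together with $B=2nN$ to show that any $\tilde v$-maximizer must be a $v$-maximizer, then apply the Isolation Lemma on the ground set $[m]\times[n]$ (so the failure probability is $mn/N$) to isolate a unique maximizer among the welfare-optimal partitions. Your exposition is in fact slightly more careful, making explicit the ``budget'' bound $n\le\sum_i w_i(S_i)\le nN$ and the reduction to maximizing the additive noise over $\mathcal{F}^*$.
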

\begin{proof}
Let $(S_1',\hdots,S_m')$ be an optimal solution w.r.t the original problem. We
first show that any suboptimal partition $(S_1,\hdots,S_m)$ cannot be optimal
for the perturbed problem. Since $v_i$ assume integer values, we have
$\sum_{i=1}^m v_i(S_i')\geq  \sum_{i=1}^m v_i(S_i) +1$. Now:
$$\sum_{i=1}^{m}\tilde{v_{i}}(S_{i}')	\geq
B+\sum_{i=1}^{m}\tilde{v_{i}}(S_{i})+\sum_{i=1}^{m}w_{i}(S_{i})-\sum_{i=1}^{m}w_{i}(S_{i}')
\geq
B+\sum_{i=1}^{m}\tilde{v_{i}}(S_{i})-nN>\sum_{i=1}^{m}\tilde{v_{i}}(S_{i}).$$
which shows that a suboptimal solution to the original problem cannot be optimal
for the perturbed one.

Now, consider all partitions  $(S_1',\hdots,S_m')$ that are optimal for the
original problem and identify each optimal partition with a
subsets of $[mn] = \{(i,j); i \in [m], j \in [n]\}$ in the natural way: add
$(i,j)$ to the subset if $j \in S'_i$. This family of subsets corresponds to
$\mathcal{F} \subseteq 2^{[mn]}$ in the statement of the Isolation Lemma and
$w_i(j)$ corresponds to $\vc{w}$. The result then follows from applying that
lemma.
\end{proof}

The strategy to find an optimal allocation is to perturb the valuation
functions, then search for an interior point in the set of minimizers of the
market potential function $f$. When we find such a point we can obtain a vector
of Walrasian prices for the original market by rounding and the optimal
allocation by inspecting the subgradient. To get the desired running time, we
need to apply those ideas to the regularized potential function $\tilde{f}$
(see \ref{reg_convex_f} in Section \ref{sec:gs_value_oracle}) instead of the
original one. To apply this to the regularized potential we need an extra lemma:

\begin{lemma}\label{lem:perturbgivesol}
Let $\vc{p}$  be an interior point of the set of minimizers of the regularized
potential function $\tilde{f}$, then the allocation $(S_1^*,\hdots,S_m^*)$ 
produced by the \textsc{AllGreedy} algorithm
in Lemma \ref{lem:regularize_lemma} is an equilibrium allocation.
\end{lemma}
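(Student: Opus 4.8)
The plan is to extract a subgradient of $\tilde{f}$ at $\vc{p}$ from the output of \textsc{AllGreedy} and then exploit the fact that an interior point of the minimizer set of a convex function has $\vc{0}$ as its \emph{only} subgradient.

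First I would unpack Lemma \ref{lem:regularize_lemma}: running \textsc{AllGreedy} at $\vc{p}$ produces sets $(S_1^*,\ldots,S_m^*)$ with $\sum_i|S_i^*|=n$ together with a scalar $\gamma$ such that $S_i^*\in D(i,\vc{p}+\gamma\cdot\vc{1}_{[n]})$ for every $i$, and, by the definition of $\tilde{f}$ together with Lemma \ref{lem:regularize_lemma}(1), $\tilde{f}(\vc{p})=\sum_i\bigl(v_i(S_i^*)-\vc{p}(S_i^*)\bigr)+\vc{p}([n])$. Holding the $S_i^*$ fixed, this right-hand side is an affine function of the price, namely $\vc{q}\mapsto\bigl(\sum_i v_i(S_i^*)\bigr)+\vc{q}\cdot(\vc{1}_{[n]}-\vc{d})$ with $d_j:=\#\{\,i:j\in S_i^*\,\}$, so by the Envelope Theorem (equivalently, by Corollary \ref{cor:reg_sep}) the vector $\vc{g}:=\vc{1}_{[n]}-\vc{d}$ is a subgradient of $\tilde{f}$ at $\vc{p}$. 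This step is purely bookkeeping.

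Next I would argue that $\partial\tilde{f}(\vc{p})=\{\vc{0}\}$. Since $\vc{p}$ lies in the \emph{interior} of $\argmin\tilde{f}$, there is an open ball around $\vc{p}$ on which $\tilde{f}$ is constant; hence for any $\vc{g}'\in\partial\tilde{f}(\vc{p})$ and any sufficiently small $\vc{h}$ we have $0=\tilde{f}(\vc{p}+\vc{h})-\tilde{f}(\vc{p})\geq\vc{g}'\cdot\vc{h}$, and taking $\vc{h}=\pm\epsilon\vc{g}'$ forces $\vc{g}'=\vc{0}$. In particular the subgradient $\vc{g}=\vc{1}_{[n]}-\vc{d}$ obtained above must vanish, i.e.\ $d_j=1$ for every item $j$. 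Therefore each item lies in exactly one $S_i^*$, so $(S_1^*,\ldots,S_m^*)$ is a partition of $[n]$ and in particular a valid allocation.

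Finally I would assemble the pieces. We have a valid allocation $(S_1^*,\ldots,S_m^*)$ with $S_i^*\in D(i,\vc{p}+\gamma\cdot\vc{1}_{[n]})$ for every $i$, while $\vc{p}+\gamma\cdot\vc{1}_{[n]}$ is an equilibrium price by Corollary \ref{cor:reg_minimizer} (as $\vc{p}$ minimizes $\tilde{f}$). Hence $\bigl(\vc{p}+\gamma\cdot\vc{1}_{[n]},(S_1^*,\ldots,S_m^*)\bigr)$ is a Walrasian equilibrium, so $(S_1^*,\ldots,S_m^*)$ is an equilibrium allocation (and, by the First Welfare Theorem, Lemma \ref{lemma:welfare_thm}, also welfare-optimal). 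The only step carrying any subtlety is the middle one: one must be sure that the \emph{particular} subgradient returned by \textsc{AllGreedy}, which depends on the greedy tie-breaking, is the one forced to be zero — but this is immediate, since interiority of the minimizer set collapses the whole subdifferential $\partial\tilde{f}(\vc{p})$ to $\{\vc{0}\}$, so every subgradient, that one included, is zero.
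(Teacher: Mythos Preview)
Your proof is correct and follows essentially the same approach as the paper: build the subgradient $\vc{1}_{[n]}-\vc{d}$ from the \textsc{AllGreedy} output, use interiority of $\vc{p}$ in the minimizer set to force this subgradient to vanish, conclude that $d_j=1$ for all $j$ so the $S_i^*$ form a partition, and then certify optimality via $S_i^*\in D(i,\vc{p}+\gamma\vc{1}_{[n]})$. Your write-up is in fact slightly more careful than the paper's (you spell out why $\partial\tilde f(\vc{p})=\{\vc 0\}$ and invoke Corollary~\ref{cor:reg_minimizer} explicitly, whereas the paper appeals directly to part~2 of Lemma~\ref{lem:regularize_lemma}).
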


\begin{proof} If $(S_1^*,\hdots,S_m^*)$ is a partition over the items, then it
is an optimal allocation by part 2 of Lemma \ref{lem:regularize_lemma}. To show
that it is a partition, observe that since
$(S_1^*, \hdots, S_m^*)$ is a maximizer of $\sum_i v_i(S_i) - \vc{p}(S_i)$
subject to $\sum_i \abs{S_i} = n$, then we can use it to build a subgradient
$\vc{g}$ of $\tilde{f}$ such that $g_j = -1 + \abs{\{i; j \in S_i^*\}}$. Since
$\vc{p}$ is an interior point of the set of minimizers, the subgradient must be
zero and therefore $\abs{\{i; j \in S_i^*\}} = 1$ for all $j$.
\end{proof}

\begin{theorem} For gross substitute valuation, we can find a Walrasian
equilibrium, i.e. allocation and prices, in time $O((mn + n^3) T_V  \log(nmM) +
n^3 \log^{O(1)}(mnM) )$.
\end{theorem}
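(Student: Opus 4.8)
The plan is to compute the two components of a Walrasian equilibrium separately and then glue them with the Second Welfare Theorem. For the \emph{prices} I would simply invoke the algorithm of Section~\ref{sec:gs_value_oracle}, which already produces a vector $\vc{p}^{\eql}$ of Walrasian prices for the original market within the claimed time budget. The real work is to extract an \emph{optimal allocation}, and here the strategy mirrors Theorem~\ref{thm:general_walrasian_non_empty}: make the set of minimizers of the regularized potential full-dimensional, run a cutting plane method until it is forced to query an interior point, and read the allocation off the (necessarily zero) subgradient there.

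Concretely, I would first perturb the valuations as in Lemma~\ref{lem:unique}, replacing $v_i$ by $\tilde v_i(S)=B\,v_i(S)+w_i(S)$ with $B=2nN$, $N=mn^{O(1)}$ and random $w_i(j)\in[N]$. The perturbed valuations are still gross substitutes (the class is closed under scaling and under adding an additive function), so by Theorem~\ref{thm:kelso_crawford} the perturbed market has a Walrasian equilibrium; with probability $1-1/n^{O(1)}$ its optimal allocation is unique, and by the first part of the proof of Lemma~\ref{lem:unique} any optimal allocation of the perturbed market is also optimal for the original one. Now apply Theorem~\ref{thm:robust_walrasian} to the perturbed market: uniqueness of the optimal allocation yields a vector $\vc{p}$ such that the whole cube $\prod_j[p_j-\tfrac1{2n},p_j+\tfrac1{2n}]$ consists of (perturbed) Walrasian prices. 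By Corollary~\ref{cor:reg_minimizer} every Walrasian price of the perturbed market is a minimizer of its regularized potential $\tilde f$, so $\argmin\tilde f$ contains a ball of radius $\tfrac1{2n}$ and hence has volume at least $n^{-n}$.

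Next I would run the cutting plane method of Theorem~\ref{thm:lsw} on $K:=\argmin\tilde f$, using the subgradient of $\tilde f$ as separation oracle: given $\vc{p}$, run \textsc{AllGreedy} (Lemma~\ref{lem:regularize_lemma}) to obtain $(S_1^*,\dots,S_m^*)$ and the subgradient $\vc{g}$ with $g_j=|\{i:j\in S_i^*\}|-1$; if $\vc{g}=0$ declare $\vc{p}\in K$ (indeed $\vc p$ then minimizes $\tilde f$), otherwise return the half-space $\vc{g}\cdot(\vc{p}'-\vc{p})\le 0$, which contains $K$. By Lemma~\ref{lem:perturbgivesol} every interior point of $K$ has $\vc{g}=0$, so the oracle answers ``$\vc{p}\in K$'' on a ball of radius $\tfrac1{2n}$; choosing the accuracy parameter $\delta=\mathrm{poly}(1/n)$ small enough that the width bound $O(n\delta\log(Mn/\delta))$ in Theorem~\ref{thm:lsw} drops below $\tfrac1n$ rules out the ``thin polytope'' alternative, forcing the algorithm to return a point $\vc{p}^*$ with zero subgradient. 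Since the perturbed valuations have magnitude at most $mn^{O(1)}M$, all logarithms are $O(\log(nmM))$, the number of oracle calls is $\tilde{O}(n)$, and by Corollary~\ref{cor:reg_sep} each call costs $n^2T_V+\tilde{O}(n^2)$ after a one-time preprocessing of $mnT_V+O(mn\log m)$ (computing and sorting the $v_i(j)$, shared with the price computation). At $\vc{p}^*$, part~2 of Lemma~\ref{lem:regularize_lemma} shows $(S_1^*,\dots,S_m^*)$ is a partition with $S_i^*\in D(i,\vc{p}^*+\gamma\vc{1}_{[n]})$ for the perturbed market, so it is an equilibrium---hence optimal---allocation of the perturbed market, and therefore an optimal allocation of the original market. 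By the Second Welfare Theorem (Lemma~\ref{lemma:welfare_thm}), $(\vc{p}^{\eql},(S_1^*,\dots,S_m^*))$ is a Walrasian equilibrium of the original market, and summing the costs (the price computation, the $O(mn\log m)$ preprocessing, $\tilde{O}(n)$ separation calls at $n^2T_V+\tilde{O}(n^2)$ each, and the $n^3\log^{O(1)}(nmM)$ overhead of Theorem~\ref{thm:lsw}) gives the claimed bound, with success probability $1-1/n^{O(1)}$.

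I expect the main obstacle to be exactly the argument that the cutting plane is \emph{forced} to query an interior point of $\argmin\tilde f$ in only $\tilde{O}(n)$ iterations: it rests on the full chain ``randomized primal perturbation (Lemma~\ref{lem:unique}) $\Rightarrow$ unique allocation $\Rightarrow$ full-dimensional Walrasian polytope with an inscribed cube (Theorem~\ref{thm:robust_walrasian}) $\Rightarrow$ $\argmin\tilde f$ contains a ball of radius $\tfrac1{2n}$ (Corollary~\ref{cor:reg_minimizer}) $\Rightarrow$ the thinness certificate of Theorem~\ref{thm:lsw} is impossible,'' plus the bookkeeping that the inscribed ball has radius only $\tfrac1{2n}$ (so $\delta$, and the logs it feeds, stay polynomial in $n,m,M$) and that the per-iteration subgradient cost is $n^2T_V$ rather than $\Theta(mn^2T_V)$. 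A secondary subtlety worth flagging is that the prices obtained on the perturbed instance are \emph{not} Walrasian for the original market---different buyers effectively see prices $(\vc{q}-\vc{w}_i)/B$---which is precisely why the prices must be computed on the original instance and the two halves reassembled via the Second Welfare Theorem.
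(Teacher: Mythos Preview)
Your proposal is correct and follows essentially the same approach as the paper: perturb the valuations via Lemma~\ref{lem:unique} so the optimal allocation is unique, invoke Theorem~\ref{thm:robust_walrasian} to obtain a cube of side $1/n$ inside the set of Walrasian prices (hence inside $\argmin\tilde f$ by Corollary~\ref{cor:reg_minimizer}), and run the cutting plane method of Theorem~\ref{thm:lsw} with $\delta=1/n^{O(1)}$ so that the thinness certificate is impossible and an interior point must be queried, at which Lemma~\ref{lem:perturbgivesol} yields the allocation. Your explicit two-phase handling---prices from the unperturbed market, allocation from the perturbed one, glued by the Second Welfare Theorem---and your remark that perturbed prices are not Walrasian for the original market are clarifications the paper leaves implicit, but the underlying argument is the same.
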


\begin{proof}
Use Lemma \ref{lem:unique} to perturb the valuation functions and obtain
$\tilde{v}_i$ which are still gross substitutes (since they are the sum of a gross
substitute valuation and an additive valuation) and there is an unique optimal
allocation. By Lemma \ref{thm:robust_walrasian}, the set of minimizer of the
market potential function $f$ contains a box of width $1/n$. Since the set of
minimizers of the market potential $f$ is contained in the set of minimizers of
the regularized potential $\tilde{f}$, then its set of minimizers also
contains a box of width $1/n$. We also know that
it is contained in the box $[-Mmn^{O(1)}, Mmn^{O(1)}]^n$ since
$\abs{\tilde{v}_i(S)} \leq O(Mmn^{O(1)})$.

If we apply the algorithm in Theorem \ref{thm:lsw} with
$\delta = O(1/n^{O(1)})$ to optimize the regularized potential $\tilde{f}$
then we are guaranteed to query a point in the
interior of minimizers as otherwise the algorithm would certify that there is
$\vc{a}$ with $\norm{\vc{a}}_2 = 1$ such that $\max_{\vc{p} \in P} \vc{a} \cdot
\vc{p} - \min_{\vc{p} \in P } \vc{a} \cdot \vc{p} \leq 1/n^{O(1)}$, where $P$ is
the set of minimizers.

Finally, we can obtain the optimal allocation for the perturbed using Lemma
\ref{lem:regularize_lemma}, which is an optimal allocation to the original
market according to Lemma \ref{lem:unique}.
\end{proof}

\section{Combinatorial approach to Walrasian Equilibrium for Gross Substitutes}
\label{sec:murota}

In a sequence of two foundational papers \cite{Murota96a, Murota96b}, Murota
shows that the assigment problem for \emph{valuated matroids}, a class of
functions introduced by Dress and Wenzel \cite{DressWenzel} can be solved in
strongly polynomial time. We show how this algorithm can be used to obtain an
$\tilde{O}(nm + n^3)$ strongly polynomial time algorithm for problem of
computing a Walrasian equilibrium for gross substitute valuations. Our
contribution is two-fold: first we map the Walrasian equilibrium problem on gross
substitute valuations to the assignment problem on valuated matroids and analyze
its running time. The straightforward mapping allows us to obtain a strongly
polynomial time algorithm with running time $O(mn^3 \log(m+n))$. We note that a
different way to reduce the Walrasian equilibrium problem to a standard problem
in discrete convex analysis is to map it to the $M$-convex submodular flow
problem as done in Murota and Tamura \cite{murota_tamura}. We choose to
reduce it to the assignment problem on valuated matroids since its running time
is simpler to analyze.

Inspired by our
$\widetilde{O}(mn + n^3)$ algorithm, we revisit Murota's algorithm and propose
two optimizations that bring the running time down to $O((mn + n^3) \log(m+n))$.
Murota's algorithm works by computing augmenting paths in a structure known as
the \emph{exchange graph}. First we show that for the Walrasian equilibrium
problem, this graph admits a more succint representation. The we propose a data
structure to amortize the cost of some operations across all iterations.

In section \ref{subsec:valuated_matroids} we define valuated matroids and the
assignment problem for valuated matroids. Then we describe Murota's algorithm
for this problem. We also discuss the relation between the assignment problem
for valuated matroids and the welfare problem for gross substitutes. The goal of
subsection \ref{subsec:valuated_matroids} is to provide the reader with the
historical context for this result. 

The reader insterested solely in the welfare problem is welcome to skip to
Section \ref{subsec:algorithm} which can be read independently, without any
mention to valuated matroids or the assignment problem. A complete proof is
given in that section.

\subsection{The assignment problem for valuated
matroids}\label{subsec:valuated_matroids}

A \emph{valuated matroid} is an assignment of weights to basis of a matroid
respecting some valuated analogue of the exchange property.

\begin{definition}[Valuated matroid]
Let $\B$ be the set of basis of a matroid $\M = (V, \B)$. A valuated
matroid is a function $\omega: \B \rightarrow \R \cup \{\pm \infty\}$ such that for
all $B, B' \in \B$ and $u \in B \setminus B'$ there exists $v \in B' \setminus
B$ such that:
$$\omega(B) + \omega(B') \leq \omega(B \cup v \setminus u) + \omega(B' \cup u
\setminus v)$$
\end{definition}

Valuated matroids are related to gross substitutes by the following one-to-one
correspondence. We refer the reader to Lemma 7.4 in \cite{gs_survey} for a proof.

\begin{proposition}\label{prop:equiv_valuated_matroids}
A valuation function $v:2^{[n]} \rightarrow \R$ is a gross substitutes valuation
function iff $\omega: {[2n] \choose n} \rightarrow \R$ defined by $\omega(S) =
v(S \cap [n])$ is a valuated matroid defined over the basis of the $n$-uniform
matroid on $2n$ elements.
\end{proposition}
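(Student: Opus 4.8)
The plan is to reduce the statement to the classical correspondence between $M^\natural$‑concave and $M$‑concave functions. By the theorem of Fujishige and Yang quoted above, $v:2^{[n]}\to\R$ is gross substitutes exactly when it is discrete concave in the sense of Definition~\ref{def:discrete_convex}, i.e. satisfies the $M^\natural$‑exchange property: for all $X,Y\subseteq[n]$ and $i\in X\setminus Y$, either $v(X)+v(Y)\le v(X-i)+v(Y+i)$ (the ``drop/add'' alternative), or there is $j\in Y\setminus X$ with $v(X)+v(Y)\le v(X-i+j)+v(Y+j-i)$ (the ``swap'' alternative); and a function $\omega$ on $\binom{[2n]}{n}$ is, by definition, a valuated matroid over the uniform matroid $U_{n,2n}$ precisely when it satisfies the stated basis‑exchange inequality. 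So the proposition becomes: $v$ satisfies the $M^\natural$‑exchange iff $\omega(S):=v(S\cap[n])$ satisfies the basis‑exchange. Write $[2n]=[n]\cup D$ with $D=\{n+1,\dots,2n\}$ the dummies; for a basis $S$ set $A=S\cap[n]$, so $S$ has $n-|A|$ dummies and $\omega$ depends only on $A$. The one bookkeeping fact used throughout is that, starting from $\omega(S)=v(A)$, removing from $S$ a real $r\in A$ (resp.\ a dummy) and inserting a real $q\notin A$ (resp.\ a dummy) produces $v(A-r+q)$ (resp.\ $v(A+q)$, $v(A-r)$, $v(A)$) according to which of the removed/inserted elements are real.

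I would do the ``valuated matroid $\Rightarrow$ GS'' direction first, as it is the clean one. Given $X,Y\subseteq[n]$ and $i\in X\setminus Y$, lift to bases $S\supseteq X$, $T\supseteq Y$ with $S\cap[n]=X$ and $T\cap[n]=Y$ (fill out with arbitrary dummies). Since $i\in[n]$ is not a dummy, $i\in S\setminus T$ and $S\neq T$, so the basis‑exchange axiom yields $w\in T\setminus S$ with $\omega(S)+\omega(T)\le\omega(S-i+w)+\omega(T+i-w)$. By the bookkeeping fact: if $w\in[n]$ then $w\in Y\setminus X$ and the inequality reads $v(X)+v(Y)\le v(X-i+w)+v(Y+i-w)$, the swap alternative; if $w$ is a dummy it reads $v(X)+v(Y)\le v(X-i)+v(Y+i)$, the drop/add alternative. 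In both cases the $M^\natural$‑exchange holds, so $v$ is gross substitutes.

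For the converse (``GS $\Rightarrow$ valuated matroid'') I would take bases $S,T$, an element $u\in S\setminus T$, set $A=S\cap[n]$, $C=T\cap[n]$, and exhibit a witness $w\in T\setminus S$. Most configurations are routine: if $u$ is real and the $M^\natural$‑exchange at $u\in A\setminus C$ gives the swap alternative with witness $j\in C\setminus A$, then $j\in T\setminus S$ and $w:=j$ works; if it gives the drop/add alternative and $T\setminus S$ happens to contain a dummy, that dummy works; and if $u$ is a dummy and $T\setminus S$ contains a dummy, that dummy works with equality. The genuinely delicate configuration — and the place I expect the real work to be — is a ``dummy‑free exchange'': $T\setminus S$ contains no dummy, which forces every dummy of $T$ to lie in $S$, hence $|A|\le|C|$ and $C\setminus A\neq\emptyset$, yet the $M^\natural$‑exchange at $u$ only delivers the drop/add alternative; then neither easy move lands in $T\setminus S$. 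Here mere submodularity of $v$ does not suffice and one must use the full gross‑substitutes property to show that in a dummy‑free situation the swap alternative is in fact forced. I would resolve this either (i) via a local characterization of valuated matroids (it is enough to verify the exchange for bases differing in few elements), reducing to a finite check against the local discrete‑concavity inequalities of Definition~\ref{def:discrete_convex}, or (ii) by running the exchange through the exchange graph and applying the Unique Minimum Matching Lemma (Lemma~\ref{lemma:unique_matching}) to convert the multi‑element swap prescribed by $C\setminus A$ into single swaps. The symmetric dummy‑free case with $u$ a dummy reduces to a claim of exactly the same nature (for some $j\in C\setminus A$ the drop/add alternative of the $M^\natural$‑exchange must hold) and is settled by the same argument. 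Collecting all cases gives the basis‑exchange inequality for $\omega$, completing the proof.
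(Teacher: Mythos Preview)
The paper does not give its own proof of this proposition; immediately after stating it, the paper writes ``We refer the reader to Lemma 7.4 in \cite{gs_survey} for a proof.'' So there is no in-paper argument to compare your proposal against.

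Evaluating your proposal on its own merits: your overall framing (reduce to the $M^\natural$/$M$-concave correspondence via Fujishige--Yang, then show the $M^\natural$-exchange for $v$ is equivalent to the valuated-matroid exchange for $\omega$) is exactly the standard route, and your ``valuated matroid $\Rightarrow$ GS'' direction is correct and complete. For the converse you correctly isolate the genuinely nontrivial configuration --- the ``dummy-free'' case where $T\setminus S\subseteq[n]$ yet the $M^\natural$-exchange at $u$ only supplies the drop/add alternative --- but you do not actually resolve it. You list two plausible strategies (a local/finite check, or an exchange-graph argument via Lemma~\ref{lemma:unique_matching}) without executing either. That is a real gap: this case is precisely where the full strength of gross substitutability (beyond submodularity) is needed, and until one of those strategies is carried through, the converse direction is not proved. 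If you pursue route (i), be aware that the ``local'' characterization of valuated matroids you would invoke is itself a nontrivial theorem (it is Murota's local exchange criterion), so you should either cite it explicitly or be prepared to prove it; route (ii) would require you to set up the exchange graph carefully for the pair $(A,C)$ and argue that a single-swap witness in $C\setminus A$ can always be extracted, which is again not immediate from Lemma~\ref{lemma:unique_matching} alone.
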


Now, we are ready to define the assignment problem for valuated matroids:

\begin{definition}[Valuated matroid assignment problem]
Given two sets $V_1, V_2$, matroids $\M_1 = (V_1, \B_1)$ and $\M_2 = (V_2,
\B_2)$ of the same rank, valuated matroids 
$\omega_1: \B_1 \rightarrow \R$ and $\omega_2: \B_2 \rightarrow \R$ 
and a weighted bipartite graph $G = (V_1 \cup V_2, E, w)$, find a
matching $M$ from $V_1$ to $V_2$ maximizing:
$$w(M) + \omega_1(M_1) + \omega_2(M_2)$$
where $M$ is a subset of edges of $E$ forming a matching and $M_1$ and $M_2$ are
the sets of endpoints of $M$ in $V_1$ and $V_2$ respectively.
\end{definition}

Murota gives two strongly-polynomial time algorithms based on network flows for the
problem above in \cite{Murota96b} -- the first based on cycle-cancellations and
the second based on flow-augmentations. Although the running time is not
formally analyzed in his paper, it is possible to see that his algorithm (more
specifically the algorithm \emph{Augmenting algorithm with potentials} in
Section 3 of \cite{Murota96b}) has running time $\tilde{O}(\rank(\M_i)\cdot
(\abs{E} + \rank(\M_i)\cdot( \abs{V_1} + \abs{V_2})))$.

First, we show a simple reduction from the welfare problem for gross substitutes
to this problem. Given $m$ gross substitute valuation functions $v_i:2^{[n]}
\rightarrow \R$, define the following instance of the valuated matroid
assignment problem: define the first matroid as ${[mn] \choose n}$ i.e. the
$n$-uniform matroid on $mn$ elements. Interpret the elements of $[nm]$ as ``the
allocation of item $j$ to agent $i$" for each pair $(i,j)$. Following this
interpretation, each $S \subseteq [nm]$ can be seen as $S = \cup_{i=1}^m S_i$
where $S_i$ are the elements assigned to agent $i$. This allows us to define for
each $S \in {[mn] \choose n}$, $\omega_1(S) = \sum_i v_i(S_i)$. For the second
matroid, let $\M_2 = ([n], 2^{[n]})$ and $\omega_2(S) = 0$ for all $S$. Finally,
define the edges of $E$ such that for each $j \in [n]$, the $j$-th element of
$[n]$ are connected to the element $(i,j)$ in $[mn]$ for each $i$.

One needs to prove that $\omega_1$ satisfies the properties defining a
valuated matroid, but this can be done using the transformations
described in \cite{gs_survey}. We omit this proof since we are giving a
self-contained description of the algorithm in the next section. 

In the construction shown below, $\abs{V_1} = \abs{E} = nm$, $\abs{V_2} = n$ and
$\rank(\M_1) = \rank(\M_2) = n$. This leads to an $\tilde{O}(m \cdot n^3)$
strongly polynomial time algorithm for the welfare problem.

\subsection{Gross substitutes welfare problem in $\tilde{O}(nm + n^3)$
time}\label{subsec:algorithm}

In this section we give a self-contained description of a specialized
version of Murota's algorithm for the gross substitutes welfare problem and
show that the running time of $\tilde{O}(m \cdot n^3)$ can be improved to
$\tilde{O}(nm + n^3)$. Murota's algorithm for the case of generic
valuated matroids can be quite complicated. Since the underlying matroids are
simple (uniform matroids) and the functions being optimized have
additional structure, it is possible to come up with a simpler algorithm. Our
presentation also makes the algorithms accessible to the reader not familiar
with discrete convex analysis and the theory of valuated matroids.

We consider the setting in which a set $[n]$ of items needs to be allocated to
$m$ agents with monotone valuation functions $v_i : 2^{[n]}\rightarrow \R$ 
satisfying the gross substitutes condition. Consider the intermediary
problem of computing the optimal allocation for the first $k$ items (in some
arbitrary order):
\begin{equation}\tag{$I_k$}\label{eq:intermediary_problem}
\max_i v_i(S_i) \quad \text{ s.t. } \quad \cup_i S_i \subseteq [k] := \{1,2, \hdots,
k\} \text{ and } S_i \cap S_j = \emptyset \text{ for } i \neq j\end{equation}
The central idea of the algorithm is to successively solve $I_1, I_2, \hdots,
I_n$ using the solution of $I_k$ to compute $I_{k+1}$. We will show how a
solution to $(I_{k+1})$ can be computed from a solution of $(I_k)$ via a
shortest path computation in a graph with $\tilde{O}(m + n^2)$ edges.

A solution for problem $(I_k)$ consists on an allocation $S = (S_1, \hdots, S_m)$ of
items in $[k]$ to agents $1, \hdots, m$ and a price vector $p_1, \hdots, p_k$
that certifies that the allocation is optimal. Since optimality for gross
substitutes can be certified by checking that no agent wants to add an item,
remove an item or swap an item (Defintion \ref{def:discrete_convex})
then $S, p$ need to satisfy the
following conditions for every agent $i$ and every $j \notin S_i$
and $j' \in S_i$:
\begin{equation}\tag{\textsc{Add}}\label{eq:add}
v_i(S_i \cup j) - p_j \leq v_i(S_i)
\end{equation}
\begin{equation}\tag{\textsc{Remove}}\label{eq:remove}
v_i(S_i \setminus j') + p_{j'} \leq v_i(S_i)
\end{equation}
\begin{equation}\tag{\textsc{Swap}}\label{eq:swap}
v_i(S_i \cup j \setminus j') - p_j + p_{j'} \leq v_i(S_i)
\end{equation}

\subsubsection{Exchange graph}

The first step to solve $(I_{k})$ is to build a combinatorial
object called the \emph{exchange graph} using the solution of $(I_{k-1})$,
expressed as a pair $S, p$. We define a weighted directed graph on $V = [k] \cup
\{\phi_1, \hdots, \phi_m\}$. Intuitively, we can think of $\phi_i$ as an ``empty
spot" in the allocation of agent $i$. We add edges as follows:
\begin{itemize}
\item $(t,j)$ for all items $t$ and $j$ not acquired by the same agent under
$S$. If $i$ is the agent holding item $t$, then the edge represents the
change in utility (under price $p$) for agent $i$ to swap his item $t$
by $j$: $$w_{t,j} = v_i(S_i) - v_i(S_i \cup j \setminus t) + p_j - p_t$$
\item $(\phi_i,j)$ for all items $j \notin S_i$. It represents the change in
utility for $i$ to add item $j$: $$w_{\phi_i,j} = v_i(S_i) - v_i(S_i \cup j) +
p_j$$
\end{itemize}
Notice that problem $(I_{k-1})$ only defines prices for $1, \hdots, k-1$. So for
the construction above to be well defined we need to define $p_k$. We will set
$p_k$ in a moment, but before that, note that for all the edges not involving
$k$, $w \geq 0$ which follow by the fact that $p$ is a certificate of optimality
for $S$ and therefore conditions \ref{eq:add} and \ref{eq:swap} hold. Finally,
notice that there are only directed edges from $k$ to other nodes, so $p_k$
always appear with positive sign in $w$. So, we can set $p_k$ large enough such
that all egdes have non-negative weights, in particular, set:
$$p_k = \max \left\{ \max_{i \in [m]; t \in S_i} [ v_i(S_i \cup k \setminus t) 
- v_i(S_i) - p_t], \text{ } \max_{i \in [m]} [v_i(S_i \cup k) - v_i(S_i)] \right\} $$

\subsubsection{Updating prices and allocations via shortest
path}\label{subsec:update}

After the exchange graph is built, the algorithm is trivial: compute the
minimal-length shortest path from some $\phi_i$  (i.e. among all paths of
minimum weight between $\phi_i$ and $k$, pick the one with minimum length).
Since the edges are non-negative, the shortest path can be computed in the order
of the number of edges using Dijkstra's algorithm in $O(\abs{E} \cdot \log
\abs{E})$ where $\abs{E}$ is the number of edges in the graph. Dijkstra's
algorithm can be easily modified to compute the minimum weight path with
shortest length using the following idea: if weights are integers, then
substitute weights $w_{ij}$ by $w_{ij} - \epsilon$. In the end, round the
solution up. Or more formally, run Dijkstra in the ordered ring $(\Z^2, +, <)$
with weights $(w_{ij},1)$ where $+$ is the componentwise sum and $<$ is the
lexicographic order.

Let $P$ be the path output by Dijkstra. Update the allocation by performing the
swaps prescribed by $P$. In other words, if edge $(t,j) \in P$ and $t \in S_i$,
then we swap $t$ by $j$ in $S_i$. Also, if edge $(\phi_i,j) \in P$ we add $j$ in
$S_i$. Formally, let $(t_r,j_r)_{r=1..a}$ be all the edges in $P$ with $t_r \in S_i$ or
$t_r = \phi_i$. Then we update $S_i$ to $S_i \cup \{j_1,\hdots,j_a\} -
\{t_1,\hdots, t_a\}$.

The execution of Dijkstra also produces a certificate of optimality of the
shortest path in the form of the minimum distance from some $\phi_i$ to any
given node. So, there is a distance function $d$ such that
$$d(\phi_i) = 0, \qquad d(j) \leq d(\phi_i) + w_{\phi_i,j}, \qquad d(j) \leq
w_{t,j} + d(t) $$
Moreover, for all edges $(t,j)$ and $(\phi_i, j)$ in the shortest path $P$, this
holds with equality, i.e.: $d(j) = d(\phi_i) + w_{\phi_i, j}$ and
$d(j) = w_{jt} + d(t)$. Update the price of each item $j$ from $p_j$ to $p_j -
d(j)$.

\subsubsection{Running time analysis}

Before we show that each iteration produces an optimal pair of allocation $S$
and prices $p$ for problem $(I_k)$ we analyze the running time.

The exchange graph for problem $(I_k)$ as previously described has $O(mk + k^2)$
edges. Running Dijkstra's algorithm on this graph has running time $\tilde{O}(mk + k^2)$
for $(I_k)$, which corresponds to $\tilde{O}(mn^2 + n^3)$ time overall.

In order to get the overall running time down to $\tilde{O}(mn + n^3)$ we need
one extra observation. Since we want to compute the shortest path from any of
the $\phi_i$ nodes to $k$, we can collapse all $\phi_i$ nodes to a single node
$\phi$. Now, for any given node $j$:
$$w_{\phi,j} = \min_i w_{\phi_i, j} = p_j + \min_i [ v_i(S_i) - v_i(S_i \cup
j)]$$
Now, the graph is reduced to $O(k^2)$ edges for problem $(I_k)$. So, Dijkstra
can be computed in $\tilde{O}(k^2)$. We are only left with the task to compute
$w_{\phi,j}$. Our task is to compute $\min_i [ v_i(S_i) - v_i(S_i \cup
j)]$. This can be divided in two parts:
\begin{enumerate}  
\item \emph{active agents}: the minimum among the agents $i$ for
  which $S_i \neq \emptyset$. There are at most $k$ of those, so we can iterate
  over all of them and compute the minimum explicitly;
\item \emph{inactive agents}: the minimum over all
  agents with $S_i = \emptyset$. In order to do so we maintain the following
  data structure: in the first iteration, i.e. in $(I_1)$, we compute
  $v_i(\{j\})$ for every $i,j$ (which takes $O(mn)$ time) and keep for each item
  $j$ a sorted list $\texttt{L}_j$ in decreasing order of $v_i(j)$ for all $i$.

  In the end of each iteration, whenever an innactive agent $i$ becomes active
  (i.e. we allocate him an item), we remove them from $\texttt{L}_j$ for all
  $j$. This operation takes $O(n)$ time to go over all lists.

  Now, once we have this structure, we can simply compute the minimum among the
  innactive buyers $\min_i [ v_i(S_i) - v_i(S_i \cup j)] = - \max_i v_i(\{j\})$
  we simply look the minimum element of the list $\texttt{L}_j$. Therefore, even
  though we need to pay $O(mn)$ time in $(I_1)$. In each subsequent iteration we
  pay only $O(n)$ to update lists $\texttt{L}_j$ and then we can make query the
  value of $w_{\phi,j}$ in constant time.
\end{enumerate}
This leads to a running time of $O(mn)$ in $(I_1)$ and $\tilde{O}(n + k^2)$ in
each subsequent iteration, leading to an overall running time of $\tilde{O}(mn +
n^3)$. We also note that for each edge we build of the graph, we query the
value oracle once. So the oracle complexity is $O(nm + n^3)$ value oracle calls.

\subsubsection{Correctness}

We are left to argue that the solution $(S,p)$ produced by the algorithm is indeed
a valid solution for problem $(I_k)$. This can be done by checking that the
price vector $p$ obtained certifies the optimality of $S$. The main ingredients
for the proof are Lemmas \ref{lemma:unique_matching} and
\ref{lemma:combinatorial_matching}. We encourage the reader the revisit the
statement of those lemmas before reading the proof of the following theorem.

\begin{theorem}
Let $S^k_i, p^k_j$ be the solution of problem $(I_k)$, then for all agent $i$
and all $S' \subseteq [k]$, $$v_i(S^k_i) - p^k(S^k_i) \geq v_i(S') - p^k(S').$$
\end{theorem}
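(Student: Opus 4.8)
The plan is to prove this by induction on $k$, carrying the stronger hypothesis that $(S^k,p^k)$ is a \emph{local} optimum, i.e.\ it satisfies \ref{eq:add}, \ref{eq:remove} and \ref{eq:swap} for every agent over the items $[k]$. The stated global inequality then follows immediately from Discrete Concavity (Definition \ref{def:discrete_convex}), since restricting a gross substitute valuation to $2^{[k]}$ (equivalently, pricing the other items at $+\infty$) keeps it gross substitute, so local optimality implies global optimality. The base case $(I_1)$ is direct: the exchange graph built from the empty allocation has only edges $(\phi_i,1)$ of weight $p_1-v_i(\{1\})$, so choosing $p_1=\max_i v_i(\{1\})$ makes them nonnegative, the minimum-length shortest path is a single edge $(\phi_{i^*},1)$ with $i^*\in\argmax_i v_i(\{1\})$, and the resulting pair (item $1$ to $i^*$ at price $p_1$) visibly certifies optimality.

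For the inductive step, suppose $(S^{k-1},p^{k-1})$ solves $(I_{k-1})$. I would first observe that in the exchange graph built from $(S^{k-1},p^{k-1})$ every edge not touching $k$ has nonnegative weight, because $w_{t,j}$ is precisely the slack of the \ref{eq:swap} constraint of the agent owning $t$ and $w_{\phi_i,j}$ the slack of its \ref{eq:add} constraint, both nonnegative by the inductive hypothesis; the prescribed choice of $p_k$ makes the remaining edges — all directed into $k$ and carrying $p_k$ with positive sign — nonnegative as well. Hence the graph has no negative edge, and running Dijkstra in the lexicographically ordered ring $(\Z^2,+,<)$ legitimately returns the minimum-length shortest path $P$ from the collapsed source $\phi$ to $k$ together with a potential $d$ satisfying $d(\phi_i)=0$, $d(j)\le d(t)+w_{t,j}$, $d(j)\le w_{\phi_i,j}$, with equality along $P$.

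The heart of the argument is the update step. Performing the swaps along $P$ moves each intermediate item from its current owner to the preceding agent on $P$ and finally assigns $k$, so exactly one agent's bundle grows by one element while all others keep their size: $S^k$ is again a valid partition of a subset of $[k]$, with $S^k_i=S^{k-1}_i\cup B_i\setminus A_i$ where $A_i\subseteq S^{k-1}_i$ and $B_i$ are read off from the $P$-edges incident to agent $i$. Since $P$ is a simple path, those edges are pairwise non-consecutive, so by Lemma \ref{lemma:combinatorial_matching} they form the \emph{unique} minimum-weight perfect matching between $A_i\cup\{\phi_i\}$ and $B_i$ in the associated bipartite graph; the Unique Minimum Matching Lemma (Lemma \ref{lemma:unique_matching}) then expresses $v_i(S^{k-1}_i)-v_i(S^k_i)$ as the telescoping sum of the single-swap contributions along these edges. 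Combining this decomposition with the tight potential inequalities along $P$, one checks that for the new prices $p^k_j:=p^{k-1}_j-d(j)$ every \ref{eq:add}, \ref{eq:remove}, \ref{eq:swap} slack of $(S^k,p^k)$ equals a sum of reduced costs $w_{t,j}+d(t)-d(j)\ge 0$ of exchange-graph edges (a deviation producing a negative slack would close a $\phi$-to-$k$ walk strictly shorter than $P$, contradicting its minimality), while deviations disjoint from $P$ reduce verbatim to the inductive hypothesis shifted by $d$. This establishes local optimality of $(S^k,p^k)$ and closes the induction.

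I expect the multi-swap bookkeeping in the update step to be the main obstacle: one must match each add-, remove-, and swap-deviation of the \emph{new} allocation to a concrete combination of exchange-graph edges of nonnegative total reduced cost, treat the empty-slot vertices $\phi_i$ as swaps against virtual elements, and handle paths that visit the same agent several times — which is precisely the situation Lemma \ref{lemma:combinatorial_matching} (non-consecutive edges of a minimum-length path form a unique minimum matching) is tailored for. The remaining ingredients — nonnegativity of edge weights, existence of the path, validity of the updated partition, and the local-to-global step via discrete concavity — are routine.
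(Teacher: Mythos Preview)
Your proposal assembles the right ingredients --- induction on $k$, nonnegativity of the reduced costs $\tilde w_{t,j}=w_{t,j}+d(t)-d(j)$, and Lemmas \ref{lemma:unique_matching} and \ref{lemma:combinatorial_matching} --- but the update step as you describe it has a gap. You claim that each \ref{eq:add}/\ref{eq:remove}/\ref{eq:swap} slack of the \emph{new} pair $(S^k,p^k)$ ``equals a sum of reduced costs of exchange-graph edges.'' That is not literally true: the exchange graph is built from $S^{k-1}$, so its edge weights encode local deviations from $S^{k-1}_i$, not from $S^k_i$. For an agent $i$ on the path, a swap slack $v_i(S^k_i)-v_i(S^k_i\cup j\setminus j')+p^k_j-p^k_{j'}$ is not an edge of that graph, and expressing it in terms of such edges is exactly the multi-swap bookkeeping you flag as the main obstacle --- and then leave undone. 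The sentence ``a deviation producing a negative slack would close a $\phi$-to-$k$ walk strictly shorter than $P$'' is the right intuition for deviations from $S^{k-1}$, but says nothing about deviations from $S^k$.

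The paper sidesteps this with a cleaner two-step argument. First, it shows that the \emph{old} allocation $S^{k-1}$ remains globally optimal under the \emph{new} prices $p^k$: the inequalities $\tilde w_{t,j}\ge0$ and $\tilde w_{\phi_i,j}\ge0$ are precisely \ref{eq:swap} and \ref{eq:add} for $S^{k-1}_i$ at $p^k$, and \ref{eq:remove} persists because prices only decreased; discrete concavity then gives $v_i(S^{k-1}_i)-p^k(S^{k-1}_i)\ge v_i(S')-p^k(S')$ for all $S'\subseteq[k]$. Second, it uses Lemmas \ref{lemma:unique_matching} and \ref{lemma:combinatorial_matching} once, exactly as you do, to show $v_i(S^k_i)-p^k(S^k_i)=v_i(S^{k-1}_i)-p^k(S^{k-1}_i)$, since the path edges have reduced cost zero. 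Chaining these two facts gives the theorem directly, without ever verifying local conditions at $S^k$. Your route is in principle completable, but it forces you to check local optimality at a moving target; proving that the old allocation stays optimal and that the new one ties it is both shorter and eliminates precisely the bookkeeping you identified as hard.
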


\begin{proof}
Let $S^{k-1}_i, p^{k-1}_j$ be the solution of problem $(I_{k-1})$ and let
$S^k_i, p^k_j$ be the solution of problem $(I_k)$. If $d(\cdot)$ is the distance
function returned by Dijkstra in $(I_k)$, then $p^k_j = p^{k-1}_j - d(j)$ and we
also know that $\tilde{w}_{jt} = w_{jt} + d(t) - d(j) \geq 0$ and
$\tilde{w}_{\phi_i,j} =w_{\phi_i,j} - d(j) \geq 0$ by the observation in the end
of the Section \ref{subsec:update}.  This implies that for all $i$ and $j \notin
S^{k-1}_i$, it holds that:
$$\tilde{w}_{t,j} = v_i(S_i^{k-1}) - v_i(S_i^{k-1} \cup j \setminus t) 
+ p^k_j - p^k_t \geq 0 \quad \text{ and } \quad \tilde{w}_{\phi_i,j} = 
v_i(S^{k-1}_i) - v_i(S^{k-1}_i \cup j) + p^k_j \geq 0$$
This means that properties \ref{eq:add} and \ref{eq:swap} are satisfied. To see
that \ref{eq:remove} is also satisfied for $j < k$ since $v_i(S^{k-1}_i) \geq
v_i(S^{k-1}_i \setminus j) + p^{k-1}_j$ for all $j \in S^{k-1}_i$. Since $p^k_j
\leq p^{k-1}_j$, this condition must continue to hold.

This means that under the price vector $p^k$ the bundles $S_i^{k-1}$ is still
the demanded bundle by agent $i$ (by Definition \ref{def:discrete_convex}).
This means in particular that for all $S' \subseteq [k]$:
$$v_i(S_i^{k-1}) - p^k(S_i^{k-1}) \geq v_i(S') - p^k(S')$$
Now, let $(t_r, j_r)_{r=1..a}$ be the set of edges in the path $P$ outputted by
Dijkstra such $t_r \in S_i^{k-1}$. Then $S^{k}_i = S^{k-1}_t \cup \{j_1, \hdots,
j_a\} \setminus \{ t_1, \hdots, t_a\}$.

Using Lemma \ref{lemma:combinatorial_matching}, we note that $(t_1,r_1), \hdots,
(t_a, r_a)$ is an unique minimum matching in the sense of Lemma
\ref{lemma:unique_matching}. Therefore:
$$v_i(S^k_i) - v_i(S^{k+1}_i) = \sum_{r=1}^a v_i(S^k_i) - v_i(S^k_i \cup j_r
\setminus t_r)$$
Summing $-p^k(S^k_i) + p^k(S^{k+1}_i)$ on both sides, we get:
$$[ v_i(S^k_i) - p^k(S^k_i) ] - [ v_i(S^{k+1}_i) - p^k(S^k_i) ] = \sum_{r=1}^a
\tilde{w}_{t_r,j_r} = 0$$
Therefore:
$$ v_i(S^k_i) - p^k(S^k_i) = v_i(S^{k+1}_i) - p^k(S^k_i) \geq  v_i(S') -
p^k(S')$$ 
as desired.
\end{proof}

\subsubsection{Descending Auction View}

One can reinterpret the procedure above as a descending auction. Initially
all items very large price (say like the price set for $p_k$ in the beginning of
phase $k$). Each shortest path computation produces a distance function $d$ that
dictates how each price should decrease. Indeed, they monotonically decrease
until we reach a Walrasian equilibrium.

We note that it is important in this algorithm that we compute in each step both
a primal and a dual solution. Without the dual solution (the price vector), it
is still possible to carry out the shortest path computation, but since the
edges in the path can have mixed signs, Dijkstra's algorithm is no longer
available and one needs to pay an extra factor of $n$ to run Bellman-Ford's
algorithm.

\bibliographystyle{alpha}
\footnotesize
\bibliography{sigproc}
\normalsize
\appendix

\section{Missing Proofs}\label{appendix:missing_proofs}

\begin{proofof}{Lemma \ref{lemma:welfare_thm}}
Let $\vc{y}=(y^{(1)},y^{(2)},\ldots,y^{(m)})$ be a valid allocation that
achieves the optimal social welfare. Then since $x^{(i)}\in D(i,\vc{p})$,
we have
$$v_{i}(x^{(i)})-\vc{p} \cdot x^{(i)}\geq v_{i}(y^{(i)})-\vc{p}\cdot y^{(i)}.$$
Summing up, we get
$$\sum_{i}\left(v_{i}(x^{(i)})-\vc{p}\cdot x^{(i)}\right)
\geq
\sum_{i}\left(v_{i}(y^{(i)})-\vc{p}\cdot y^{(i)}\right).$$
the crucial observation is that $\sum_{i}\vc{p}\cdot x^{(i)}=\sum_{i}\vc{p}\cdot y^{(i)}=\sum_{j}p_{j}s_{j}$.
herefore the inequality above simplifies
to
$$
\sum_{i}v_{i}(x^{(i)})\geq\sum_{i}v_{i}(y^{(i)}),
$$
i.e. the social welfare of $\vc{x}$ is at least that of $\vc{y}$. But since
$\vc{y}$ gives the optimal social welfare, we must then have equality
and $\vc{x}$ also achieves the optimum.

For the second part, notice that since the last equation holds with equality,
then the previous equations should also hold, therefore: 
$ \sum_{i}v_{i}(x^{(i)})-\vc{p}\cdot x^{(i)} =
\sum_{i}v_{i}(y^{(i)})-\vc{p}\cdot y^{(i)}$, which says that $x_i$
is also a favorite bundle of $i$ under price vector $\vc{p}$. Therefore,
$(\vc{x}, \vc{p})$ form a Walrasian equilibrium.
\end{proofof}

\begin{proofof}{Lemma \ref{lem:eqismin}}
  If $(\vc{p}^{\eql},\vc{x})$ is a Walrasian equilibrium it is straightforward
  to check that setting $\vc{p} = \vc{p}^{\eql}$, $u_i = \max_{x_i \in
  \dbracket{s}}   v_i(x) - \vc{p}^{\eql}\cdot x$ and $z_{i,x} = 1$ when $x =
  x^{(i)}$ and zero otherwise, we have a primal dual pair of feasible solutions
  with the same value.
  Conversely, if the primal program has an integral solution, the definition of
  Walrasian equilibrium can be obtained from the complementarity conditions.

  If the primal program has an optimal integral solution $\vc{x}$, then
  for every solution $(\vc{p}, \vc{u})$ to the dual program:
  $\sum_i v_i(x^{(i)}) = \sum_i u_i + \vc{p} \cdot \vc{s} \geq
  \sum_i v_i(x^{(i)}) + \vc{p} \cdot x_i \geq \sum_i v_i(x^{(i)})$ and
  therefore all inequalities must hold with equality, so in particular $x^{i}
  \in D(v_i, \vc{p})$, so $\vc{p}$ is a vector of Walrasian prices. Conversely,
  if $\vc{p}$ is a vector of Walrasian prices then $(\vc{x}, \vc{p})$ is
  Walrasian equilibrium by the Second Welfare Theorem (Lemma
  \ref{lemma:welfare_thm}). Therefore
  by setting $u_i = v_i(x^{(i)}) - \vc{p} \cdot x^{(i)}$ we obtain a dual feasible
  solution such that $\sum_i u_i + \vc{p} \cdot \vc{s} = \sum_i v_i(x^{(i)})$
  and therefore $(\vc{p}, \vc{u})$ is an optimal dual solution.
\end{proofof}

\begin{proofof}{Corollary \ref{cor:reg_minimizer}}
Let $\vc{p}^{\eql}$ and $\vc{S}=(S_1,S_2,\ldots,S_m)$ be an equilibrium
price and allocation.  Consider the following chain of inequalities:
$$\sum_i v_i(S_i) \leq \tilde{f}(\vc{p}^*) \leq
  \tilde{f}(\vc{p}^{\eql}) \leq f(\vc{p}^{\eql}) = \sum_i v_i(S_i)$$
Where the first inequality follows from the definition of $\tilde{f}$, the
second from the fact that $\vc{p}^*$ is a minimizer of $\tilde{f}$, the third
follows from the fact that $\tilde{f} \leq f$ for all prices $\vc{p}$, since
$f$ is a maximization over all $S_i \subseteq [n]$ and $\tilde{f}$ is a
maximization over all subsets whose cardinality is exactly $[n]$. The last
inequality follows from the fact that $\vc{p}^\eql$ is an equilibrium. This
implies that all inequalities should hold with equality, in particular, since
$\tilde{f}(\vc{p}^*) = \sum_i v_i(S_i)$, then it must be that:
$$\max_{S \subseteq [n]} v_i(S) - \vc{p}^*(S) - \gamma \cdot \abs{S} = v_i(S_i)
- \vc{p}^*(S_i) - \gamma \cdot \abs{S_i}$$
In particular, $S_i \in D(i, \vc{p}^* + \gamma \cdot \vc{1}_{[n]})$.

The other direction is similar. We have $\tilde{f}(\vc{p}^{\eql})=\sum_i v_i(S_i)= f(\vc{p}^{\eql})$ and for any price $p$,
$$\tilde{f}(\vc{p})=f(\vc{p}+\gamma\cdot\vc{1}_{[n]})\geq f(\vc{p}^{\eql})=\tilde{f}(\vc{p}^{\eql})$$
which shows that any Walrasian price $\vc{p}^{\eql}$ minimizes $\tilde{f}$.
\end{proofof}

\begin{proofof}{Lemma \ref{lemma:combinatorial_matching} (sketch)}
Assume that the bipartite graph has a different matching with
total weight not larger than the one presented. Then it is possible to
construct either a cycle of weight less than $C$ or a cycle of the same weight
with smaller number of edges.

Let $M'$ be an alternative matching between $U$ and $V$ of weight at most
the weight of $M$. If $M'$ has smaller
weight, replace $M$ by $M'$ and $C \cup M' \setminus M$ is a collection of
cycles with total weight smaller than the weight of $C$. Since all cycles have
non-negative weight, one of the cycles must have weight less than $C$,
contradicting the fact that $C$ is a minimum weight cycle.

Now, if $M$ and $M'$ have the same weight, consider the following family of
cycles: for each edge $e = (u', v') \in M'$, construct a cycle $C_e$ composed
of edge $e$ and the path from $v'$ to $u'$ in $C$ (in other words, we use $e$ to
shortcut $C$). There is an integer $k \leq t$ such that the collection
of cycles $C_e$ uses in total: one of each edge from $M'$, $k-1$ of each
edge from $M$ and $k$ of each edge from $C \setminus M$. So the average weight
is at most the weight of $C$. Since the $C_e$ cycles have strictly less edges than
$C$, there should be a cycle with fewer edges than $C$ and weight at most $C$,
which again contradicts the choice of $C$.

The argument for paths is analogous.
\end{proofof}

\end{document}